\date{}
\def\Ad{{\hbox{\rm Ad}}}
\def\a{\alpha}
\def\b{\beta}
\def\e{\varepsilon}
\def\l{\lambda}
\def\th{\theta}
\def\setminus{\smallsetminus}
\def\A{{\cal A}}
\def\B{{\cal B}}
\def\E{{\cal E}}
\def\M{{\cal M}}
\def\N{{\cal N}}
\def\L{{\cal L}}
\def\I{{\cal I}}
\def\H{{\cal H}}
\def\K{{\cal K}}
\def\S{{\cal S}}
\def\f{{\varphi}}
\def\Si{\mathbb S_{\infty}}
\def\S2{S^{1(2)}}
\newtheorem{theorem}{Theorem}[section]
\newtheorem{lemma}[theorem]{Lemma}
\newtheorem{corollary}[theorem]{Corollary}
\newtheorem{proposition}[theorem]{Proposition}
\theoremstyle{definition} 
\theoremstyle{remark}
\def\setminus{\smallsetminus}
\def\Aa{\A^{(0)}}
\def\Hp{\mathbb H^{\infty}(\mathbb S_{\pi})}
\def\Ha{\mathbb H^{\infty}(\mathbb S_{a})}
\def\RR{{\mathbb R}}
\def\sl2{{{\rm SL}(2,\RR)}}
\def\psl2{{{\rm PSL}(2,\RR)}}
\def\u1{{{\rm V}(1)}}
\def\su2{{{\rm SV}(2)}}
\def\so3{{{\rm SO}(3)}}
\def\A{{\mathcal A}}
\def\B{{\mathcal B}}
\def\F{{\mathcal F}}
\def\H{{\mathcal H}}
\def\I{{\mathcal I}}
\def\K{{\mathcal K}}
\def\M{{\mathcal M}}
\def\N{{\mathcal N}}
\def\O{{\mathcal O}}
\title{ An Algebraic Construction\\ of Boundary Quantum Field Theory}
\author{{\sc Roberto Longo\footnote{Supported in part by the ERC Advanced Grant 227458  
OACFT ``Operator Algebras and Conformal Field Theory", PRIN-MIUR, GNAMPA-INDAM and EU network ``Noncommutative Geometry" MRTN-CT-2006-0031962.}
}
\\
Dipartimento di Matematica,
Universit\`a di Roma ``Tor Vergata'',\\
Via della Ricerca Scientifica, 1, I-00133 Roma, Italy\\
E-mail: {\tt longo@mat.uniroma2.it}
\\
\phantom{X}\\
{\sc Edward Witten}
\\
Institute for Advanced Study,
School of Natural Sciences,\\
Einstein Drive, Princeton, NJ 08540\\
E-mail: {\tt witten@ias.edu}}
\begin{document} 

\maketitle

\begin{abstract}
We build up local, time translation covariant Boundary Quantum Field Theory nets of von Neumann algebras $\A_V$ on the Minkowski half-plane $M_+$ starting with a local conformal net $\A$ of von Neumann algebras on $\mathbb R$ and an element $V$ of a unitary semigroup $\E(\A)$ associated with $\A$. The case $V=1$ reduces to the net $\A_+$ considered by Rehren and one of the authors; if the vacuum character of $\A$ is summable $\A_V$ is locally isomorphic to $\A_+$. We discuss the structure of the semigroup $\E(\A)$. By using a one-particle version of Borchers theorem and standard subspace analysis, we provide an abstract analog of the Beurling-Lax theorem that allows us to describe, in particular, all unitaries on the one-particle Hilbert space whose second quantization promotion belongs to $\E(\A^{(0)})$ with $\A^{(0)}$ the $U(1)$-current net. Each such unitary is attached to a scattering function or, more generally, to a symmetric inner function. We then obtain families of models via any Buchholz-Mach-Todorov extension of $\A^{(0)}$. A further family of models comes from the Ising model. 
\end{abstract}

\newpage

\section{Introduction} 
As is known Conformal Quantum Field Theory is playing a crucial role in several research areas, both in Physics and in Mathematics. Boundary Quantum Field Theory, related to Conformal Field Theory, is also receiving increasing attention.

In recent years, the Operator Algebraic approach to Conformal Field Theory has provided a simple, model independent description of Boundary Conformal Field Theory on the Minkowski half-plane $M_+ =\{\langle t, x\rangle : x>0\}$ \cite{LR1,LR2}.

The purpose of this paper is to give a general Operator Algebraic method to build up new Boundary Quantum Field Theory models on $M_+$. We shall obtain local, Boundary QFT nets of von Neumann algebras on $M_+$ that are not conformally covariant but only time translation covariant.

Motivation for such a construction comes from the papers \cite{W92,W93, LW} where one needs a description of the space of all possible Boundary QFT's in two dimensions compatible with a given theory in bulk. There however a general framework for Boundary QFT was missing and sample computations were given for certain second quantization unitaries $V$ as in the following.

Let us then explain our basic set up. Let $\A$ be a local M\"obius covariant net of von Neumann algebras on the real line; so we have have a von Neumann algebra $\A(I)$ on a fixed Hilbert space $\H$ associated with every interval $I$ of $\mathbb R$ satisfying natural properties: isotony, locality, M\"obius covariance with positive energy and vacuum vector (see Appendix \ref{CN}). We identify the real line with the time-axis of the 2-dimensional Minkowski spacetime $M$. Suppose that $V$ is a unitary on $\H$ commuting with the time translation unitary group, such that $V\A(I_+)V^*$ commutes with $\A(I_-)$ whenever $I_- , I_+$ are intervals of $\RR$ and $I_+$ is contained in the future of $I_-$. Then we can define a local, time translation covariant net $\A_V$ of von Neumann algebras on the half-plane $M_+$ by setting\footnote{If $\M_1$, $\M_2$ are von Neumannn algebras on the same Hilbert space, $\M_1\vee\M_2$ denotes the von Neumann algebra generated by $\M_1$ and $\M_2$. 
}
\[
\A_V(\O)\equiv \A(I_-)\vee V\A(I_+)V^* \ .
\]
Here $\O = I_-\times I_+$ is the double cone (rectangle) of $M_+$ given by $\O\equiv \{\langle t,x\rangle : x\pm t \in I_\pm\}$. The unitaries $V$ as above (that we renormalize for $V$ to be vacuum preserving) form a semigroup that we denote by $\E(\A)$. The case $V=1$ has been studied in \cite{LR1} and gives a M\"obius covariant net $\A_+$ on $M_+$.

So a local, M\"obius covariant net $\A$ and an element $V$ of the semigroup $\E(\A)$ give rise to a Boundary QFT net $\A_V$ on the half-plane. Furthermore, if the split property holds for the local M\"obius covariant net $\A$ (in particular if the vacuum character is summable) the net $\A_V$ is locally isomorphic to the net $\A_+$ on $M_+$. 

Our first problem in this paper is to analyze the structure of $\E(\A)$. We begin by considering the case $\A$ is the net $\A^{(0)}$ generated by the $U(1)$-current and second quantization unitaries, i.e. the unitary on the Fock space one obtains by promoting unitaries on the one-particle Hilbert space. It turns out we are to consider the semigroup $\E(H, T)$ of unitaries $V$ on the one-particle Hilbert space, commuting with the translation one-parameter unitary group $T$, such that $VH\subset H$ where $H$ is the standard real Hilbert subspace associated with the positive half-line (see \cite{LN}).  By using a one-particle version of Borchers theorem and the standard subspace analysis in \cite{LN} we obtain a complete characterization of these unitaries: $V\in\E(H,T)$ if and only if $V=\f(Q)$ with $\f$ the boundary value of a symmetric inner function on the strip $\mathbb S_\pi\equiv \{z: 0 < \Im z < \pi\}$ and $Q$ is the logarithm of the one-particle energy operator $P$, the generator of $T$.

For instance, $\bar T(t) \equiv e^{-it(1/P)}$ gives a one-parameter unitary semigroup in $\E(H,T)$, the only one with negative generator. 

The inner function structure is well known in Complex Analysis and we collect in Appendix \ref{inner} the basic facts needed in this paper. The above result also characterizes the closed real real subspaces $K\subset H$ such that $T(t)K\subset K$, $t\geq 0$, and so is an abstract analog of (a real version of) the Beurling-Lax theorem \cite{B,L} characterizing the Hilbert subspaces of $H^{\infty}(\mathbb S_{\infty})$ mapped into themselves by positive translations in Fourier transform, with $\mathbb S_{\infty}$ the upper complex plane.

Now symmetric inner functions $S_2$ on the strip $\mathbb S_\pi$ with the further symmetry $S_2(-q) = S_2(q)$ are called \emph{scattering functions} and appear in low dimensional Quantum Field Theory (see \cite{Z}); in particular every scattering function will give here a local Boundary QFT net on $M_+$.
One may wonder whether our construction is related to Lechner  models on the 2-dimensional Minkowski spacetime associated with a scattering function \cite{Lechner}, yet at the moment there is no link between the two constructions. 

Our work continues with the construction of local Boundary QFT models associated with other local conformal nets $\A$ on $\RR$. We consider any Buchholz-Mach-Todorov local extension $\A$ of $\A^{(0)}$ (coset models $SO(4N)_1/SO(2N)_2$) \cite{BMT}: every unitary  $V\in\E(\A^{(0)})$, obtained by second quantization of a unitary $V_0=\f(Q)\in\E(H)$ as above, extends to a unitary $\tilde V\in\E(\A)$, provided $\f$ is non-singular in zero. We so obtain other infinite families of local, translation covariant Boundary QFT nets of von Neumann algebras on $M_+$.

A further family of local, translation covariant Boundary QFT nets of von Neumann algebras comes from the Ising model. Also in this case every such model is associated with a symmetric inner function.

\section{Endomorphisms of standard subspaces}
We first recall some basic properties of standard subspaces, we refer to \cite{LN} for more details.

Let $\H$ be a complex Hilbert space and $H\subset\H$ a real linear subspace. The symplectic complement $H'$ of $H$ is the real Hilbert subspace
$H' \equiv \{\xi\in\H : \Im(\xi,\eta)=0\ \ \forall \eta\in H\}$ so $H''$ is the closure of $H$.

A closed real linear subspace $H$ is called cyclic if
$H + iH$ is dense in $\H$ and separating if $H\cap iH=\{0\}$. $H$ is cyclic if and only if $H'$ is separating.  

A {\em standard subspace} $H$ of $\H$ is a closed, real linear subspace of $\H$ which is both cyclic and separating. Thus a closed linear subspace $H$ is standard iff $H'$ is standard.

Let $H$ be a standard subspace of $\H$. Define the anti-linear operator $S \equiv S_H : D(S)\subset\H\to\H$, where $D(S)\equiv H+iH$,
\[
S: \xi  + i\eta\mapsto \xi - i\eta\ ,\quad \xi,\eta\in H\ .
\]
As $H$ is standard, $S$ is well-defined and densely defined, and clearly $S^2 = 1 |_ {D(S)}$. $S$ is a closed operator and indeed its adjoint is given by $S^*_H = S_{H'}$. 
Let
\[
S = J\Delta^{1/2}
\]
be the polar decomposition of $S$.  Then $J$ is an anti-unitary involution, namely $J$ is anti-linear with $J = J^* = J^{-1}$,
and $\Delta \equiv S^*S$ is a positive, non-singular selfadjoint linear operator with
$J\Delta J = \Delta^{-1}$.

The content of following relations is the real Hilbert subspace (much easier) analog of the fundamental Tomita-Takesaki theorem for von Neumann algebras:
\[
\Delta^{it}H = H\ ,\qquad JH =H'\ ,
\]
for all $t\in\RR$.

With $a\in (0,\infty]$ we  denote by $\mathbb S_{a}$ the strip of the complex plane $\{z\in\mathbb C: 0<\Im z < a \}$ (so $\mathbb S_{\infty}$ is the upper plane).
\begin{lemma}\label{incl}
Let $H$ be a standard 
subspace of the Hilbert space $\H$ and $V\in B(\H)$ a bounded linear operator on $\H$. 
The following are equivalent:
\begin{itemize}
\item 
$VH\subset H$ 
\item
$ JVJ\Delta^{1/2}\subset\Delta^{1/2}V$
\item
The map $s\in\RR \to V(s)\equiv\Delta^{-is}V\Delta^{is}$ extends to a bounded weakly continuous function on the closed strip $\overline{\mathbb S_{1/2}}$, analytic in $\mathbb S_{1/2}$, such that
$V(i/2) = JVJ$.
\end{itemize}
\end{lemma}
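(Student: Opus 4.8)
The plan is to prove $(1)\Leftrightarrow(2)$, which is pure linear algebra of the operator $S_H$, and then $(2)\Leftrightarrow(3)$, which is the analytic core of the statement and runs via a Phragm\'en--Lindel\"of estimate in one direction and a Schwarz reflection argument in the other.

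\emph{$(1)\Leftrightarrow(2)$.} I would use two elementary facts: $D(S_H)=D(\Delta^{1/2})=H+iH$, and $H$ is exactly the fixed-point set of $S_H$, i.e.\ $\xi\in H$ iff $\xi\in D(\Delta^{1/2})$ and $J\Delta^{1/2}\xi=\xi$. First, $VH\subset H$ is equivalent to the operator inclusion $S_HV\supset VS_H$: if $VH\subset H$ then for $\xi=h_1+ih_2$ with $h_j\in H$ one has $V\xi=Vh_1+iVh_2\in H+iH=D(S_H)$ and $S_HV\xi=Vh_1-iVh_2=VS_H\xi$; conversely $S_HV\supset VS_H$ forces $V\xi\in D(S_H)$ and $S_HV\xi=VS_H\xi=V\xi$ for $\xi\in H$, so $V\xi\in H$. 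Since $S_H=J\Delta^{1/2}$ and $J^2=1$, the inclusion $S_HV\supset VS_H$ is literally $JVJ\Delta^{1/2}\subset\Delta^{1/2}V$, i.e.\ $(2)$. I note that $(2)$ is equivalent to saying that $\Delta^{1/2}V\Delta^{-1/2}$ is bounded and agrees with $JVJ$ on the dense domain $D(\Delta^{-1/2})$.

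\emph{$(2)\Rightarrow(3)$.} Let $E$ be the spectral measure of $\Delta$ and $\mathcal D_0\equiv\bigcup_n E([1/n,n])\H$, a dense set of $\Delta$-entire vectors. For $\xi,\eta\in\mathcal D_0$ put
\[
F_{\xi,\eta}(\zeta)\equiv\big(\Delta^{i\bar\zeta}\eta,\ V\Delta^{i\zeta}\xi\big)\ ,
\]
an entire function of $\zeta$ which equals $(\eta,V(s)\xi)$ for real $\zeta=s$ and which, on $\overline{\mathbb S_{1/2}}$ with $\xi,\eta\in E([1/n,n])\H$, is bounded by $n\|V\|\,\|\xi\|\,\|\eta\|$ (since $\|\Delta^{i\zeta}\xi\|\le n^{1/2}\|\xi\|$ and $\|\Delta^{i\bar\zeta}\eta\|\le n^{1/2}\|\eta\|$ there). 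On the line $\Im\zeta=0$ one has $|F_{\xi,\eta}(s)|\le\|V\|\,\|\xi\|\,\|\eta\|$; on $\Im\zeta=1/2$, using the reformulation of $(2)$, $F_{\xi,\eta}(s+i/2)=(\Delta^{is}\eta,\,JVJ\,\Delta^{is}\xi)$, again bounded by $\|V\|\,\|\xi\|\,\|\eta\|$. Hadamard's three-lines theorem then gives $|F_{\xi,\eta}(\zeta)|\le\|V\|\,\|\xi\|\,\|\eta\|$ on $\overline{\mathbb S_{1/2}}$. Hence, for each $\zeta\in\overline{\mathbb S_{1/2}}$, the sesquilinear form $(\xi,\eta)\mapsto F_{\xi,\eta}(\zeta)$ on $\mathcal D_0$ extends to a bounded operator $V(\zeta)$ with $\|V(\zeta)\|\le\|V\|$; the uniform bound lets one pass from $\mathcal D_0$ to arbitrary vectors, so $\zeta\mapsto(\eta,V(\zeta)\xi)$ is continuous on $\overline{\mathbb S_{1/2}}$ and analytic inside for all $\xi,\eta\in\H$. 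Finally $V(i/2)=JVJ$, since $F_{\xi,\eta}(i/2)=(\Delta^{1/2}\eta,V\Delta^{-1/2}\xi)=(\eta,JVJ\xi)$ on $\mathcal D_0$.

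\emph{$(3)\Rightarrow(2)$, and conclusion.} For $\xi,\eta\in\mathcal D_0$ I would compare $H(\zeta)\equiv(\eta,V(\zeta)\xi)$, analytic in $\mathbb S_{1/2}$ and continuous on $\overline{\mathbb S_{1/2}}$ by $(3)$, with the entire function $K(\zeta)\equiv(\Delta^{i\bar\zeta}\eta,V\Delta^{i\zeta}\xi)$; they agree on $\mathbb R$, so by the Schwarz reflection principle across $\mathbb R$ and the identity theorem $H\equiv K$ on $\overline{\mathbb S_{1/2}}$. Evaluation at $\zeta=i/2$ yields $(\eta,JVJ\xi)=(\Delta^{1/2}\eta,V\Delta^{-1/2}\xi)$ for all $\xi,\eta\in\mathcal D_0$. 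Since $\mathcal D_0$ is a core for $\Delta^{1/2}$, for each $\xi\in\mathcal D_0$ the functional $\eta\mapsto(\Delta^{1/2}\eta,V\Delta^{-1/2}\xi)$ is bounded, so $V\Delta^{-1/2}\xi\in D(\Delta^{1/2})$ and $\Delta^{1/2}V\Delta^{-1/2}\xi=JVJ\xi$. For general $\zeta\in D(\Delta^{1/2})$, take $\xi_n\equiv E([1/n,n])\Delta^{1/2}\zeta\in\mathcal D_0$, so that $V\Delta^{-1/2}\xi_n=VE([1/n,n])\zeta\to V\zeta$ while $\Delta^{1/2}VE([1/n,n])\zeta=JVJ\xi_n\to JVJ\Delta^{1/2}\zeta$; closedness of $\Delta^{1/2}$ then gives $V\zeta\in D(\Delta^{1/2})$ and $\Delta^{1/2}V\zeta=JVJ\Delta^{1/2}\zeta$, i.e.\ $(2)$, and together with $(2)\Rightarrow(1)$ this closes the equivalence. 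I expect the main obstacle throughout to be the bookkeeping with the unbounded operator $\Delta^{1/2}$ in $(2)\Leftrightarrow(3)$: securing the a priori bound on $\mathcal D_0$ needed to apply Phragm\'en--Lindel\"of, and, in the converse direction, upgrading the scalar identity at $\zeta=i/2$ to the operator inclusion using that $\mathcal D_0$ is a core and $\Delta^{1/2}$ is closed; the remaining steps are routine.
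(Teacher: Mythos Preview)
Your proof is correct: the $(1)\Leftrightarrow(2)$ equivalence via $S_HV\supset VS_H$ is the standard algebraic reduction, and your $(2)\Leftrightarrow(3)$ argument via entire vectors for $\Delta$, the three-lines bound, and the reflection/closedness step is the usual KMS-type analyticity argument carried out carefully. The paper itself does not give a proof of this lemma at all; it simply writes ``See \cite{AZ,LN}'' and defers to Araki--Zsid\'o and to Longo's lecture notes, so you have supplied precisely the argument the paper omits, and your approach matches what one finds in those references.
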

\begin{proof} 
See \cite{AZ,LN}.
\end{proof}
Note that, if the equivalent properties of Lemma \ref{incl} hold, then $V(s+i/2) = JV(s)J$. Indeed
\begin{equation}\label{is}
V(s + i/2) = \Delta^{iz}V\Delta^{-iz}|_{z=s+ i/2} = \Delta^{iz}V(s)\Delta^{-iz}|_{z= i/2} = JV(s)J\ ,
\end{equation}
where we have applied Lemma \ref{incl} to the unitary $V(s)$.
\medskip

Let $H$ be a standard subspace of the Hilbert space $\H$ and assume that there exists a one parameter unitary group $T(t) = e^{itP}$ on $\H$ such that
\begin{itemize}
\item $T(t)H \subset H$ for all $t\geq 0$
\item $P>0$
\end{itemize}
We refer to a pair $(H,T)$ with $H$ and $T$ as above as a \emph{standard pair} (of the Hilbert space $\H$).  The following theorem is the one-particle analog of Borchers theorem for von Neumann algebras \cite{Borch}.
\begin{theorem}\label{B0}
Let $(H,T)$ be a standard pair as above. The following commutation relations hold for all $t,s\in\RR$:
\begin{equation}\label{B1}
\Delta^{is}T(t)\Delta^{-is} = T(e^{-2\pi s}t)
\end{equation}
\begin{equation}\label{B2}
JT(t)J = T(-t)
\end{equation}
\end{theorem}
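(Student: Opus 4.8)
The plan is to deduce both \eqref{B1} and \eqref{B2} from the single identity that the analytic continuation of $s\mapsto\Delta^{-is}T(t)\Delta^{is}$ is $\zeta\mapsto T(e^{2\pi\zeta}t)$. Granting this, \eqref{B1} is the restriction to the real axis after replacing $s$ by $-s$, while \eqref{B2} is the value at $\zeta=i/2$: there $e^{2\pi\zeta}=e^{i\pi}=-1$, and the same continuation equals $JT(t)J$ by Lemma \ref{incl}. The case $t<0$ then follows by taking adjoints. So everything reduces to identifying two bounded holomorphic functions on the strip $\mathbb S_{1/2}$.

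First I would set up the two functions. Since $T(t)H\subset H$ for $t\ge 0$, Lemma \ref{incl} applies to the unitary $V=T(t)$: the map $s\mapsto V(s)=\Delta^{-is}T(t)\Delta^{is}$ extends to a bounded, weakly continuous $B(\H)$-valued function $F_t$ on $\overline{\mathbb S_{1/2}}$, holomorphic in $\mathbb S_{1/2}$, with $F_t(i/2)=JT(t)J$; by \eqref{is}, $F_t(s+i/2)=JF_t(s)J$, so $F_t$ is unitary-valued on both edges of the strip and hence, by the three-lines theorem applied to matrix elements, $\|F_t(\zeta)\|\le 1$ throughout $\overline{\mathbb S_{1/2}}$. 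Uniqueness of analytic continuation also gives the cocycle relations $F_t(\zeta)F_{t'}(\zeta)=F_{t+t'}(\zeta)$ and $F_0(\zeta)\equiv 1$, so each $t\mapsto F_t(\zeta)$ is a strongly continuous one-parameter contraction semigroup. On the other side, the positivity of $P$ enters precisely here: $w\mapsto T(w)\equiv e^{iwP}$ is holomorphic on the open upper half-plane, bounded by $1$ and weakly continuous on its closure, and since $\zeta\mapsto e^{2\pi\zeta}$ carries $\overline{\mathbb S_{1/2}}$ into the closed upper half-plane, $\zeta\mapsto T(e^{2\pi\zeta}t)$ is a bounded holomorphic function on $\overline{\mathbb S_{1/2}}$, equal to $T(-t)$ at $\zeta=i/2$ and to $T(e^{2\pi s}t)$ on the real axis.

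The core step is the identification $F_t(\zeta)=T(e^{2\pi\zeta}t)$. For this I would also invoke Lemma \ref{incl} for the complementary standard subspace $H'$, for which $\Delta_{H'}=\Delta^{-1}$, $J_{H'}=J$, and $T(-t)H'\subset H'$ when $t\ge 0$ (if $\xi\in H'$ and $\eta\in H$, then $\Im(T(-t)\xi,\eta)=\Im(\xi,T(t)\eta)=0$ since $T(t)\eta\in H$). This produces a companion holomorphic family attached to $H'$, and combining the two continuations with the upper-half-plane analyticity and boundedness of $T$ — through a reflection/edge-of-the-wedge argument in the spirit of Borchers' theorem \cite{Borch} — one continues $F_t$ across an edge of its strip and, by a Liouville/Phragm\'en--Lindel\"of-type conclusion, pins it down as $T(e^{2\pi\zeta}t)$. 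I expect this last step to be the only genuine obstacle: the algebra here (commutation of $J$ with $\Delta^{is}$, the cocycle identities, the three-lines estimate) is routine, but the continuation must be bookkept carefully — matching the domains of holomorphy of the glued pieces, justifying holomorphy across the seams, and exploiting the positive-energy boundedness of $T$ in the upper half-plane to close the argument and rule out any extra factor.
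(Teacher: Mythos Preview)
The paper does not actually prove this theorem: its entire proof is the citation ``See \cite{LN}.'' So there is no in-paper argument to compare against; you have supplied far more than the paper itself.

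Your outline is the standard Borchers-style proof and is correctly structured. The two preliminary observations are right: Lemma~\ref{incl} applied to $V=T(t)$ (for $t\ge 0$) gives the bounded holomorphic extension $F_t$ on $\overline{\mathbb S_{1/2}}$ with the correct boundary values, and positivity of $P$ gives the competing holomorphic candidate $\zeta\mapsto T(e^{2\pi\zeta}t)$ on the same strip. Your use of $H'$ (with $\Delta_{H'}=\Delta^{-1}$, $J_{H'}=J$, and $T(-t)H'\subset H'$) to produce a second analytic family is also the right move.

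Where your write-up is honest but thin is exactly the place you flag: the identification $F_t(\zeta)=T(e^{2\pi\zeta}t)$. As stated, you have two bounded holomorphic functions on $\mathbb S_{1/2}$ that agree only at the single point $\zeta=0$, which is not enough; the ``reflection/edge-of-the-wedge \dots\ Liouville/Phragm\'en--Lindel\"of'' clause hides the entire content of the theorem. In the standard arguments (Borchers \cite{Borch}, or the simplifications in \cite{LN,AZ}) this is made precise by passing to a \emph{two-variable} function --- schematically $(\zeta,w)\mapsto \Delta^{-i\zeta}T(w)\Delta^{i\zeta}$ on suitable analytic vectors --- so that the upper-half-plane analyticity in $w$ and the strip analyticity in $\zeta$ can be combined (via Malgrange--Zerner/flat-tube or an explicit change of variables) into a function on a domain large enough for a Liouville-type conclusion. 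Your one-variable $H$/$H'$ pair alone does not obviously glue across an edge without invoking the $w$-analyticity in this joint fashion. If you want a self-contained proof rather than a citation, that two-variable step is what you must actually write out.
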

\begin{proof} 
See \cite{LN}.
\end{proof}
Note that \eqref{B1} gives a positive energy unitary representation of the translation-dilation group on $\RR$ that we denote by $\L$ ($\L$ is usually called the ``$ax +b$" group): $T(t)$ is the unitary corresponding to the translation $x\mapsto x+t$ on $\RR$ and $\Delta^{is}$ is the unitary corresponding to the dilation $x \mapsto e^{-2\pi s}x$.

We shall say that the standard pair $(H,T)$ is non-degenerate if the kernel of $P$ is $\{0\}$.

Now there exists only one irreducible unitary representation of the group $\L$ with strictly  positive energy, up to unitary equivalence ($\log P$ and $\log\Delta$ satisfies the canonical commutation relations). Therefore, if the standard pair $(H,T)$ is non-degenerate, the associated representation of $\L$ is a multiple of the unique irreducible one and $(H,T)$ is irreducible iff the associated unitary representation of $\L$ is irreducible.

Let the standard pair $(H,T)$ be non-degenerate and (non-zero) irreducible. We can then identify (up to unitary equivalence) $\H$ with $L^2(\RR, {\rm d}q)$, $Q\equiv \log P$ with the operator of multiplication by $q$ on $L^2(\RR, {\rm d}q)$ and $\Delta^{-is}$ by the translation  by $2\pi s$ on this function space:
\begin{equation}\label{identify}
e^{itQ}: f(q)\mapsto e^{itq}f(q),\qquad \Delta^{-is}: f(q)\mapsto f(q+2\pi s)\ .
\end{equation}
In this representation $J$ can be identified with the complex conjugation $Jf =\bar f$ and $f\in H$ iff $f$ admits an analytic continuation on the strip $\mathbb S_{\pi}$ such that $f(\cdot + a)\in L^2$ for every $a\in(0,\pi)$ with boundary values satisfying $f(q+i\pi)=\bar f(q)$.

We now describe the endomorphisms of the standard pair $(H,T)$, namely the \emph{semi-group $\E(H,T)$} of unitaries $V$ of $\H$ commuting with $T$ such that $VH\subset H$ (sometimes abbreviated $\E(H)$). We denote by $P$ the generator of $T$ and begin with the irreducible case.

With $a>0$, we denote by $\Ha$ the space of bounded analytic functions on the strip $\mathbb S_{a}$. If 
$\psi\in\Ha$ then the limit $\lim_{\e\to 0^+}\psi(q + i\e)$ exists for almost all $q\in\RR$ and defines a function in $L^{\infty}(\mathbb R,{\rm d}q)$ that determines $\psi$ (and similarly on the line $\Im z=ia$ if $a<\infty$).
\begin{theorem}\label{endH}
Assume the standard pair $(H, T)$ of $\H$ to be irreducible and let $V$ be a bounded linear operator on $\H$. The following are equivalent:
\begin{itemize}
\item[$(i)$]
$V$ commutes with $T$ and $VH\subset H$;
\item[$(ii)$]
$V = \psi(Q)$ where  $Q\equiv \log P$ and $\psi\in L^{\infty}(\mathbb R,{\rm d}q)$ is the boundary value of a function in $\Hp$  such that $\psi(q + i\pi) = \bar\psi(q)$, for almost all $q\in\RR$.
\end{itemize}
In this case $V$ is unitary, i.e. $V\in\E(H)$, iff $|\psi(q)| =1$ for almost all $q\in\RR$, namely $\psi$ is an inner function on ${\mathbb S}_\pi$, see Appendix \ref{inner}. \footnote{In the scattering context the variable $q$ is usually denoted by $\theta$, the \emph{rapidity}.}
\end{theorem}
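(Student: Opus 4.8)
The plan is to pass to the concrete realization \eqref{identify} of the irreducible (hence non-degenerate) standard pair --- $\H=L^{2}(\RR,{\rm d}q)$, $Q=\log P$ acting by multiplication by $q$, $\Delta^{-is}$ acting as the translation $f(q)\mapsto f(q+2\pi s)$, and $J$ the complex conjugation --- and there to reduce the whole statement to an elementary fact about scalar multiplication operators. First I would note that $\{T(t):t\in\RR\}'=\{P\}'=\{Q\}'$ and that, \emph{because the representation of $\L$ is irreducible}, this commutant is the maximal abelian von Neumann algebra of multiplication operators on $L^{2}(\RR,{\rm d}q)$ (the vector $q\mapsto e^{-q^{2}}$ is cyclic for $Q$), so $\{Q\}'=\{Q\}''=L^{\infty}(\RR,{\rm d}q)$ acting by multiplication. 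Hence the first half of $(i)$, namely $[V,T]=0$, is equivalent to $V=\psi(Q)$ for some $\psi\in L^{\infty}(\RR,{\rm d}q)$ with $\|\psi\|_{\infty}=\|V\|$. This is the only step in which irreducibility enters.

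Next I would feed $V=\psi(Q)$ into Lemma \ref{incl}. Since $\Delta^{-is}$ is translation by $2\pi s$, the operator $V(s)=\Delta^{-is}V\Delta^{is}$ is multiplication by $q\mapsto\psi(q+2\pi s)$, while $JVJ$ is multiplication by $\bar\psi$. Lemma \ref{incl} says that $VH\subset H$ holds iff $s\mapsto V(s)$ extends to a bounded, weakly continuous $B(\H)$-valued function on $\overline{\mathbb S_{1/2}}$, analytic in $\mathbb S_{1/2}$, with $V(i/2)=JVJ$; after the rescaling $z=2\pi s$ this should be precisely the assertion that $\psi$ is the boundary value of a function in $\Hp$ satisfying $\psi(q+i\pi)=\bar\psi(q)$ a.e.\ (any such $\psi$ is automatically bounded, so $V=\psi(Q)$ is bounded as required). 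The implication $(ii)\Rightarrow(i)$ is then immediate: given $\psi\in\Hp$ with that symmetry one defines $V(z)$ to be multiplication by $\psi(\,\cdot\,+2\pi z)$ on $\overline{\mathbb S_{1/2}}$ and checks the hypotheses of Lemma \ref{incl} by inspection.

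The converse $(i)\Rightarrow(ii)$ is the point requiring care, and I expect it to be the main obstacle: one must show that the operator-valued analytic continuation $z\mapsto V(z)$ furnished by Lemma \ref{incl} is still a \emph{scalar} multiplication operator depending analytically on $z$. For this I would (a) analytically continue the relation $V(s)T(t)=T(t)V(s)$ in $s$ to get $V(z)\in\{T\}'=\{Q\}''$, hence $V(z)=\psi_{z}(Q)$ with $\sup_{z\in\overline{\mathbb S_{1/2}}}\|\psi_{z}\|_{\infty}<\infty$; (b) analytically continue the covariance $V(z+s)=\Delta^{-is}V(z)\Delta^{is}$ in $z$ to get $\psi_{z+s}(q)=\psi_{z}(q+2\pi s)$ for real $s$; and (c) use the norm-analyticity of $z\mapsto V(z)$, equivalently of $z\mapsto\psi_{z}\in L^{\infty}(\RR)$, together with the Cauchy integral formula applied in $L^{2}_{\mathrm{loc}}$, to produce a jointly measurable representative that is analytic in $z$. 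Assembling this with (b) yields a bounded analytic function on $\mathbb S_{\pi}$ whose boundary value on $\RR$ is the original $\psi$ and whose boundary value on the line $\Im z=\pi$ is that of $V(i/2)=JVJ=\bar\psi(Q)$; this is exactly $\psi\in\Hp$ with $\psi(q+i\pi)=\bar\psi(q)$. The only genuinely non-routine ingredient here is the soft analysis in (c): turning an $L^{\infty}$-valued analytic function into an honest element of $\Hp$ with the correct boundary values on both edges of the strip.

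Finally, the unitarity clause is easy. For $V=\psi(Q)$ one has $V^{*}V=VV^{*}=|\psi|^{2}(Q)$, so $V$ is unitary iff $|\psi(q)|=1$ for a.e.\ $q\in\RR$; combined with $\psi(q+i\pi)=\bar\psi(q)$ this gives $|\psi|=1$ a.e.\ on both boundary lines of $\mathbb S_{\pi}$, and, $\psi$ being bounded and analytic there, $|\psi|\le1$ throughout $\mathbb S_{\pi}$ by the Phragm\'en--Lindel\"of principle for a strip. Thus $\psi$ is precisely an inner function on $\mathbb S_{\pi}$ in the sense of Appendix \ref{inner}.
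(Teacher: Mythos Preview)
Your proposal is correct and follows essentially the same route as the paper: use irreducibility to get that $\{T\}''$ is maximal abelian so $V=\psi(Q)$, then feed this into Lemma~\ref{incl} together with the commutation relations \eqref{B1}--\eqref{B2} to reduce everything to analyticity of the scalar function $\psi$ on the strip. The ``soft analysis'' you flag in step~(c) as the only non-routine point is precisely what the paper isolates as Lemma~\ref{anal} in the Appendix (passing from weak operator analyticity of $s\mapsto M_{\psi_{s}}$ to $\psi\in\Hp$ via pairing against $L^{1}$ functions), so your inline sketch and the paper's argument coincide in substance.
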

\begin{proof}
$(i)\Rightarrow (ii)$:
With $\Delta$ and $J$ the modular operator and the modular conjugation of $H$ we have the commutation relations (\ref{B1},\ref{B2}).
As the standard pair $(H,V)$ is assumed to be irreducible,
the associated positive energy unitary representation of $\L$ is irreducible.

Therefore the von Neumann algebra generated by $\{T(t): t\in\RR\}$ is maximal abelian in $B(\H)$. As $V$ commutes with $T$, setting $Q\equiv \log P$ we have
\[
V = \psi(Q)
\] 
for some Borel complex function $\psi$ on $\RR$. By (\ref{B1},\ref{B2}) we then have
\begin{equation}\label{C1}
\Delta^{-is}\psi(Q)\Delta^{is} = \psi(Q + 2\pi s)
\end{equation}
\begin{equation}\label{C2}
J\psi(Q)J = \bar \psi (Q)
\end{equation}
As $VH\subset H$, by Lemma \ref{incl} and eq. \eqref{is} the function $V(s)\equiv \Delta^{-is}\psi(Q)\Delta^{is}= \psi(Q +2\pi s)$ extends to a bounded continuous function on the strip $\overline{\mathbb S_{1/2}}$, analytic in  $S_{1/2}$, and 
\[
V(s+i/2) = JV(s) J = J\psi(Q+ 2\pi s)J = \bar\psi(Q + 2\pi s)=V(s)^*\ .
\] 
We now identify $\H$ with $L^2(\RR, {\rm d}q)$ with $Q\equiv \log P$ and $\Delta$ as in \eqref{identify}. Then $V$ is identified with the multiplication operator $M_\psi : f\in L^2(\RR, {\rm d}q)\mapsto \psi f \in L^2(\RR, {\rm d}q)$.

It then follows by Lemma \ref{anal} that $\psi$ is the boundary value of a function $\psi\in\Hp$ and $\psi(q+i\pi) = \bar \psi(q)$ for almost all $q\in\RR$.

$(ii)\Rightarrow (i)$: Conversely, let $V=\psi(Q)$ where $\psi$ is the boundary value of a function in $\Hp$ with $\psi(q+i\pi) = \bar \psi(q)$ for almost all $q\in\RR$. Then clearly $V$ commutes with $T$, equations (\ref{C1},\ref{C2}) hold, the function $V(s)= \Delta^{-is}\psi(Q)\Delta^{is}$ is the boundary value of a bounded continuous function on $\overline{\mathbb S_{1/2}}$, analytic on $\mathbb S_{1/2}$, and $V(i/2) = \psi(Q + i\pi) = \bar\psi(Q) =J\psi(Q)J =JVJ$, so $VH\subset H$ by Lemma \ref{incl}. 

Clearly $V$ is unitary iff $|\psi(q)| =1$ for almost all $q\in\mathbb R$. 
\end{proof}
If $(H,T)$ is reducible (and non-degenerate so $Q=\log P$ is defined) the proof of $(ii)\Rightarrow (i)$ in Th. \ref{endH}  remains valid, so the implication still holds true. 

Note that $\E(\A,T)$ is commutative if $(H,T)$ is irreducible: it isomorphic to the semigroup of inner functions. It will be useful to formulate Th. \ref{endH} in terms of functions of $P$ and we do this in the unitary case.
\begin{corollary}\label{fplane}
Let $(H,T)$ be an irreducible standard pair and $V$ a unitary on $\H$.  Then $V\in\E(H,T)$ iff
$V = \f(P)$ with $\f$ the boundary value of a symmetric inner function on 
$\mathbb S_{\infty}$ such that $\f(- p) = \bar\f(p)$, $p \geq 0$.
\end{corollary}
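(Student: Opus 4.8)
The plan is to read the Corollary off from Theorem~\ref{endH} by transporting its description along the exponential map. Recall that $z\mapsto e^{z}$ is a conformal bijection of the strip $\mathbb S_{\pi}=\{0<\Im z<\pi\}$ onto the upper half plane $\mathbb S_{\infty}$, with inverse the principal branch of $\log$; it extends to a homeomorphism of the closures carrying the edge $\{\Im z=0\}$ onto $(0,\infty)$ and the edge $\{\Im z=\pi\}$ onto $(-\infty,0)$, and on each edge it is real-analytic with non-vanishing derivative, hence locally absolutely continuous. Thus composition with $\exp$ identifies $\Hp$ with the space of bounded analytic functions on $\mathbb S_{\infty}$, compatibly with almost-everywhere boundary values, and it carries inner functions to inner functions (the boundary modulus being unchanged). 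Finally $e^{\bar z+i\pi}=-\overline{e^{z}}$, so the reflection symmetry $\psi(q+i\pi)=\overline{\psi(q)}$ of a function on $\mathbb S_{\pi}$ corresponds under $\exp$ precisely to $\varphi(-p)=\overline{\varphi(p)}$ for a.e.\ $p$, i.e.\ to $\varphi$ being a symmetric inner function on $\mathbb S_{\infty}$ (see Appendix~\ref{inner}).

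First I would handle the direct implication. Given a unitary $V\in\E(H,T)$, Theorem~\ref{endH} yields $V=\psi(Q)$ with $Q=\log P$ and $\psi$ the boundary value of an inner function on $\mathbb S_{\pi}$ satisfying $\psi(q+i\pi)=\overline{\psi(q)}$ almost everywhere; I would then set $\varphi\equiv\psi\circ\exp$, so that by the remarks above $\varphi$ is a symmetric inner function on $\mathbb S_{\infty}$ with $\varphi(-p)=\overline{\varphi(p)}$ a.e., while the spectral theorem gives $\psi(Q)=\psi(\log P)=(\psi\circ\exp)(P)=\varphi(P)$, whence $V=\varphi(P)$ as required.

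Then, for the converse, starting from a symmetric inner function $\varphi$ on $\mathbb S_{\infty}$ with $\varphi(-p)=\overline{\varphi(p)}$ a.e., I would put $\psi\equiv\varphi\circ\exp$, an inner function on $\mathbb S_{\pi}$, and check that for a.e.\ $q\in\RR$ one has $\psi(q+i\pi)=\varphi(e^{q+i\pi})=\varphi(-e^{q})=\overline{\varphi(e^{q})}=\overline{\psi(q)}$; hence $\psi$ meets the hypotheses of Theorem~\ref{endH}, so $\psi(Q)\in\E(H,T)$, and $\psi(Q)=\varphi(P)=V$ as before. The unitary statement needs nothing extra: $V=\varphi(P)$ is unitary exactly when $|\varphi|=1$ a.e.\ on $\RR$, which is already part of being inner.

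The one step that is not a routine change of variables is the compatibility of boundary values under $\exp$: that the a.e.-defined boundary function of $\psi\circ\exp$ equals the a.e.-composition of the boundary function of $\psi$ with the boundary homeomorphism. This is where I would invoke the local absolute continuity (indeed real-analyticity) of $\exp$ restricted to each edge of the strip, together with the conformal invariance of the class of bounded analytic functions and of their boundary-value theory; beyond this point I expect no obstacle, the rest being functional calculus.
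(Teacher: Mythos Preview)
Your approach is precisely the paper's: the proof there is the single line ``Easy consequence of the conformal identification of $\mathbb S_{\infty}$ and $\mathbb S_{\pi}$ by the logarithm function,'' and you have simply spelled out that identification in detail.

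There is one slip in the forward direction: the composition is reversed. Since $\exp\colon\mathbb S_\pi\to\mathbb S_\infty$, a function $\psi$ on $\mathbb S_\pi$ transports to $\varphi=\psi\circ\log$ on $\mathbb S_\infty$, not to $\psi\circ\exp$; correspondingly the spectral-calculus identity is $\psi(Q)=\psi(\log P)=(\psi\circ\log)(P)$, not $(\psi\circ\exp)(P)$. Your converse paragraph has the direction right ($\psi=\varphi\circ\exp$, giving $\psi(Q)=\varphi(e^Q)=\varphi(P)$), so this is evidently a typo rather than a conceptual error; with $\log$ in place of $\exp$ in the forward step, the argument is complete and correct.
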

\begin{proof}
Easy consequence of the conformal identification of $\mathbb S_{\infty}$ and $\mathbb S_{\pi}$ by the logarithm function.
\end{proof}
We now describe the Lie algebra of $\E(H,T)$, i.e. the generators of the one-parmeter semigroups of unitaries in $\E(H,T)$.
\begin{corollary}\label{Lie}
Let $(H,T)$ be a standard pair of the Hilbert space $\H$ and $P$ the generator of $T$. Let $V(s)=e^{isA}$ be a one-parameter unitary group of $\H$. 
Then $V(s)\in\E(H,T)$ for all $s\geq 0$ if $A =f(P)$ where $f:\RR\to \RR$ is an odd function $f(-p) = -f(p)$ that admits an analytic continuation in the upper plane $\mathbb S_\infty$ with $\Im f(z) \geq 0$.

Conversely, if $(H,T)$ is irreducible, every unitary one-parameter group $V(s)$ on $\H$ such that $V(s)\in\E(H,T)$ for all $s\geq 0$ has the form $V(s)= e^{isf(P)}$ with $f$ as above.
\end{corollary}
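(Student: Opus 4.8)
The plan is to reduce everything to Corollary~\ref{fplane} (equivalently Theorem~\ref{endH}), by encoding ``$s\mapsto V(s)$ is a one-parameter semigroup in $\E(H,T)$'' as ``$s\mapsto\f_s$ is a one-parameter semigroup of symmetric inner functions on $\mathbb S_\infty$ with $\f_s(-p)=\bar\f_s(p)$'', and then reading off its generator.

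For the sufficiency the quickest route is a direct verification via Lemma~\ref{incl}. Given $A=f(P)$ with $f$ as in the statement and $s\ge0$, Theorem~\ref{B0} gives $\Delta^{-is'}V(s)\Delta^{is'}=e^{isf(e^{2\pi s'}P)}$; for $s'\in\overline{\mathbb S_{1/2}}$ the argument $e^{2\pi s'}P$ lies in $\overline{\mathbb S_\infty}$, where $f$ is analytic with $\Im f\ge0$, so this extends to a bounded weakly continuous function of $s'$ on $\overline{\mathbb S_{1/2}}$, analytic inside, and its value at $i/2$ equals $e^{isf(-P)}=e^{-isf(P)}=JV(s)J$ (using oddness of $f$ and $JPJ=P$ from Theorem~\ref{B0}). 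By Lemma~\ref{incl} this gives $V(s)H\subset H$, and $V(s)$ commutes with $T$, so $V(s)\in\E(H,T)$. Equivalently one may put $\f_s(z)\equiv e^{isf(z)}$, note $|\f_s|\le1$ on $\mathbb S_\infty$ (from $\Im f\ge0$, $s\ge0$) and $|\f_s|=1$, $\f_s(-p)=\bar\f_s(p)$ on $\RR$ (from $f$ real and odd), and invoke Corollary~\ref{fplane}, whose underlying implication $(ii)\Rightarrow(i)$ holds also for reducible standard pairs.

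For the converse, assume $(H,T)$ irreducible and $V(s)=e^{isA}\in\E(H,T)$ for all $s\ge0$. Each $V(s)$ commutes with $T$, hence so does $A$; since the representation of $\L$ is irreducible, $\{T(t):t\in\RR\}''=\{P\}''$ is maximal abelian, so $A=g(P)$ for a real Borel $g$. By Corollary~\ref{fplane}, for each $s\ge0$ there is a symmetric inner function $\f_s$ on $\mathbb S_\infty$ with $\f_s(-p)=\bar\f_s(p)$ whose boundary values agree a.e.\ with $e^{isg}$ on the spectrum of $P$; as an inner function is determined by its boundary values, the group law $V(s)V(t)=V(s+t)$ forces $\f_s\f_t=\f_{s+t}$ as inner functions, with $\f_0=1$. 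Then $\f_s=\f_{s/2^n}^{2^n}$, so a zero of $\f_s$ in $\mathbb S_\infty$ would have multiplicity divisible by every $2^n$; hence each $\f_s$ is zero-free, a unimodular multiple of a singular inner function. Therefore $\log\f_1$ is single-valued analytic on the simply connected domain $\mathbb S_\infty$, and using strong continuity of $s\mapsto V(s)$ and $\f_0=1$ one fixes the branch so that $\f_s=e^{isf}$ with $f\equiv-i\log\f_1$. Finally $f$ is analytic on $\mathbb S_\infty$; real on $\RR$ since $|\f_1|=1$ there; it satisfies $\Im f\ge0$ on $\mathbb S_\infty$ since $|\f_1|\le1$ there; and it is odd since $\f_s(-p)=\bar\f_s(p)$ reads $f(-p)=-f(p)$. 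Comparing $e^{isf}=e^{isg}$ on the spectrum of $P$ for all $s\ge0$ gives $g=f$ there, so $A=f(P)$ as required.

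I expect the main obstacle to lie in the converse: the divisibility argument that makes the $\f_s$ zero-free --- which needs both that each $\f_s$ is a genuine inner function (Corollary~\ref{fplane}) and the rigidity ``inner function $\leftrightarrow$ boundary values'' --- and the choice of a single coherent analytic branch of $\log\f_s$ across all $s\ge0$, where strong continuity of $V(\cdot)$ and $\f_0=1$ enter. Once these are settled, the three properties of $f$ fall out of, respectively, unimodularity of $\f_1$ on $\RR$, the bound $|\f_1|\le1$ on $\mathbb S_\infty$, and the reflection symmetry; the sufficiency direction is entirely routine.
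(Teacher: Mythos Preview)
Your proposal is correct and follows essentially the same route as the paper, which defers the proof to the analysis of one-parameter semigroups of inner functions in Appendix~\ref{inner} (Proposition~\ref{innchar}, Corollaries~\ref{siminn} and \ref{onepar}): reduce via Corollary~\ref{fplane} to a semigroup $\{\f_s\}$ of symmetric inner functions, show each $\f_s$ is zero-free, and extract the generator $f$ by taking a logarithm.

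The one notable variation is your zero-freeness argument. The paper first establishes pointwise continuity $\f_t(z)\to1$ as $t\to0$ via a Cauchy integral, then argues that $t_0\equiv\inf\{t>0:\f_t(z_0)=0\}>0$ and writes any $t$ as $ns$ with $s<t_0$, so $\f_t(z_0)=\f_s(z_0)^n\neq0$. Your divisibility argument ($\f_s=\f_{s/2^n}^{2^n}$ would force a zero of infinite order) bypasses continuity at that step and is slightly slicker. Continuity does re-enter --- exactly where you flagged it --- when fixing a coherent branch of $\log\f_s$ across all $s$; the paper handles this by fixing $z$, observing that $t\mapsto\f_t(z)$ is then a continuous semigroup of complex numbers of modulus $\le1$, hence of the form $e^{itf(z)}$ with $\Im f(z)\ge0$, and finally checking that $f$ is analytic by writing $itf(z)=\log\f_t(z)$ for $t$ small on compacts.
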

The proof of Corollary \ref{Lie} follows from the analysis in Section \ref{inner} of the semigroup of inner functions; we shall write the explicit form of  $f$ and of the inner functions $\psi$ in Theorem \ref{endH}.
\medskip

\noindent
{\bf Example.} If $(H,T)$ is a non-degenerate standard pair, the self-adjoint operator $-\frac1P$ belongs to the Lie algebra of $\E(H,T)$, namely $e^{-it(1/P)}H\subset H$, $t\geq 0$, with $P$ the generator of $T$.
\medskip

\noindent
We may now describe the reducible case. Let $(\tilde H, \tilde T)$ be a non-zero, non-degenerate standard pair on the Hilbert space $\tilde\H$. Since, up to unitary equivalence, there exists only one non-zero, non-degenerate irreducible standard pair $(H, T)$, the pair $(\tilde H, \tilde T)$ is the direct sum of copies $(H,T)$. In other words we may write 
$\tilde \H = \bigoplus^n_{k=1} \H_k$, $\tilde H = \bigoplus^n_{k=1} H_k$, $\tilde T = \bigoplus^n_{k=1} T_k$, for some finite or infinite $n$, where every Hilbert space $\H_n$ is identified with the same Hilbert space $\H$ and each pair $(H_k, T_k)$ is identified with $(H,T)$. With this identification we have:
\begin{theorem}\label{char2}
A unitary $\tilde V$ belongs to $\E(\tilde H,\tilde T)$ if and only if $\tilde V$ is a $n\times n$ matrix $(V_{hk})$ with entries in $B(\H)$ such that $V_{hk} = \f_{hk}(P)$. Here
$\f_{hk}:\RR\to \mathbb C$ are a complex Borel functions such that $(\f_{hk}(p))$ is a unitary matrix for almost every $p>0$, each $\f_{hk}$ is the boundary value of a function in $\mathbb H(\Si)$ and is symmetric, i.e. $\bar\f_{hk}(p)=\f_{hk}(-p)$.  
\end{theorem}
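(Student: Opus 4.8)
The plan is to reduce Theorem~\ref{char2} to the irreducible case handled by Theorem~\ref{endH} (and Corollary~\ref{fplane}), using the direct-sum decomposition $\tilde\H = \bigoplus_{k=1}^n \H_k$ with each $(\H_k, H_k, T_k)$ identified with $(\H, H, T)$. First I would observe that since $\tilde V$ commutes with $\tilde T = \bigoplus_k T_k$, and since on each summand $T_k = T$ generates a maximal abelian von Neumann algebra $\{T(t)\}''$ (by irreducibility of $(H,T)$ and the uniqueness of the positive-energy representation of $\L$), the commutant of $\tilde T$ in $B(\tilde\H)$ is exactly the algebra of $n\times n$ matrices $(V_{hk})$ with each $V_{hk}$ in $\{T(t)\}'' = \{\f(P):\f\in L^\infty\}$. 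Hence $V_{hk} = \f_{hk}(P)$ for Borel functions $\f_{hk}$, and $\tilde V$ unitary is equivalent to $(\f_{hk}(p))$ being a unitary matrix for a.e.\ $p>0$ (one must be slightly careful that "a.e.\ $p$" suffices because the spectrum of $P$ is $[0,\infty)$ with the push-forward of Lebesgue measure under $q\mapsto e^q$, so null sets in $p$ and in $q=\log p$ correspond).

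The next step is the analyticity constraint. Here I would invoke the remark immediately following Theorem~\ref{endH}: the implication $(ii)\Rightarrow(i)$ remains valid for reducible non-degenerate standard pairs, so it suffices to prove $(i)\Rightarrow(ii)$, i.e.\ that $\tilde V H \subset \tilde H$ forces each $\f_{hk}$ to be the boundary value of a function in $\mathbb H(\Si)$ with the symmetry $\bar\f_{hk}(p)=\f_{hk}(-p)$. For this I would apply Lemma~\ref{incl} and equation~\eqref{is} to $\tilde V$ directly: writing $\tilde\Delta = \bigoplus_k \Delta$ and $\tilde J = \bigoplus_k J$ (the modular data of $\tilde H$ is the diagonal direct sum of that of $H$, since $S_{\tilde H} = \bigoplus_k S_{H}$), the hypothesis $\tilde V H\subset \tilde H$ is equivalent to the matrix-valued function $\tilde V(s) = \tilde\Delta^{-is}\tilde V\tilde\Delta^{is}$ extending to a bounded weakly continuous function on $\overline{\mathbb S_{1/2}}$, analytic on $\mathbb S_{1/2}$, with $\tilde V(i/2) = \tilde J\tilde V\tilde J$. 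Since $\tilde\Delta^{-is}$ acts diagonally and commutes with the matrix structure, $\tilde V(s)$ is simply the entrywise function $(\f_{hk}(Q+2\pi s))$ in the $L^2(\RR,\mathrm dq)$ picture from \eqref{identify}; boundedness and weak analyticity of a $B(\tilde\H)$-valued function is equivalent to the same for each matrix entry, so each $s\mapsto \f_{hk}(Q+2\pi s)$ extends analytically to the strip exactly as in the proof of Theorem~\ref{endH}, and Lemma~\ref{anal} (the same lemma used there) gives that $\f_{hk}$ is the boundary value of a bounded analytic function on $\mathbb S_\pi$. The condition $\tilde V(i/2)=\tilde J\tilde V\tilde J$ read entrywise is $\f_{hk}(q+i\pi) = \overline{\f_{hk}(q)}$, which after the conformal change of variable $\mathbb S_\pi \cong \Si$ via the logarithm (as in Corollary~\ref{fplane}) becomes the symmetry $\bar\f_{hk}(p)=\f_{hk}(-p)$ together with membership in $\mathbb H(\Si)$.

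Finally I would assemble the two directions: $(i)\Rightarrow(ii)$ from the above, and $(ii)\Rightarrow(i)$ by the direct computation that if each $\f_{hk}\in\mathbb H(\Si)$ is symmetric with $(\f_{hk}(p))$ unitary a.e., then $\tilde V$ commutes with $\tilde T$, is unitary, and $\tilde V(i/2)=\tilde J\tilde V\tilde J$, so $\tilde V\tilde H\subset\tilde H$ by Lemma~\ref{incl}. The main obstacle I anticipate is purely bookkeeping rather than conceptual: one has to check that the modular data of a direct sum of standard pairs is the direct sum of the modular data (so that $\tilde J$, $\tilde\Delta$ act diagonally — this is straightforward from $S_{\tilde H}=\bigoplus S_{H}$ but should be stated), and one has to handle the case of infinite $n$ with care, since then $\tilde V$ is an infinite operator matrix and one needs weak operator convergence of the entrywise analytic continuations together with a uniform bound ($\|\tilde V\|=1$ gives $\|\f_{hk}\|_\infty\le 1$ for each entry, which is enough to run the normal-families/Vitali argument in Lemma~\ref{anal} entrywise and then reassemble). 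No genuinely new analytic input beyond Theorem~\ref{endH} and Lemma~\ref{anal} is needed.
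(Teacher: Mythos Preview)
Your proposal is correct and follows essentially the same route as the paper: write $\tilde\H=\H\otimes\ell^2$, use that $\{T\}''$ is maximal abelian to get $V_{hk}=\f_{hk}(P)$, observe $\tilde\Delta$ and $\tilde J$ are diagonal, and then run Lemma~\ref{incl} entrywise to obtain the analyticity and symmetry of each $\f_{hk}$. The one place to tighten when you write it up is the converse for $n=\infty$: the paper controls the \emph{matrix} operator norm $\|(\f_{hk}(z))\|\leq 1$ on the strip (via finite corners plus the maximum modulus principle, using that the boundary values are unitary), whereas your entrywise bound $\|\f_{hk}\|_\infty\leq 1$ alone does not immediately give a bounded operator on $\ell^2$.
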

\begin{proof}
Assume that $V$ belongs to $\E(\tilde H,\tilde T)$.
We may write $\tilde\H = \H\otimes \ell^2$ and $\tilde T = T\otimes 1$; here $\ell^2$ is the Hilbert space of $n$-uples (finite $n$) or of countable square summable sequences ($n=\infty$). As $\tilde V$ commutes with $\tilde T$, $\tilde V$ belongs to the von Neumann algebra $\{T\}'\otimes B(\ell^2)$ which coincides with $\{T\}''\otimes B(\ell^2)$ because $T$ generates a maximal abelian von Neumann algebra. Therefore $\tilde V= (V_{hk})$ where  $V_{hk}=\f_{hk}(P)$ for some complex functions
$\f_{hk}:\RR\to \mathbb C$ and $(\f_{hk}(p))$ is a unitary matrix for (almost) every $p$ because $\tilde V$ is unitary.

Now $\tilde \Delta = \Delta\otimes 1$ and $\tilde J = J\otimes 1$ are constant diagonal matrices, therefore by equations (\ref{B1},\ref{B2}) we have
\begin{equation}\label{C'1}
\Delta^{is}\f_{hk}(P)\Delta^{-is} = \f_{hk}(e^{-2\pi s}P)
\end{equation}
\begin{equation}\label{C'2}
J\f_{hk}(P)J = \bar \f_{hk}(P)
\end{equation}
By Lemma \ref{incl} we have $\tilde\Delta^{1/2} \tilde V\supset\tilde J  \tilde V\tilde J\tilde \Delta^{1/2}$ so
\[
\big(\f_{hk}(-P)\big)=\big(\Delta^{-is}\f_{hk}(P)\Delta^{is}|_{i= 1/2}\big) =\tilde J \tilde  V\tilde J = \big(J\f_{hk}(P) J\big) =\big(\bar\f_{hk}(P)\big)
\]
Therefore 
\[
\bar\f_{hk}(p) = \f_{hk}(-p)\ .
\]
Clearly the matrix operator norm $||\f(z)||$ is bounded, indeed $||\f(z)||\leq 1$.

We may now reverse the above proof to get the converse statement. We only have to check that if each $\f_{hk}(z)$ is bounded then $||\f(z)||\leq 1$. If $n$ is finite this is true because each $\f_{hk}(z)$ is bounded iff $||\f(z)||$ is bounded and in this case $||\f(z)||\leq 1$ by the maximum modulus principle. If $n=\infty$ we then note that the operator norm of each finite corner matrix must be bounded by 1 so $||\f(z)||\leq 1$ also in this case. 
\end{proof}
We note the following proposition: when combined with Cor. \ref{fplane} or Thm. \ref{char2}, it gives an abstract, (real) analog of the Beurling-Lax theorem  \cite{B,L}, see also \cite{Rudin}.
\begin{proposition}\label{subsub}
Let $(H,T)$ be a non-degenerate standard pair of the Hilbert space $\H$. 
A standard subspace $K\subset H$ satisfies $T(t)K\subset K$ for $t\geq 0$ if and only if $K=VH$ for some $V\in\E(H,T)$. 
In particular, if $(H,T)$ is irreducible, $K =\f(P)H$ with $P$ the generator of $T$ and $\f$ a symmetric inner function on $\mathbb S_\infty$.
\end{proposition}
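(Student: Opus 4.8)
The plan is to recognize $(K,T)$ as a standard pair in its own right and then to use the uniqueness, up to multiplicity, of standard pairs with a prescribed translation group. One implication is immediate: if $K=VH$ with $V\in\E(H,T)$, then, $V$ being unitary, $K$ is closed, $K+iK=V(H+iH)$ is dense and $K\cap iK=V(H\cap iH)=\{0\}$, so $K$ is standard; $K\subseteq H$ holds by the very definition of $\E(H,T)$; and for $t\ge 0$ one has $T(t)K=VT(t)H\subseteq VH=K$, since $V$ commutes with $T$ and $(H,T)$ is a standard pair.

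For the converse, let $K\subseteq H$ be a standard subspace with $T(t)K\subseteq K$ for all $t\ge 0$. Because the translation group, and hence its generator $P$ with $\ker P=\{0\}$, is unchanged, $(K,T)$ is again a non-degenerate standard pair. By Theorem \ref{B0} it carries a positive-energy unitary representation of $\L$ on $\H$; by the structure of non-degenerate standard pairs recalled before Theorem \ref{char2}, $(K,T)$ is a direct sum of copies of the unique irreducible standard pair, the number of copies being the spectral multiplicity of $P$. Since $(H,T)$ and $(K,T)$ share the same $T$, hence the same $P$, this multiplicity is the same for both, so the two standard pairs are unitarily equivalent: there is a unitary $U$ with $UK=H$ and $UT(t)U^*=T(t)$, that is, $U\in\{T(t):t\in\RR\}'$. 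Then $V\equiv U^*$ is a unitary commuting with $T$ such that $VH=K\subseteq H$; hence $V\in\E(H,T)$ and $K=VH$. If, in addition, $(H,T)$ is irreducible, then $\{T\}'$ is maximal abelian, so $V=\f(P)$ for some Borel function $\f$, and Corollary \ref{fplane} (equivalently Theorem \ref{endH}) forces $\f$ to be the boundary value of a symmetric inner function on $\mathbb S_\infty$ with $\f(-p)=\bar\f(p)$; thus $K=\f(P)H$.

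The step that I expect to carry the real content is the unitary equivalence of $(K,T)$ and $(H,T)$ through a unitary commuting with $T$: this is precisely where Borchers' Theorem \ref{B0} (which supplies the $\L$-action on $(K,T)$), the uniqueness of the irreducible positive-energy representation of $\L$, and the observation that the multiplicity of a non-degenerate standard pair is governed by $P$ alone all enter. Everything else is either formal or a direct appeal to Theorem \ref{endH}/Corollary \ref{fplane}. In the reducible case one could instead read $V$ off from the matrix description of Theorem \ref{char2}, but it seems cleaner to keep the argument representation-free and to specialize only at the very end.
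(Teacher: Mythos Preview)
Your proof is correct and follows essentially the same route as the paper's. Both arguments hinge on the observation that $(K,T)$ is a non-degenerate standard pair whose multiplicity equals that of $(H,T)$ because the multiplicity is read off from the common translation group $T$; the paper phrases this via unitary equivalence of the associated (anti-)unitary representations of $\L$ and then deduces $VH=K$ from $VS_HV^*=S_K$, whereas you invoke the classification of standard pairs directly to produce the intertwiner commuting with $T$, but the content is the same.
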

\begin{proof}
Let $U_H$ and $U_K$ be the representations of $\L$ associated with $(H,T)$ and $(K,T)$. By assumptions $U_H$ and $U_K$ agree on the translation one-parameter group, in particular $(K,T)$ is non-degenerate too. Moreover $U_H$ and $U_K$ have the same multiplicity because this is also the multiplicity of the abelian von Neumann algebra generated by $T$. Therefore $U_H$ and $U_K$ are unitarily equivalent and indeed also the associated anti-unitary representations of the group  generated by $\L$ and the reflection $x\mapsto -x$ are unitarily equivalent, namely there exists a unitary $V\in B(\H)$ such that 
\[
U_K(g) = VU_H(g)V^*\ , \ g\in\L \ ,\qquad VJ_KV^* = J_H\ ,
\]
and in particular $V$ commutes with $T(t)$. Then 
\[
VS_H V^* = VJ_H\Delta_H^{1/2} V^* = J_K\Delta_K^{1/2}=S_K\ ,
\] 
hence $VH=K$ and we conclude that $V\in\E(H,T)$. 

The converse statement that if $V\in\E(H,T)$ then $K\equiv VH$ satisfies $T(t)K\subset K$ for $t\geq 0$ is obvious.

In the irreducible case $K=\f(P)H$ by Cor. \ref{fplane}.
\end{proof}
Note that the unitary $V$ in Prop. \ref{subsub} is not unique (but in the irreducible case $V$ is unique up to a sign). On the other hand the unitary
\[
\Gamma\equiv J_K J_H
\]
is a canonical unitary associated with the inclusion $K\subset H$ and commutes with $T$ because $J_H T(t)J_H = T(-t)$ and  $J_K T(t)J_K = T(-t)$, so $\Gamma\in\E(H,T)$. Clearly
\[
\Gamma = VJ_HV^* J_H\ .
\]
In the irreducible case $V=\f(P)$ for some symmetric inner function $\f$ so $J_H V J_H = V^*$ and $\Gamma = V^2$.
\medskip

We now consider the von Neumann algebraic case.
\begin{corollary}\label{endM}
Let $M$ be a von Neumann algebra on a Hilbert space $\H$ with a cyclic and separating vector $\Omega$ and $T(t) =e^{itP}$ a one-parameter unitary group on $\H$, with positive generator $P$, such that $T(t)MT(-t)\subset M$ for all $t\geq 0$. Suppose that the kernel of $P$ is $\mathbb C \Omega$.

If $V$ is a unitary on $\H$ commuting with $T$ such that $VMV^*\subset M$ then $V|_{\H_0}=(\f_{hk}(P_0))$ where $(\f_{hk}(p))$ is a matrix of functions as in Theorem \ref{char2}. Here $\H_0$ is the orthogonal complement of $\Omega$ in $\H$ and $P_0= P|_{\H_0}$.
\end{corollary}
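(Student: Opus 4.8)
The plan is to reduce Corollary \ref{endM} to Theorem \ref{char2} by passing from the von Neumann algebra $M$ to the standard pair it induces on the complement of the vacuum. First I would invoke the one-particle (standard subspace) structure hidden inside $(M,\Omega)$: since $T(t)MT(-t)\subset M$ for $t\geq 0$ and $P\geq 0$ with $\Ker P=\mathbb C\Omega$, Borchers' theorem for von Neumann algebras applies and gives the commutation relations between the modular group $\Delta_M^{is}$, the modular conjugation $J_M$, and $T(t)$. Restricting to $\H_0=\Omega^\perp$ and setting $P_0=P|_{\H_0}$, $J_0=J_M|_{\H_0}$, $\Delta_0=\Delta_M|_{\H_0}$, these are exactly the data $(J,\Delta,T)$ of Theorem \ref{B0}, so we must identify a standard subspace $H\subset\H_0$ whose modular objects are $\Delta_0,J_0$ and for which $T(t)H\subset H$ for $t\geq 0$; the natural candidate is $H=\overline{M_{sa}\Omega}\cap\H_0$ (the closure of the self-adjoint part of $M$ applied to $\Omega$, minus the vacuum direction), whose modular theory coincides with that of $M$ by Tomita–Takesaki, and which is $T(t)$-invariant for $t\geq 0$ because $T(t)M_{sa}T(-t)\subset M_{sa}$.

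Next I would show that $VMV^*\subset M$ forces $VH\subset H$ on $\H_0$. Since $V$ is unitary and commutes with $T$, it fixes $\Ker P=\mathbb C\Omega$ up to a phase; after the harmless normalization $V\Omega=\Omega$ it restricts to a unitary $V_0$ on $\H_0$ commuting with $T_0=T|_{\H_0}$. From $VMV^*\subset M$ we get $V(\overline{M_{sa}\Omega})\subset\overline{M_{sa}\Omega}$, hence $V_0H\subset H$. Thus the pair $(H,T_0)$ is a standard pair on $\H_0$ — non-degenerate because $\Ker P_0=\{0\}$ — and $V_0\in\E(H,T_0)$.

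Finally I would apply the structural result for the possibly reducible case. The pair $(H,T_0)$ need not be irreducible, but it is non-degenerate, so it is a direct sum of copies of the unique irreducible non-degenerate standard pair; this is exactly the hypothesis of Theorem \ref{char2}. Applying that theorem to $V_0\in\E(H,T_0)$ yields $V_0=(\f_{hk}(P_0))$ with $(\f_{hk}(p))$ unitary for almost every $p>0$, each $\f_{hk}$ the boundary value of a function in $\mathbb H(\Si)$, and each $\f_{hk}$ symmetric, $\bar\f_{hk}(p)=\f_{hk}(-p)$. This is precisely the asserted conclusion.

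The main obstacle — and the step deserving the most care — is the identification of the correct standard subspace and the verification that its modular data are genuinely $\Delta_0$ and $J_0$ rather than merely something conjugate to them, together with checking that $VH\subset H$ really follows from $VMV^*\subset M$ after the vacuum is removed (one must be sure that intersecting $\overline{M_{sa}\Omega}$ with $\H_0$ does not destroy standardness and that $V_0$ preserves the intersection). Once the passage $M\rightsquigarrow H$ is set up so that Borchers' theorem for algebras descends to Theorem \ref{B0} for $(H,T_0)$, the rest is a direct appeal to Theorem \ref{char2}; the only subtlety there is the bookkeeping when $\H_0$ is infinite-dimensional, which is already handled in the proof of that theorem via finite-corner approximation and the maximal abelianness of $\{T\}''$.
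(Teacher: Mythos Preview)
Your proposal is correct and follows essentially the same route as the paper: form the standard subspace $H=\overline{M_{\rm sa}\Omega}$, pass to $H_0=H\ominus\RR\Omega\subset\H_0$, check that the normalized $V$ restricts to an element of $\E(H_0,T_0)$, and invoke Theorem~\ref{char2}. The paper's proof is just these three lines, without the extra verifications you spell out; in particular your explicit appeal to Borchers' theorem for von Neumann algebras and the identification of $\Delta_0,J_0$ is not actually needed, since all one has to feed into Theorem~\ref{char2} is a non-degenerate standard pair $(H_0,T_0)$ and a unitary $V_0\in\E(H_0,T_0)$---the modular machinery is already internal to that theorem.
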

\begin{proof}
With $M_{\rm sa}$ the self-adjoint part of $M$, the closed linear subspace $H\equiv \overline{M_{\rm sa}\Omega}$ is a standard subspace of $\H$ and $VH\subset H$. Thus $(H_0, T)$ is a non-degenerate standard pair of $\H_0$, where  $H_0 = H\ominus \RR \Omega$ and $T_0(t) \equiv T(t)|_{H_0}$.

By Theorem \ref{char2} we then have $V|_{\H_0} = (\f_{hk}(P_0))$ with $(\f_{hk})$ a matrix of functions as in that Theorem. 
\end{proof}
\section{Constructing Boundary QFT on the half-plane}
In this section we introduce the unitary semigroup $\E(\A)$ associated with a local M\"obius covariant net $\A$. By generalizing the construction in \cite{LR1}, each element in $\E(\A)$ produces a Boundary QFT net of local algebras on the half-plane 
\[
M_+\equiv \{\langle t, x\rangle \in\mathbb R^2: x>0\}\ .
\] 
\subsection{The semigroup $\E(\A)$}
Let $\A$ be a local M\"obius covariant net of von Neumann algebras on $\RR$ (see Appendix \ref{CN}); so we have an isotonous map that associates a von Neumann algebra $\A(I)$ on a fixed Hilbert space $\H$
to every interval or half-line $I$ of $\RR$. $\A$ is local namely $\A(I_1)$ and $\A(I_2)$ commute if $I_1$ and $I_2$ are disjoint intervals. 

Denote by $T$ the one-parameter unitary translation group on $\H$. Then $T(t)\A(I)T(-t) =\A(I+t)$, $T$ has positive generator $P$ and $T(t)\Omega =\Omega$ where $\Omega$ is the vacuum vactor, the unique (up to a phase) $T$-invariant vector. By the Reeh-Schlieder theorem $\Omega$ is cyclic and separating for $\A(I)$ for every fixed interval or half-line $I$.
\begin{lemma}\label{defV}
Let $V$ be a unitary on $\H$ commuting with $T$. The following are equivalent:
\begin{itemize}
\item[$(i)$] $V\A(I_2) V^*$ commutes with $\A(I_1)$ for all intervals $I_1 , I_2$ of $\RR$ such that $I_2 > I_1$ ($I_2$ is contained in the future of $I_1$).
\item[$(ii)$] $V\A(a,\infty)V^*\subset \A(a,\infty)$ for every $a\in\RR$.
\item[$(iii)$] $V \A(0,\infty)V^*\subset  \A(0,\infty)$.
\end{itemize}
\end{lemma}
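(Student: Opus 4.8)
The plan is to prove the cycle of implications $(i)\Rightarrow(ii)\Rightarrow(iii)\Rightarrow(i)$, using translation covariance of $\A$ and $[V,T]=0$ to move between half-lines, and using the fact that the algebra $\A(a,\infty)$ of a half-line can be recovered from the algebras of intervals it contains (and, for the last step, from its commutant).

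First, for $(i)\Rightarrow(ii)$: fix $a\in\RR$. The half-line $(a,\infty)$ is the increasing union of the bounded intervals $I_2=(a,b)$ with $b>a$, so $\A(a,\infty)=\overline{\bigcup_{b>a}\A(a,b)}^{\,w}$. For each such $I_2$ and each bounded interval $I_1\subset(-\infty,a]$ we have $I_2>I_1$, hence by $(i)$ the algebra $V\A(a,b)V^*$ commutes with $\A(I_1)$; letting $I_1$ exhaust $(-\infty,a)$ gives $V\A(a,b)V^*\subset\A(-\infty,a)'$. By Haag duality on the line (valid for a local Möbius covariant net, as recalled in Appendix \ref{CN}, in the form $\A(-\infty,a)'=\A(a,\infty)$) we conclude $V\A(a,b)V^*\subset\A(a,\infty)$ for every $b$, and taking the von Neumann algebra generated by these yields $V\A(a,\infty)V^*\subset\A(a,\infty)$. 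The implication $(ii)\Rightarrow(iii)$ is immediate, taking $a=0$.

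For $(iii)\Rightarrow(i)$: Since $V$ commutes with $T(t)$ and $T(t)\A(0,\infty)T(-t)=\A(t,\infty)$, conjugating $(iii)$ by $T(t)$ gives $V\A(t,\infty)V^*\subset\A(t,\infty)$ for every $t\in\RR$, i.e.\ $(ii)$. Now take intervals $I_2>I_1$ and choose $a\in\RR$ with $I_1\subset(-\infty,a)$ and $I_2\subset(a,\infty)$. Then $V\A(I_2)V^*\subset V\A(a,\infty)V^*\subset\A(a,\infty)$, while $\A(I_1)\subset\A(-\infty,a)=\A(a,\infty)'$ by Haag duality, so $V\A(I_2)V^*$ commutes with $\A(I_1)$, which is $(i)$.

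The only nontrivial ingredient is Haag duality for half-lines, $\A(a,\infty)'=\A(-\infty,a)$; I expect this to be the main point to cite carefully, since it is what lets the "commutes with everything to the past" condition in $(i)$ be converted into the sharp inclusion into $\A(a,\infty)$ in $(ii)$, and conversely. For a local Möbius covariant net this holds (it is the standard Haag duality for intervals on $S^1$, transported to $\RR$), and it is recorded in Appendix \ref{CN}; everything else is routine covariance bookkeeping and taking weak closures of increasing unions.
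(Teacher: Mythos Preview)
Your proof is correct and follows essentially the same approach as the paper's: translation covariance to pass between $(ii)$ and $(iii)$, additivity to reduce half-lines to bounded intervals, and Haag duality $\A(-\infty,a)'=\A(a,\infty)$ to convert the commutation condition $(i)$ into the inclusion condition $(ii)$/$(iii)$. The only difference is organizational: the paper proves $(ii)\Leftrightarrow(iii)$, $(ii)\Rightarrow(i)$, and $(i)\Rightarrow(iii)$, while you run the cycle $(i)\Rightarrow(ii)\Rightarrow(iii)\Rightarrow(i)$ (and your $(iii)\Rightarrow(i)$ in fact passes through $(ii)$ anyway), so the logical content is identical.
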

\begin{proof}
Clearly $(ii)\Leftrightarrow (iii)$ by translation covariance as $V$ commutes with $T$. Moreover $(ii)\Rightarrow (i)$ because  $V\A(I_2) V^*\subset  V\A(\tilde I_2) V^*\subset \A(\tilde I_2)$ where $\tilde I_2$ is the smallest right half-line containing $I_2$. Finally, assuming  $(i)$, by additivity we have that $V \A(0,\infty)V^*$ commutes with $\A(-\infty,0)$, so $(iii)$ follows  by duality:  $V \A(0,\infty)V^*\subset\A(-\infty,0)'=\A(0,\infty)$.
\end{proof}
Note that a unitary $V$ in the above Lemma \ref{defV} fixes $\Omega$ up to a phase as it commutes with $T$. We will assume that indeed $V\Omega=\Omega$.
\medskip

\noindent
Let $\A$ be a local M\"obius covariant net of von Neumann algebras on $\mathbb R$ on the Hilbert space $\H$. The unitaries $V$ on $\H$ satisfying the equivalent conditions in Lemma \ref{defV}, normalized with $V\Omega = \Omega$, form a \emph{semigroup} that we denote by $\E(\A)$ (or $\E(\A, T))$.
\medskip

\noindent
Note that $\E(\A,T)\subset \E(H,T)$ where $H\equiv \overline{\A(0,\infty)_{\rm sa}\Omega}$.
As a consequence of Corollary \ref{endM} every unitary $V$ in $\E(\A)$ must have the form $V|_{\H_0}=(\f_{hk}(P_0))$ there described on the orthogonal complement $\H_0$ of $\Omega$.

Examples of unitaries $V$ in $\E(\A)$ are easily obtained by taking either $V$ to implement an internal symmetry (first kind gauge group element), namely $V\A(I)V^* = \A(I)$ for all intervals $I$, or by taking $V=T(t)$ a translation unitary with $t\geq 0$. We give now an example of $V$ in $\E(\A)$ not of this form.
\medskip
  
\noindent {\bf Example.} (See Sect. 8.2 of \cite{BDL}.)
Let $\O$ be a double cone in the Minkowski spacetime $\mathbb R^{d+1}$ with $d$ odd. We denote here by $\A(\O)$ the local von Neumann algebra associated with $\O$ by the $d+1$-dimensional  scalar, massless, free field.

With $I$ an interval of the time-axis $\{{\bf x}\equiv\langle t,x_1,\dots x_d\rangle: x_1=\cdots =x_d = 0\}$ we set
\[
\A_0(I)\equiv \A(\O_I)
\]
where $\O_I$ is the double cone $I''\subset\mathbb R^{d+1}$, the causal envelope of $I$. Then $\A_0$ is a local translation covariant net on $\mathbb R$. (Indeed $\A_0$ extends to a local M\"obius covariant net on $S^1$.) With $U$ the translation unitary group of $\A$, the translation unitary group of $\A_0$ is $T(t)= U(t,0,\dots,0)$.

Let $V\equiv U({\bf x})$ be the unitary corresponding to a positive time-like or light-like translation vector $ {\bf x} = \langle t,x_1,\dots x_d\rangle$ for $\A$, thus $t^2 \geq \sum_{k=1}^d x^2_k$. Then $V\in \E(\A_0, T)$. Indeed $V\A_0(0,\infty)V^* = V\A(V_+)V^* = \A(V_+  + {\bf x})\subset \A(V_+)= \A_0(0,\infty)$, where $V_+$ denotes the forward light cone.

The net $\A_0$ is described as follows:
\[
\A_0 =\bigotimes_{k=0}^{\infty} N_d(k+1)\A^{(k)}
\]
where $\A^{(k)}$ is the local M\"obius covariant net on $S^1$ associated with the $k^{\rm th}$-derivative of the $U(1)$-current and  the multiplicity factor $N_d\big(k + \frac{d-1}{2}\big)$ is the dimension of the space of harmonic spherical functions of degree $k$ on $\RR^d$.
\bigskip

\noindent Before further considerations we characterize the unitaries in $\E(\A)$ implementing internal symmetries.
\begin{proposition}
Let $\A$ be a local M\"obius covariant net of von Neumann algebras on $\RR$ and $U$ the associated unitary representation of the M\"obius group. Then $V\in\E(\A)$ commutes with $U$ if and only if $V$ implements an internal symmetry of $\A$.
\end{proposition}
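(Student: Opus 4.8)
The plan is to establish the two implications separately, the common engine in both being the Bisognano--Wichmann property, which reconstructs the M\"obius symmetry $U$ directly from the net $\A$ and its vacuum $\Omega$. Throughout, for an interval $I$ let $\delta_I$ denote the one-parameter ``dilation'' subgroup of the M\"obius group attached to $I$ (the one fixing the endpoints of $I$); recall that Bisognano--Wichmann identifies the modular group of $(\A(I),\Omega)$ with $t\mapsto U(\delta_I(-2\pi t))$, and that the subgroups $\delta_I$, as $I$ ranges over intervals, generate the whole M\"obius group (in fact finitely many of them already do).

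First I would handle the implication ``$V$ implements an internal symmetry $\Rightarrow$ $V$ commutes with $U$''. If $V\Omega=\Omega$ and $V\A(I)V^*=\A(I)$ for all $I$, then $V$ maps the standard subspace $\overline{\A(I)_{\rm sa}\Omega}$ onto itself for each $I$, hence (by uniqueness of the polar decomposition of $S_{\A(I)}$) commutes with $\Delta_{\A(I)}$, that is with every $U(\delta_I(s))$. Since the $\delta_I$ generate the M\"obius group, $V$ commutes with $U$.

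For the converse, suppose $V\in\E(\A)$ commutes with $U$; by Lemma \ref{defV} this gives $V\A(0,\infty)V^*\subset\A(0,\infty)$, and the task is to promote this one-sided inclusion to an equality. I would use the M\"obius transformation $r\colon x\mapsto -1/x$, which carries the half-line $(0,\infty)$ onto $(-\infty,0)$, so that $U(r)\A(0,\infty)U(r)^*=\A(-\infty,0)$; conjugating the inclusion by $U(r)$ and using that $V$ commutes with $U(r)$ yields
\[
V\A(-\infty,0)V^*=U(r)\bigl(V\A(0,\infty)V^*\bigr)U(r)^*\subset\A(-\infty,0)\ .
\]
Taking commutants and invoking Haag duality $\A(-\infty,0)'=\A(0,\infty)$ turns this into $V\A(0,\infty)V^*\supset\A(0,\infty)$, hence $V\A(0,\infty)V^*=\A(0,\infty)$. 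Finally, for an arbitrary interval $I$ I would pick $g$ in the M\"obius group with $g\cdot(0,\infty)=I$ (possible by transitivity on arcs of $S^1$) and conjugate by $U(g)$, again using that $V$ commutes with $U$, to obtain $V\A(I)V^*=\A(I)$; together with $V\Omega=\Omega$, which is built into the definition of $\E(\A)$, this says exactly that $V$ implements an internal symmetry.

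I expect no serious obstacle: the only non-elementary inputs are the Bisognano--Wichmann property and Haag duality for M\"obius covariant nets, both standard. The one place asking for a small amount of care is the claim, used in the first implication, that the dilation subgroups of (finitely many) intervals already generate the M\"obius group, together with the matching of normalizations between the modular data and $U$; both are routine.
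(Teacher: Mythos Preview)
Your proof is correct and follows essentially the same approach as the paper: Bisognano--Wichmann for the forward implication, and M\"obius transitivity on intervals combined with Haag duality for the converse. The only cosmetic difference is the order of operations in the converse---the paper first uses transitivity to obtain $V\A(I)V^*\subset\A(I)$ for \emph{all} intervals $I$ (hence also for $I'$) and then applies Haag duality once, whereas you apply a single M\"obius element $r$ to reach $(-\infty,0)$, invoke duality to upgrade to equality on $(0,\infty)$, and only then use transitivity; the content is the same.
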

\begin{proof}
We know that if $V$ implements an internal symmetry then $V$ commutes with $U$ as a consequence of the Bisognano-Wichmann property, see \cite{LN}. Conversely if $V$ commutes with $U$ then $V\A(I)V^*\subset \A(I)$ for every interval $I$ os $S^1$ because the M\"obius group acts transitively on open intervals of $S^1$; in particular also $V\A(I')V^*\subset \A(I')$, thus $V\A(I)V^*\supset \A(I)$ by Haag duality, namely $V$ implements an internal symmetry.
\end{proof}
\subsection{Translation covariant Boundary QFT}
\label{Boundary QFT}
Consider now the 2-dimensional Minkowski spacetime $M$. Let $I_1 , I_2$ be intervals of time-axis such that $I_2 > I_1$ and let $\O= I_1 \times I_2$ be the double cone (rectangle) of $M_+$ associated with $I_1 , I_2$, namely a point $\langle t,x\rangle$ belongs to $\O$ iff $x-t\in I_1$ and $x+t\in I_2$.

We shall say that a double cone $\O= I_1\times I_2$ of $M_+$ is \emph{proper} if it has positive distance from the time axis $x=0$, namely if the closures of $I_1$ and $I_2$ have empty intersection. We shall denote by $\K_+$ the set of proper double cones of $M_+$.
\medskip

\noindent
\emph{A local, (time) translation covariant Boundary QFT net of von Neumann algebras on $M_+$} on a Hilbert space $\H$ is a triple $(\B_+, U, \Omega)$ where
\begin{itemize}
\item $\B_+$ is a isotonous map 
\[
\O\in\K_+\to \B_+(\O)\subset B(\H)
\] 
where $\B_+(\O)$ is a von Neumann algebra on $\H$;
\item
$U$ is a one-parameter group on $\H$ with positive generator $P$ such that
\[
U(t)\B_+(\O)U(-t) = \B_+(\O+\langle t,0\rangle),\quad \O\in\K_+ ,\ t\in\RR;
\]
\item
$\Omega\in\H$ is a unit vector such that $\mathbb C\Omega$ are the $U$-invariant vectors and $\Omega$ is cyclic and separating for $\B_+(\O)$ for each fixed $\O\in\K_+$.
\item $\B_+(\O_1)$ and $\B_+(\O_2)$ commute if $\O_1,\O_2\in\K_+$ are spacelike separated.
\end{itemize}
\subsection{A construction by an element of the semigroup $\E(\A)$}
\label{Vconstruction}
Let now $\A$ be a local, M\"obius covariant net of Neumann algebras on the time-axis $\mathbb R$ of $M_+$. With $V$ a unitary in $\E(\A)$ we set
\[
\A_V(\O) \equiv \A(I_1)\vee V\A(I_2)V^*
\]
where $I_1 , I_2$ are intervals of time-axis such that $I_2 > I_1$ and $\O= I_1 \times I_2$. 
\begin{proposition}\label{AV}
$\A_V$ is a local, translation covariant Boundary QFT net of von Neumann algebras on $M_+$.
\end{proposition}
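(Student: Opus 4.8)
The plan is to verify, one by one, the four defining properties of a local, (time) translation covariant Boundary QFT net for the assignment $\O=I_1\times I_2\mapsto\A_V(\O)=\A(I_1)\vee V\A(I_2)V^*$, $\O\in\K_+$, working in the geometric conventions of \cite{LR1}. In those conventions a proper double cone $\O\in\K_+$ is described by a pair of intervals with $\bar I_1\cap\bar I_2=\emptyset$ and $I_1<I_2$; the time translation $\langle t,x\rangle\mapsto\langle t+s,x\rangle$ carries this pair to $(I_1+s,I_2+s)$; and, since $V$ commutes with the translation group $T$ of $\A$, the time translation of the net will be implemented by $T$ itself. That $\A_V(\O)$ is a von Neumann algebra is immediate from the definition of $\vee$.

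For \textbf{isotony}: if $\O\subseteq\tilde\O$ then $I_1\subseteq\tilde I_1$ and $I_2\subseteq\tilde I_2$, so isotony of $\A$ gives $\A_V(\O)\subseteq\A_V(\tilde\O)$. For \textbf{translation covariance}: from $[V,T]=0$ one has $T(s)\A(I_1)T(-s)=\A(I_1+s)$ and $T(s)V\A(I_2)V^*T(-s)=V\A(I_2+s)V^*$, hence $T(s)\A_V(\O)T(-s)=\A_V(\O+\langle s,0\rangle)$; the generator $P$ of $T$ is positive and $\mathbb C\Omega$ is the space of $T$-invariant vectors, both inherited from $\A$. For the \textbf{vacuum}: $\Omega$ is cyclic for $\A_V(\O)$ since $\A_V(\O)\supseteq\A(I_1)$ and $\Omega$ is already cyclic for $\A(I_1)$ by Reeh--Schlieder; and $\Omega$ is separating for $\A_V(\O)$ because, taking $J=(-\infty,\inf I_1)$, the algebra $\A(J)$ commutes with $\A(I_1)$ by locality of $\A$, and with $V\A(I_2)V^*$ since $V\A(I_2)V^*\subseteq V\A(\inf I_2,\infty)V^*\subseteq\A(\inf I_2,\infty)\subseteq\A(J)'$ by property $(ii)$ of Lemma \ref{defV}; thus $\A(J)\subseteq\A_V(\O)'$ and, $\Omega$ being cyclic for $\A(J)$, it is cyclic for $\A_V(\O)'$.

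\textbf{Locality} is the main point. Let $\O=I_1\times I_2$ and $\O'=I_1'\times I_2'$ be spacelike separated elements of $\K_+$. In these coordinates this forces, after possibly interchanging $\O$ and $\O'$, that $I_1<I_1'$ and $I_2>I_2'$; combined with $I_1<I_2$ and $I_1'<I_2'$ this yields the nested ordering $I_1<I_1'<I_2'<I_2$ (the interchanged case giving the ``side by side'' ordering $I_1'<I_1<I_2<I_2'$). In either case the four intervals are pairwise disjoint and $I_2>I_1'$, $I_2'>I_1$. Consequently $\A(I_1)$ commutes with $\A(I_1')$, and $\A(I_2)$ with $\A(I_2')$, by locality of $\A$ — so, conjugating by $V$, $V\A(I_2)V^*$ commutes with $V\A(I_2')V^*$; while $\A(I_1)$ commutes with $V\A(I_2')V^*$, and $V\A(I_2)V^*$ with $\A(I_1')$, by Lemma \ref{defV}$(i)$, since $I_2'>I_1$ and $I_2>I_1'$ respectively. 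Therefore $\A_V(\O)=\A(I_1)\vee V\A(I_2)V^*$ commutes with $\A_V(\O')$.

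The delicate point is the geometry: translating ``spacelike separated in $M_+$'' into the interval orderings above, and checking that this covers all cases, including the degenerate ones where the closures of the two double cones touch, is where I expect the real work to be; the coordinate bookkeeping that lets a single unitary $T$ implement the time translation requires comparable care. Once these are settled, the only genuinely operator-algebraic input is the commutation property of $\E(\A)$ recorded in Lemma \ref{defV}, which was tailored precisely so that the two ``cross'' commutations in the locality step go through, and everything else reduces to isotony, locality and Reeh--Schlieder for $\A$ together with $[V,T]=0$.
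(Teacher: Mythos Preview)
Your argument is correct and follows essentially the same route as the paper: isotony is immediate, translation covariance comes from $[V,T]=0$, and locality is reduced to four pairwise commutations, two coming from locality of $\A$ and two from the defining property of $\E(\A)$ recorded in Lemma \ref{defV}$(i)$. Your nested ordering $I_1<I_1'<I_2'<I_2$ is exactly the paper's $I_4>I_2>I_1>I_3$ after matching labels (the paper puts $\O_2=I_3\times I_4$ in the right spacelike complement of $\O_1=I_1\times I_2$), so the geometry you were worried about is precisely what the paper asserts without further comment; in the light-cone parametrization used here the two cases you list exhaust the possibilities, and the boundary (``touching'') cases cause no trouble since locality of $\A$ and Lemma \ref{defV}$(i)$ only require disjointness of the open intervals.

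Your treatment is in fact more complete than the paper's on one point: the paper's proof checks isotony, locality and translation covariance but does not explicitly verify that $\Omega$ is cyclic and separating for each $\A_V(\O)$. Your argument for this --- cyclicity via $\A(I_1)\subset\A_V(\O)$ and Reeh--Schlieder, separating via $\A(J)\subset\A_V(\O)'$ for a left half-line $J$ using Lemma \ref{defV}$(ii)$ --- is the natural one and fills that gap. The phrase ``side by side'' for the interchanged case is slightly misleading (it is still a nested configuration, just with the roles of $\O$ and $\O'$ swapped), but this is cosmetic.
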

\begin{proof}
Isotony of $\A_V$ is obvious. Locality means that $\A_V(\O_1)$ commutes elementwise with $\A_V(\O_2)$ if the double cone $\O_2 = I_3 \times I_4$ is contained in the spacelike complement of the double cone $\O_1 = I_1 \times I_2$. Say $\O_2$ is contained in the right spacelike complement of $\O_1$. Then $I_4 > I_2 > I_1 >  I_3$. Now $V\A(I_4)V^*$ commutes with $V\A(I_2)V^*$ by the locality of $\A$ and with $\A(I_1)$ because $V\in\E(\A)$; analogously $\A(I_3)$ commutes with $\A(I_1)$ by locality and with $V\A(I_2)V$ because $V\in\E(\A)$. Therefore $\A(I_3)\vee V\A(I_4)V^*$ and $\A(I_1)\vee V\A(I_2)V^*$ commute. Finally translation covariance with respect to $T$ follows at once because $V$ commutes with $T$ by assumptions.
\end{proof}
If $V=1$ the net $\A_V$ is the net $\A_+$ in \cite{LR1}.
\begin{corollary}\label{uniq}
Let $V_1 , V_2\in\E(\A)$. The following are equivalent:
\begin{itemize}
\item[$(i)$] $\A_{V_1} = \A_{V_2}$;
\item[$(ii)$]  $V_2 = V_1 V$ with $V$ implementing an internal symmetry of $\A$;
\item[$(iii)$]  $V_1\A(0,\infty)V_1^* = V_2\A(0,\infty)V_2^*$,
\end{itemize}
\end{corollary}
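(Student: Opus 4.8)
The plan is to prove the chain of implications $(ii)\Rightarrow(i)\Rightarrow(iii)\Rightarrow(ii)$, which is the most economical route. The implication $(ii)\Rightarrow(i)$ is the easiest: if $V_2 = V_1 V$ with $V$ implementing an internal symmetry, then for any interval $I$ we have $V_2\A(I)V_2^* = V_1 V\A(I)V^* V_1^* = V_1\A(I)V_1^*$, so $\A_{V_1}(\O) = \A(I_1)\vee V_1\A(I_2)V_1^* = \A(I_1)\vee V_2\A(I_2)V_2^* = \A_{V_2}(\O)$ for every double cone $\O = I_1\times I_2$.

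For $(i)\Rightarrow(iii)$, the point is that $V_j\A(0,\infty)V_j^*$ should be recoverable from the net $\A_{V_j}$ by an appropriate limit (additivity). Concretely, fixing an interval $I_1$ in the far past and letting $I_2$ exhaust $(0,\infty)$, the von Neumann algebra generated by $\bigcup_{I_2}\A_{V_j}(I_1\times I_2) = \A(I_1)\vee\bigl(\bigvee_{I_2}V_j\A(I_2)V_j^*\bigr)$ equals $\A(I_1)\vee V_j\A(0,\infty)V_j^*$ by additivity of $\A$ (half-line algebras are generated by the bounded subintervals). Then intersecting over all such $I_1$ (or using that $\A(I_1)$ can be made to commute with and be "removed" — more carefully, one takes $I_1$ to shrink to a point at $-\infty$ so $\A(I_1)\to\CC$, using lower continuity of the net at $-\infty$, which holds for Möbius covariant nets) isolates $V_j\A(0,\infty)V_j^*$. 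Since $\A_{V_1}=\A_{V_2}$ by hypothesis, the two reconstructed algebras coincide, giving $(iii)$. I expect this to be the main obstacle: one must argue cleanly that the half-line algebra $V_j\A(0,\infty)V_j^*$ is intrinsically determined by the family of local algebras, which requires invoking additivity and the behaviour of the net at the boundary of the half-plane — a point where the authors may instead phrase things in terms of the intrinsic definition of the "right wedge" algebra of $\A_{V_j}$.

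For $(iii)\Rightarrow(ii)$, set $V\equiv V_1^* V_2$; this is a unitary commuting with $T$ and fixing $\Omega$. From $(iii)$ we get $V_1\A(0,\infty)V_1^* = V_2\A(0,\infty)V_2^*$, hence $V\A(0,\infty)V^* = V_1^* V_2\A(0,\infty)V_2^* V_1 = V_1^*\bigl(V_1\A(0,\infty)V_1^*\bigr)V_1 = \A(0,\infty)$, and this equality is a genuine equality of von Neumann algebras (not just inclusion) because $V\A(0,\infty)V^*$ and $\A(0,\infty)$ have the same relation to $\A(-\infty,0)$; more directly, $V\A(0,\infty)V^* = \A(0,\infty)$ by $(iii)$. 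Translating by $T(t)$ and using $[V,T]=0$ gives $V\A(a,\infty)V^* = \A(a,\infty)$ for every $a\in\RR$; taking complements via Haag duality on the line, $V\A(-\infty,a)V^* = \A(-\infty,a)$ as well. Since every bounded interval is an intersection $I = (a,\infty)\cap(-\infty,b)$, we get $V\A(I)V^* \subset \A(a,\infty)\cap\A(-\infty,b) = \A(I)$ using Haag duality/strong additivity, and applying the same to $V^*$ (which also lies in $\E(\A)$ here since $V\A(0,\infty)V^*=\A(0,\infty)$ gives $V^*\A(0,\infty)V = \A(0,\infty)$) yields $V\A(I)V^* = \A(I)$ for all intervals $I$; that is, $V$ implements an internal symmetry, and $V_2 = V_1 V$ as required. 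The only delicate point here is whether $\A(I)$ for a bounded interval is exactly $\A(a,\infty)\cap\A(-\infty,b)$, i.e. whether the Möbius covariant net on $\RR$ satisfies the requisite duality; this is standard (Haag duality on the real line for Möbius covariant nets, see Appendix \ref{CN}) and I would cite it rather than reprove it.
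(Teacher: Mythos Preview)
Your implications $(ii)\Rightarrow(i)$ and $(iii)\Rightarrow(ii)$ are essentially what the paper does (the paper simply cites Lemma~\ref{defV} for $(iii)\Leftrightarrow(ii)$; your unpacking of that, including the flagged caveat about recovering bounded-interval algebras from half-line algebras, is in the same spirit).

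The genuine problem is your argument for $(i)\Rightarrow(iii)$. You correctly build $\A(I_1)\vee V_j\A(0,\infty)V_j^*$ by additivity, but then propose to ``remove'' $\A(I_1)$ by letting $I_1$ shrink toward $-\infty$ and invoking $\bigcap_{I_1}\A(I_1)=\CC$. This does \emph{not} yield
\[
\bigcap_{I_1}\bigl(\A(I_1)\vee V_j\A(0,\infty)V_j^*\bigr)=V_j\A(0,\infty)V_j^*\, :
\]
intersection does not distribute over the join with a fixed algebra, and no general ``lower continuity at $-\infty$'' principle supplies this identity. So this step is a real gap, and you yourself seem uneasy about it.

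The paper sidesteps the difficulty with a one-line observation you are missing: since $V_j\in\E(\A)$ one has $V_j\A(0,\infty)V_j^*\subset\A(0,\infty)$, hence by Haag duality $\A(-\infty,0)\subset V_j\A(-\infty,0)V_j^*$. Now restrict to double cones $\O=I_1\times I_2$ with \emph{both} $I_1,I_2\subset(-\infty,0)$. Then $\A(I_1)\subset\A(-\infty,0)\subset V_j\A(-\infty,0)V_j^*$ is already absorbed, so
\[
\bigvee_{\substack{I_1,I_2\subset(-\infty,0)\\ I_1<I_2}}\A_{V_j}(I_1\times I_2)
=\bigvee_{I_2\subset(-\infty,0)} V_j\A(I_2)V_j^*
= V_j\A(-\infty,0)V_j^*
\]
by additivity alone---no intersection is needed. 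Thus $\A_{V_1}=\A_{V_2}$ forces $V_1\A(-\infty,0)V_1^*=V_2\A(-\infty,0)V_2^*$, and taking commutants gives $(iii)$. Replacing your limiting argument by this absorption trick closes the gap.
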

\begin{proof}
$(iii) \Leftrightarrow (ii)$ follows by Lemma \ref{defV} and $(ii) \Rightarrow (i)$ is immediate.  
$(i) \Rightarrow (iii)$: note that the von Neumann algebra $V_i\A(-\infty,0)V_i^*$ is generate by the von Neumann algebras $\A_{V_i}(\O)$ as $\O=I_1\times I_2\in\K_+$ varies with $I_1,I_2\subset (-\infty,0)$; therefore $\A_{V_1} = \A_{V_2} \implies 
V_1\A(-\infty,0)V_1^*=V_2\A(-\infty,0)V_2^* \implies V_1\A(0,\infty)V_1^* = V_2\A(0,\infty)V_2^*$ by duality.
\end{proof}
So we have constructed a map:
\[
\text{local M\"ob-covariant net $\A$ on $\mathbb R$}\quad \& \quad  V\in\E(\A)\quad \mapsto\quad \text{BQFT net $\A_V$ on $M_+$} 
\]
and, given $\A$, the map $V\in\E(\A)\mapsto \A_V$ is one-to-one modulo internal symmetries. 
\bigskip

\noindent
We shall say that two nets $\B_1$, $\B_2$ on $M_+$, acting on the Hilbert spaces $\H_1$ and $\H_2$, are \emph{locally isomorphic} if for every proper double cone $\O\in\K_+$  there is an isomorphism $\Phi_\O:\B_1(\O)\to\B_2(\O)$ such that
\[
\Phi_{\tilde\O} |_{\B_1(\O)} = \Phi_\O 
\]
if $\O,\tilde\O\in\K_+$, $\O\subset \tilde\O$ and 
\[
 U_2(t)\Phi_\O(X) U_2(-t) =\Phi_{\O+t}(U_1(t)XU_1(-t)),\quad X\in \B_1(\O)\ ,
\]
with $U_1$ and $U_2$ the corresponding time translation unitary groups on $\H_1$ and $\H_2$.
\begin{proposition}
\label{lociso}
Let $\A$ be a local M\"obius covariant net of Neumann algebras on $\mathbb R$ with the split property. If $V$ is a unitary in $\E(\A)$ the net $\A_{V}$ is locally isomorphic to $\A_+$.
\end{proposition}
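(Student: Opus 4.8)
The plan is to build the local isomorphism $\Phi_\O:\A_V(\O)\to\A_+(\O)$ directly from the split property, exploiting the fact that for a proper double cone $\O=I_1\times I_2$ the closures of $I_1$ and $I_2$ are disjoint, so that $\overline{I_1}$ and $\overline{I_2}$ admit a ``collar'' separating them. The key observation is that for $V=1$ one has $\A_+(\O)=\A(I_1)\vee\A(I_2)$ while $\A_V(\O)=\A(I_1)\vee V\A(I_2)V^*$, and since $V\in\E(\A)$ one has $V\A(I_2)V^*\subset V\A(a,\infty)V^*\subset\A(a,\infty)$ for any $a$ to the right of $I_1$; in particular $\A(I_1)$ and $V\A(I_2)V^*$ commute, and both the pair $(\A(I_1),\A(I_2))$ and the pair $(\A(I_1),V\A(I_2)V^*)$ are ``split'' in the sense that $\A(I_1)$ and the second algebra sit inside a tensor-product splitting induced by an intermediate type I factor associated with the gap between $I_1$ and $I_2$. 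So the first step is: using the split property for $\A$, fix an intermediate type I factor $\N$ with $\A(I_1)\subset\N\subset\A(I_3)'$ where $I_3$ is an interval containing $\overline{I_2}$ but disjoint from $\overline{I_1}$; this gives a canonical isomorphism $\A(I_1)\vee\A(I_2)\cong\A(I_1)\,\overline\otimes\,\A(I_2)$.

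The second step is to produce the same kind of splitting for $\A_V(\O)$. Since $V\A(I_2)V^*\subset\A(a,\infty)$ for $a$ between $I_1$ and $I_2$, and since $V$ fixes $\Omega$, the algebras $\A(I_1)$ and $V\A(I_2)V^*$ are again ``spatially separated'' by the same split structure for $\A$ restricted to the half-line configuration: one gets a type I factor $\N$ with $\A(I_1)\subset\N$ and $V\A(I_2)V^*\subset\N'$, hence $\A_V(\O)\cong\A(I_1)\,\overline\otimes\,V\A(I_2)V^*$. Because $V$ is unitary, $\mathrm{Ad}\,V:\A(I_2)\to V\A(I_2)V^*$ is an isomorphism, so composing gives
\[
\A_V(\O)\;\cong\;\A(I_1)\,\overline\otimes\,V\A(I_2)V^*\;\xrightarrow{\ \id\otimes\mathrm{Ad}\,V^*\ }\;\A(I_1)\,\overline\otimes\,\A(I_2)\;\cong\;\A_+(\O),
\]
and we take $\Phi_\O$ to be this composite isomorphism.

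The third and most delicate step is checking compatibility: that $\Phi_{\tilde\O}\res\A_V(\O)=\Phi_\O$ for $\O\subset\tilde\O$ and that $\Phi_\O$ intertwines the time-translation actions. The subtlety here is the choice of intermediate type I factors, which is not canonical; one must choose them coherently — e.g. functorially in the nested intervals, using that the split isomorphism for $\A$ can be implemented by a fixed vector (the standard vector for the inclusion) so that increasing $\tilde\O$ only enlarges the tensor factors compatibly. I expect the main obstacle to be exactly this coherence: one needs a version of the split property that is uniform enough that the isomorphisms $\A(I)\vee\A(J)\cong\A(I)\,\overline\otimes\,\A(J)$ are compatible under inclusions of intervals and under translations. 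This is handled by choosing, for each configuration, the canonical intermediate factor furnished by the \emph{standard implementation} of the split inclusion (so that $\Phi$ is realized by conjugation with a specific unitary depending measurably/continuously on the data), whereupon both the inclusion-compatibility and the relation $U_2(t)\Phi_\O(X)U_2(-t)=\Phi_{\O+t}(U_1(t)XU_1(-t))$ follow from translation covariance of $\A$ together with the fact that $V$ commutes with $T$, so that translating $\O$ and conjugating by $V$ commute at the level of the algebras. Once the coherent choice is in place, locality of $\A_+$ and of $\A_V$ is automatic since both are built from the same tensor-product data, and the proposition follows.
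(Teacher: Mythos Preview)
Your approach is essentially the paper's: split the pair $(\A(I_1),\A(I_2))$ into a tensor product and transport by $\id\otimes\Ad V$. The paper streamlines this by one device you almost reach but do not fully exploit. Instead of the bounded interval $I_2$ (or an enlargement $I_3$), it takes $\tilde I_2$ to be the smallest right \emph{half-line} containing $I_2$. Since $V\in\E(\A)$ one then has $V\A(\tilde I_2)V^*\subset\A(\tilde I_2)$, so a \emph{single} split isomorphism
\[
\Psi:\A(I_1)\vee\A(\tilde I_2)\longrightarrow\A(I_1)\otimes\A(\tilde I_2)
\]
covers both $\A_+(\O)$ and $\A_V(\O)$ as subalgebras, and $\Phi_\O$ is simply the restriction of $\Psi^{-1}\circ(\id\otimes\Ad V)\circ\Psi$. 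This dissolves the mismatch between your Steps~1 and~2: with your choice $\A(I_1)\subset\N\subset\A(I_3)'$ for a bounded interval $I_3\supset\overline{I_2}$, you only get $\A(I_3)\subset\N'$, whereas $V\A(I_2)V^*$ is merely known to lie in $\A(a,\infty)$, not in $\A(I_3)$; so the same $\N$ need not serve in Step~2. Passing to the half-line fixes this without further choices.

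Regarding the coherence under inclusions and translations that you rightly flag in Step~3: the paper is just as brief here, asserting that the ``natural'' split isomorphism yields the required consistency. Your proposal to anchor this in the standard implementation of the split inclusion is the correct way to make it precise; with the half-line formulation the check is slightly lighter, since enlarging $\O$ only enlarges $I_1$ and pushes the left endpoint of $\tilde I_2$ further left, and translation covariance follows from $[V,T(t)]=0$ together with covariance of the canonical split isomorphism.
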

\begin{proof} 
Let $I_2 > I_1$ be intervals with disjoint closures and $\O= I_1\times I_2$. Let $\tilde I_2$ be the smallest right half-line containing $I_2$. By the split property there is a natural isomorphism 
\[
\Psi : \A(I_1)\vee \A(\tilde I_2)\to \A(I_1)\otimes \A(\tilde I_2)
\]
with $\Psi(ab) = a\otimes b$ for $a\in\A(I_1)$, $b\in\A(\tilde I_2)$. 

Then the commutative diagram 
\[
\CD
\A_+(\O)\subset   \A(I_1)\vee\A(\tilde I_2)        @>\Psi >>        \A(I_1)\otimes\A(\tilde I_2)    \\
@V{\Phi_\O}VV				             @VV {\rm id}\otimes{{\rm Ad} V} V  \\
\A_V(\O)\subset   \A(I_1)\vee V\A(\tilde I_2)V^*         @>>\Psi >      \A(I_1)\otimes V\A(\tilde I_2)V^* 
\endCD 
\]
defines a natural isomorphism $\Phi_\O:\A_+(\O)\to \A_V(\O)$ and the family $\{\Phi_\O: \O\in\K_+\}$ has the required consistency properties.
\end{proof}
As an immediate consequence, if $V_t$ is a one-parameter semigroup of unitaries in $\E(\A)$, the family $\A_{V_t}$ gives a \emph{deformation} of the conformal net $\A_+$ on $M_+$ with translation covariant nets on $M_+$ that are locally isomorphic to $\A_+$. \medskip

Let again $\A^{(0)}$ be the M\"obius covariant net on $\RR$ associated with by the $U(1)$-current. In other words $\A^{(0)}$ is generated by the $U(1)$-current $j$ 
\[
\A^{(0)}(I) = \left\{W(f)\equiv\exp\left (-i\int j(x)f(x){\rm d}x\right ): {\rm supp}f\subset I\right\}''\ ,
\]
and similarly $\A^{(k)}$ by the net generated by the $k$-derivative of $j$.

Then $\A^{(k)}$ is the net obtained by second quantization of the irreducible, positive energy representation $U^{(k+1)}$ of M\"obius group with lowest weight $k+1$ or, equivalently, $\A^{(k)}$ is the net associated with the irreducible M\"obius covariant net of standard subspaces of the one-particle Hilbert space associated with $U^{(k+1)}$, see \cite{GLW}. 

With $V_0$ a unitary on the one-particle Hilbert space $\H_0$ we denote by $\Gamma(V_0)$ it second quantization promotion to the Bosonic Fock space over $\H_0$.
We shall refer to a unitary of the form $\Gamma(V_0)$ as a \emph{second quantization unitary}.
\begin{theorem} A second quantization unitary $\Gamma(V_0)$ belongs to $\E(\A^{(k)})$ if and only if  $V_0=\f(P^{(k)})$. Here $P^{(k)}$ is the generator of the translation unitary group on the one-particle Hilbert space $\H_0$ and $\f:[0,\infty)\to\mathbb C$ is the  boundary value of a symmetric inner function on $\Si$ as in Corollary \ref{fplane}.
\end{theorem}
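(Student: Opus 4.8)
### Proof Proposal

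\textbf{Strategy.} The plan is to reduce the statement about the net $\A^{(k)}$ on $M_+$ to the one-particle standard-pair characterization already proved in Corollary \ref{fplane}, using the compatibility of second quantization with the modular structure. The key observation is that $\A^{(k)}(0,\infty)$ is the second quantization of the standard subspace $H_k\subset\H_0$ associated with the half-line, that $\Gamma$ intertwines modular data (Tomita--Takesaki for the von Neumann algebra corresponds to the real-subspace modular theory on the one-particle space), and that $\Gamma(V_0)$ commutes with the second-quantized translations $\Gamma(T^{(k)}(t))$ if and only if $V_0$ commutes with $T^{(k)}(t)$.

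\textbf{Main steps.} First I would recall from \cite{GLW} (as stated in the excerpt) that $\A^{(k)}$ is the net associated by second quantization with the irreducible M\"obius covariant net of standard subspaces of $\H_0$ built from the lowest-weight $k+1$ representation $U^{(k+1)}$; in particular $H_k\equiv\overline{\A^{(k)}(0,\infty)_{\mathrm{sa}}\Omega}$ is a standard subspace of $\H_0$, and $(H_k,T^{(k)})$ is an irreducible standard pair, where $T^{(k)}$ is the one-particle translation group with generator $P^{(k)}>0$. Second, I would use Lemma \ref{defV}: $\Gamma(V_0)\in\E(\A^{(k)})$ iff $\Gamma(V_0)$ commutes with the translations and $\Gamma(V_0)\A^{(k)}(0,\infty)\Gamma(V_0)^*\subset\A^{(k)}(0,\infty)$. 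Third, I would translate both conditions to the one-particle level: $\Gamma(V_0)$ commutes with $\Gamma(T^{(k)}(t))$ iff $V_0$ commutes with $T^{(k)}(t)$ (functoriality and faithfulness of $\Gamma$ on the one-particle space, using that $\Gamma$ is a monomorphism of the unitary group), and $\Gamma(V_0)\A^{(k)}(0,\infty)\Gamma(V_0)^*\subset\A^{(k)}(0,\infty)$ iff $V_0 H_k\subset H_k$ — here one uses that for a standard subspace $K$, $\Gamma(V_0)$ maps the von Neumann algebra $R(K)$ generated by its Weyl operators into $R(K)$ precisely when $V_0 K\subset K$, together with the fact that second quantization sends $H_k$ to $\A^{(k)}(0,\infty)$ and respects inclusions of standard subspaces. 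Fourth, having reduced to ``$V_0$ commutes with $T^{(k)}$ and $V_0H_k\subset H_k$'', I would invoke Corollary \ref{fplane}: since $(H_k,T^{(k)})$ is an irreducible standard pair, this is equivalent to $V_0=\f(P^{(k)})$ with $\f$ the boundary value of a symmetric inner function on $\mathbb S_\infty$ satisfying $\f(-p)=\bar\f(p)$ for $p\geq 0$. (Note $V_0$ is automatically unitary here since $\Gamma(V_0)$ is.) Conversely, any such $\f(P^{(k)})$ satisfies $V_0H_k\subset H_k$ and commutes with $T^{(k)}$, so $\Gamma(V_0)$ satisfies condition $(iii)$ of Lemma \ref{defV}, hence lies in $\E(\A^{(k)})$ after the normalization $\Gamma(V_0)\Omega=\Omega$ (which holds since $V_0$ fixes no vacuum issue — $\Omega$ is the Fock vacuum and $\Gamma(V_0)\Omega=\Omega$ always).

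\textbf{Main obstacle.} The delicate point is the equivalence $\Gamma(V_0)R(K)\Gamma(V_0)^*\subset R(K)\iff V_0K\subset K$ for a standard subspace $K$, i.e.\ that the second-quantization functor is faithful at the level of endomorphisms of standard subspaces versus endomorphisms of the associated von Neumann algebras. One direction is immediate: $V_0K\subset K$ gives $\Gamma(V_0)W(\xi)\Gamma(V_0)^*=W(V_0\xi)\in R(K)$ for $\xi\in K$, hence the inclusion of algebras. For the reverse, one uses that $K$ is recovered from $R(K)$ as (the closure of) $\{\xi: W(\xi)\in R(K)\}$, or more robustly via the modular theory: $\Gamma$ intertwines $(\Delta_{H_k}, J_{H_k})$ with the modular data $(\Delta_{\A^{(k)}(0,\infty)}, J_{\A^{(k)}(0,\infty)})$, so applying Lemma \ref{incl} on both levels forces the one-particle inclusion. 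I would cite \cite{LN} (and the standard-subspace/second-quantization correspondence, e.g.\ \cite{GLW}) for this functoriality rather than reprove it, and otherwise the argument is a routine assembly of the already-established pieces.
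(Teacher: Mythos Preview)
Your proposal is correct and follows essentially the same route as the paper: reduce membership in $\E(\A^{(k)})$ to the one-particle condition $V_0\in\E(H^{(k)}(0,\infty),T^{(k)})$ via the Weyl relation $\Gamma(V_0)W(h)\Gamma(V_0)^*=W(V_0h)$, then invoke Corollary \ref{fplane} on the irreducible standard pair. One small slip: your $H_k\equiv\overline{\A^{(k)}(0,\infty)_{\mathrm{sa}}\Omega}$ is a standard subspace of the Fock space, not of $\H_0$; the paper works directly with the one-particle standard subspace $H^{(k)}(0,\infty)\subset\H_0$ and handles your ``main obstacle'' by the fact (from Haag duality) that $W(h)\in\A^{(k)}(0,\infty)$ iff $h\in H^{(k)}(0,\infty)$.
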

\begin{proof}
With $H^{(k)}(0,\infty)$ the standard subspace of $\H_0$ associated with $(0,\infty)$, the von Neumann algebra $\A(0,\infty)$ is generated by the Weyl unitaries $W(h)$ as $h$ varies in $H^{(k)}(0,\infty)$ (see \cite{LN}). As $\Gamma(V_0) W(h) \Gamma(V_0)^* = W(V_0 h)$, we immediately see that $V_0\in\E(H^{(k)}(0,\infty))\Rightarrow \Gamma(V_0) \in\E(\A^{(k)})$.

The converse implication follows because $W(h)\in\A(0,\infty)$ if and only if $h\in H^{(k)}(0,\infty)$ (e.g. by Haag duality). 
\end{proof}
Note that $\Gamma(V_0)$ belongs to a one-parameter semigroup of $\E(\A^{(k)})$ if $\f$ is a singular inner function (see Cor. \ref{siminn}) so to a deformation of the net  $\A^{(k)}_+$; this is not the case if $\f$ is a Blaschke product.  

\subsection{Families of models}

We now construct elements of $\E(\A)$ with $\A$ a local extension of the $U(1)$-current net $\A^{(0)}$; so we get further families of local, translation covariant Boundary QFT nets of von Neumann algebras on $M_+$. For convenience we regard $\A^{(0)}$ as a net on $\RR$.

The local extensions of $\A^{(0)}$ are classified in \cite{BMT}. Such an extension $\A$ is the crossed product of $\A^{(0)}$ by a localized automorphism $\b$. Recall that $\b$ acts on Weyl unitaries by
\[
\b\big(W(h)\big) = e^{-i\int \ell(x)h(x){\rm d}x}W(h)
\]
for every localized element $h$ of the one-particle space, say $h\in\cal S(\mathbb R)$ and $h$ has zero integral, where $S(\mathbb R)$ denotes the Schwartz real function space, see \cite{BMT,GLW}. In other words $\b$ is associated with the action on the $U(1)$-current
\[
j(x)\to j(x) + \ell(x) 
\]
and $\A$ is generated by $\A^{(0)}$ and a unitary $U$ implementing $\b$, see below.

Here $\ell\in \cal S(\mathbb R)$ and the sector class of $\b$ (i.e. the class of $\b$ modulo inner automorphisms) is determined by the charge $g\equiv\frac{1}{2\pi}\int \ell(x){\rm d}x$. $\beta$ is inner iff the charge of $\ell$ is zero and in this case $\beta =\Ad W(L)$ where $L$ is the primitive of $\ell$, namely $L(x)=\int_{-\infty}^x \ell(s){\rm d}s$.

For the extension $\A$ to be local the spin $N= \frac12 g^2$, given by the Sugawara construction, is to be an integer.
We take $\ell$ with support in $(0,\infty)$ so that $\b$ is localized in $(0,\infty)$ and, in particular, $\b$ gives rise to an automorphism of the von Neumann algebra $\A^{(0)}(0,\infty)$. 

As said, $\A^{(0)}$ acts on the Bose Fock space on the one particle Hilbert space $\H$ and $\H$ carries the irreducible unitary representation $U^{(1)}$ of the M\"obius group with lowest weight 1. Therefore we may identify $\H$ with be the Hilbert space $\K_1= L^2(\RR_+,p{\rm d}p)$ with the known lowest weight 1 unitary representation of the M\"obius group; $\cal S(\RR)$ embeds into $\K_1$ (thus in $\H$) by Fourier transform and the scalar product determined by $(f,g) = \int_0^{\infty} p\hat f(p)\overline{\hat g(p)}{\rm d}p$, $f,g\in{\cal S}(\RR)$, see \cite{GLW}. 

Let $H(0,\infty)$ be the standard real Hilbert subspace of $\H$ associated with $(0,\infty)$. Then, in the configuration space, a function $h$ on $\RR$ belongs to  $H(0,\infty)$ if it is real, supp$h\subset [0,\infty)$ and its  Fourier transform $\hat h$ satisfies $\int_0^\infty |p| | \hat h(p)|^2{\rm d}p <\infty $.

Let $\f$ be a symmetric inner function $\f$ on $\Si$ and set $V_0=\f(P)$ with $P$ the positive generator of the time-translation unitary one-parameter group on the one-particle Hilbert space. By Cor. \ref{fplane} the unitary $V_0$ belongs to $\E(H(0,\infty))$. So $V\equiv \Gamma(V_0)\in\E(\A^{(0)})$ and we denote by $\eta$ the endomorphism of  $\A^{(0)}(0,\infty)$ implemented by $V$.

We shall assume that $\frac{|\f(p)-1|^2}{|p|}$ is locally integrable in zero (H\"older continuity at $0$).
Also, as $\f(0)=\pm1$, replacing $\f$ by $-\f$ if necessary we may and do assume that $\f(0)=1$. 

We formally set $\ell_1 = V_0\ell$; rigorously $\ell_1$ is defined to be the function on $\RR$ whose Fourier transform is $\f(p)\hat \ell(p)$. Clearly $\hat\ell(0)=\hat\ell_1(0)$, namely $\int \ell_1(x){\rm d}x  =\int \ell(x){\rm d}x$; moreover $\ell_1$ is real because $\ell$ is real and $\f$ is symmetric.
Note that the support of $\ell_1$ is contained in $[0,\infty)$ by the Paley-Wiener theorem as $\f\in H(\mathbb S_\infty)$. So $\ell -\ell_1$ has zero charge and belongs to $H(0,\infty)$.

In the following $\b$ is the localized automorphism of $\A^{(0)}$ associated with $\ell$. Denote by $L_1$ the primitive of $\ell_1$. Note that the primitive $L-L_1$ of $\ell - \ell_1$ belongs to $H(0,\infty)$, indeed 
\[
\int_0^\infty |p| | \hat L(p) - \hat L_1(p)|^2{\rm d}p  
= \int_0^\infty  \frac{|1- \f(p)|^2}{|p|}|\hat \ell(p)|^2{\rm d}p<\infty \ .
\]
\begin{lemma}\label{be}
On $\A^{(0)}(0,\infty)$ we have
\[
\eta\cdot\beta =\Ad z \cdot \beta \cdot\eta
\]
where the unitary $z$ belongs to $\A^{(0)}(0,\infty)$, indeed $z = W(L - L_1)$.
\end{lemma}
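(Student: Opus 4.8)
The plan is to compute $\eta\cdot\beta\big(W(h)\big)$ and $\beta\cdot\eta\big(W(h)\big)$ directly on an arbitrary localized Weyl unitary $W(h)\in\A^{(0)}(0,\infty)$ (with $h\in\cal S(\RR)$ real, $\mathrm{supp}\,h\subset(0,\infty)$, zero integral, which generate the algebra), and then identify the discrepancy as an inner perturbation implemented by a fixed Weyl unitary. First I would recall that $\eta=\mathrm{Ad}\,V$ with $V=\Gamma(V_0)$, so $\eta\big(W(h)\big)=W(V_0 h)$; and $\beta\big(W(h)\big)=e^{-i\langle \ell,h\rangle}W(h)$ where $\langle\ell,h\rangle\equiv\int \ell(x)h(x)\,\mathrm dx$. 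Composing, I get on one side
\[
\eta\cdot\beta\big(W(h)\big)=e^{-i\langle \ell,h\rangle}W(V_0 h),
\]
and on the other side
\[
\beta\cdot\eta\big(W(h)\big)=\beta\big(W(V_0 h)\big)=e^{-i\langle \ell,V_0 h\rangle}W(V_0 h)=e^{-i\langle V_0^*\ell,h\rangle}W(V_0 h),
\]
using that $\beta$ acts on $W(V_0 h)$ with the phase $\langle\ell,V_0h\rangle$ and moving $V_0$ to the other slot (legitimate since $V_0$ is unitary and symmetric, so $V_0^*\ell$ is again real — here one must be a little careful that $V_0 h$ is still localized in $(0,\infty)$, which holds by Paley–Wiener since $V_0=\f(P)$ with $\f\in\mathbb H(\Si)$, exactly as argued for $\ell_1$). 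Hence the two composites differ only by the scalar phase $e^{i(\langle V_0^*\ell,h\rangle-\langle\ell,h\rangle)}=e^{i\langle (V_0^*-1)\ell,\,h\rangle}$.

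Next I would recognize $V_0^*\ell$: since $V_0=\f(P)$ and $\f$ is symmetric, $V_0^*=\overline{\f}(P)=\f(-\cdot)(P)$ acts on real functions, and one checks $V_0^*\ell=\ell_1'$ where by definition $\ell_1$ has Fourier transform $\f(p)\hat\ell(p)$ — more precisely I should track conventions so that the relevant combination is $\ell-\ell_1$, which the paper has already arranged to have zero charge and to lie in $H(0,\infty)$. Then the phase becomes $e^{-i\langle \ell-\ell_1,\,h\rangle}$. Now I invoke the standard fact about the $U(1)$-current net: for $k=L-L_1$ the primitive of $\ell-\ell_1$ (which lies in $H(0,\infty)$ by the displayed integral estimate using the Hölder condition $|1-\f(p)|^2/|p|\in L^1_{\mathrm{loc}}(0)$), integration by parts gives $\langle \ell-\ell_1,h\rangle=-\langle k,h'\rangle$ or, in the symplectic form of the one-particle space, $\langle \ell-\ell_1,h\rangle=-\,\Im\langle k,h\rangle_{\H}$ up to the normalization fixed in \cite{GLW}; consequently the commutation relation of Weyl unitaries, $W(k)W(h)W(k)^*=e^{-i\,\Im\langle k,h\rangle}W(h)$, reproduces exactly this phase. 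Therefore $\eta\cdot\beta\big(W(h)\big)=\mathrm{Ad}\,W(L-L_1)\big(\beta\cdot\eta(W(h))\big)$ on generators, hence on all of $\A^{(0)}(0,\infty)$, with $z=W(L-L_1)\in\A^{(0)}(0,\infty)$ since $L-L_1\in H(0,\infty)$.

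I expect the main obstacle to be purely bookkeeping rather than conceptual: getting the Fourier/symplectic conventions, the sign of the $\beta$-phase, and the ``$V_0$ versus $V_0^*$'' placement all mutually consistent, so that the leftover phase comes out as precisely the one implemented by $W(L-L_1)$ and not its adjoint or a shifted version. A secondary technical point is justifying each manipulation at the level of unbounded operators/tempered distributions: that $V_0h$ and $\ell_1$ are genuinely supported in $[0,\infty)$ (Paley–Wiener), that $L-L_1$ is a bona fide element of the one-particle space $H(0,\infty)$ (the displayed estimate, which is where the Hölder hypothesis at $0$ is used), and that $W(L-L_1)$ therefore lands in $\A^{(0)}(0,\infty)$. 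Once these are in place the computation closes, and the identity extends from the Weyl generators to the whole von Neumann algebra by normality of $\eta$, $\beta$ and $\mathrm{Ad}\,z$.
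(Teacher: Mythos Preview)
Your approach is the same as the paper's: verify the identity on the Weyl generators $W(h)$. The paper's execution is shorter and sidesteps exactly the bookkeeping you flagged as the main obstacle. It computes $\Ad z\cdot\beta\cdot\eta(W(h))$ directly:
\[
e^{-i\int\ell\, V_0h}\,\Ad z\big(W(V_0h)\big)=e^{-i\int\ell_1\, V_0h}\,W(V_0h)=e^{-i\int V_0\ell\cdot V_0h}\,W(V_0h)=e^{-i\int\ell\, h}\,W(V_0h)=\eta\cdot\beta(W(h)),
\]
the penultimate equality being the one clean observation: $V_0=\f(P)$ preserves the real $L^2$ pairing, since in Fourier transform $\int\f(p)\hat\ell(p)\,\f(-p)\hat h(-p)\,{\rm d}p=\int|\f(p)|^2\hat\ell(p)\hat h(-p)\,{\rm d}p=\int\hat\ell(p)\hat h(-p)\,{\rm d}p$, using that $\f$ is symmetric ($\f(-p)=\overline{\f(p)}$) and inner ($|\f|=1$).

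Your detour through $V_0^*\ell$ contains a slip worth fixing rather than just ``tracking conventions'': $V_0^*\ell$ is \emph{not} $\ell_1=V_0\ell$ (the Fourier multipliers are $\bar\f$ versus $\f$), and more importantly $\Ad z$ acts on $W(V_0h)$, not on $W(h)$, so the phase you must match is $\langle\ell-\ell_1,\,V_0h\rangle$, not $\langle\ell-\ell_1,\,h\rangle$. Once you write it that way, the identity $\langle\ell_1,V_0h\rangle=\langle V_0\ell,V_0h\rangle=\langle\ell,h\rangle$ closes the computation immediately, and that is precisely the paper's one-line trick. No appeal to the symplectic form or integration by parts is needed.
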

\begin{proof}
For every $h\in H(0,\infty)$ we have
\[
\eta\cdot\beta(W(h)) = \eta\big(e^{-i\int \ell(x)h(x){\rm d}x}W(h)\big) = 
e^{-i\int \ell(x)h(x){\rm d}x}W(V_0 h) 
\]
and 
\begin{multline*}
\Ad z\cdot\beta\cdot\eta(W(h)) = \Ad z\cdot \beta(W(V_0 h))  = e^{-i\int \ell(x)V_0 h(x){\rm d}x}\Ad z\big(W(V_0h)\big)
\\ = e^{-i\int \ell_1(x)V_0 h(x){\rm d}x}W(V_0h) 
= e^{-i\int V_0\ell(x)V_0 h(x){\rm d}x}W(V_0 h) = e^{-i\int \ell(x)h(x){\rm d}x}W(V_0 h)\ .
\end{multline*}
\end{proof}
With $\A$ a local extension of $\Aa$, the von Neumann algebra $\A(0,\infty)$ is generated by $\Aa(0,\infty)$  and a unitary $U$ implementing $\b$, namely $\b(a) = UaU^*$, $a\in\Aa(0,\infty)$; finite sums $\sum_{k=-n}^n a_k U^k$, $a_k\in\Aa(0,\infty)$, are dense in 
$\A(0,\infty)$.
\begin{proposition} 
$\eta$ extends to a vacuum preserving endomorphism $\tilde\eta$ of $\A(0,\infty)$ determined by
$\tilde\eta(U) = zU$ with $z$ as in Lemma \ref{be}.
\end{proposition}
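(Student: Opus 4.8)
The plan is to define $\tilde\eta$ first on the $\sigma$-weakly dense $*$-subalgebra $\D$ of $\A(0,\infty)$ consisting of the finite sums $x=\sum_{k=-n}^{n}a_kU^k$, $a_k\in\Aa(0,\infty)$, by the prescription $\tilde\eta(a)=\eta(a)$ for $a\in\Aa(0,\infty)$ and $\tilde\eta(U)=zU$, i.e. $\tilde\eta(x)=\sum_k\eta(a_k)(zU)^k$; and then to show that this $\tilde\eta$ is a $*$-homomorphism preserving the vacuum state, hence extends to a normal endomorphism of the von Neumann algebra $\A(0,\infty)$. Note $\tilde\eta(\D)\subset\A(0,\infty)$: indeed $\eta(\Aa(0,\infty))\subset\Aa(0,\infty)$ and $zU\in\A(0,\infty)$, with $zU$ unitary since $z=W(L-L_1)$ is a Weyl unitary in $\Aa(0,\infty)$.

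The first step is to check consistency, i.e. that the prescription respects the relations of the crossed-product presentation of $\D$. As $\eta$ is a $*$-homomorphism of $\Aa(0,\infty)$ and $zU$ is unitary, the only relation to verify is $UaU^*=\b(a)$, $a\in\Aa(0,\infty)$, whose left-hand side is sent to
\[
(zU)\,\eta(a)\,(zU)^* \;=\; z\,\b\big(\eta(a)\big)\,z^* \;=\; \big(\Ad z\cdot\b\cdot\eta\big)(a)\ ,
\]
which by Lemma \ref{be} equals $\eta\big(\b(a)\big)=\tilde\eta\big(\b(a)\big)$, the image of the right-hand side. Hence $\tilde\eta:\D\to\A(0,\infty)$ is a well-defined unital $*$-homomorphism. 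Iterating the resulting identity $\Ad(zU)\cdot\eta=\eta\cdot\b$ on $\Aa(0,\infty)$ gives $(zU)^k=z_kU^k$ with $z_k\in\Aa(0,\infty)$ for all $k$ (and $z_0=1$), so $\tilde\eta(x)=\sum_k\eta(a_k)z_kU^k$ with each $\eta(a_k)z_k\in\Aa(0,\infty)$.

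Next I would show $\tilde\eta$ preserves the vacuum state $\omega=\langle\Omega,\,\cdot\,\Omega\rangle$, which on $\A(0,\infty)$ is faithful and normal. By the charge grading of the Hilbert space of $\A$ — which splits into $U(1)$-charge sectors on which $\Aa(0,\infty)$ preserves the charge while $U$ shifts it by $g\neq0$ — one has $\langle\Omega,aU^k\Omega\rangle=0$ for $a\in\Aa(0,\infty)$ and $k\neq0$; equivalently $\omega=\omega_0\cdot E$, where $\omega_0=\omega\res\Aa(0,\infty)$ and $E:\A(0,\infty)\to\Aa(0,\infty)$ is the canonical normal faithful conditional expectation picking out the degree-zero part, $E(\sum a_kU^k)=a_0$ (well defined since $\b^k$ is outer on the factor $\Aa(0,\infty)$ for $k\neq0$). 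Then for $x=\sum_ka_kU^k$ we get $\omega(x)=\omega_0(a_0)$, while $E\big(\tilde\eta(x)\big)=\eta(a_0)z_0=\eta(a_0)$ by the form of $\tilde\eta(x)$ found above, whence $\omega\big(\tilde\eta(x)\big)=\omega_0\big(\eta(a_0)\big)=\omega_0(a_0)$, the last step because $\eta=\Ad V$ with $V=\Gamma(V_0)$ fixing $\Omega$. Hence $\omega\cdot\tilde\eta=\omega$ on $\D$. Standard reasoning now upgrades $\tilde\eta$: since $\|\tilde\eta(a)\Omega\|^2=\omega(\tilde\eta(a^*a))=\omega(a^*a)=\|a\Omega\|^2$, the map $a\Omega\mapsto\tilde\eta(a)\Omega$ is isometric and, $\Omega$ being cyclic and separating for $\A(0,\infty)$ and $\D$ being $\sigma$-weakly dense in it (Kaplansky), extends to an isometry $W$ on $\H_\A$ with $W\Omega=\Omega$ and $Wa=\tilde\eta(a)W$ for $a\in\D$; one checks that $W$ is in fact unitary (its range contains $\overline{\eta(\Aa(0,\infty))\Omega}=\overline{V\Aa(0,\infty)\Omega}$, dense in the vacuum sector, and the charged sectors are reached similarly). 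Then $x\mapsto WxW^*$ is a normal $*$-endomorphism of $B(\H_\A)$ agreeing with $\tilde\eta$ on $\D$, and since $\tilde\eta(\D)\subset\A(0,\infty)$ with $\A(0,\infty)$ $\sigma$-weakly closed and $\D$ $\sigma$-weakly dense in it, it maps $\A(0,\infty)$ into itself. This gives the desired normal endomorphism $\tilde\eta$ of $\A(0,\infty)$, vacuum preserving as $W\Omega=\Omega$; its uniqueness — that it is ``determined by $\tilde\eta(U)=zU$'' — is then immediate from normality and the density of $\D$.

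The algebraic consistency of the first step is handled entirely by Lemma \ref{be}; the point needing genuine care is the last one, namely promoting the $*$-homomorphism defined on the dense subalgebra $\D$ to a \emph{normal} map that lands back inside $\A(0,\infty)$ rather than merely in $B(\H_\A)$. This is what the vacuum-state invariance $\omega\cdot\tilde\eta=\omega$, resting on the decomposition $\omega=\omega_0\cdot E$, secures; alternatively, one can bypass the implementing-isometry argument by exhibiting directly a unitary $\tilde V$ on $\H_\A$ extending $V$ and implementing $\tilde\eta$, which is where the standing hypothesis that $\f$ is non-singular at $0$ is used.
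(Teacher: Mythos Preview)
Your proof is correct and follows essentially the same route as the paper: define $\tilde\eta$ on the dense $*$-subalgebra of finite sums $\sum_k a_kU^k$, observe that it preserves the vacuum conditional expectation $\sum_k a_kU^k\mapsto a_0$ and hence the vacuum state, and then construct the implementing unitary $\tilde V$ (your $W$) by $\tilde V X\Omega=\tilde\eta(X)\Omega$, checking surjectivity via cyclicity of the range. Your treatment is in fact more explicit than the paper's in two respects: you spell out how Lemma~\ref{be} guarantees that the crossed-product relation $UaU^*=\beta(a)$ is respected (the paper simply calls this ``immediate''), and you articulate the decomposition $\omega=\omega_0\circ E$ underlying the vacuum preservation. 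One minor remark: your closing comment that the non-singularity of $\varphi$ at $0$ is used for the implementing-unitary step is slightly misplaced---that hypothesis is needed earlier, to ensure $L-L_1\in H(0,\infty)$ so that $z=W(L-L_1)$ is a genuine Weyl unitary in $\Aa(0,\infty)$, i.e.\ already in Lemma~\ref{be}.
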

\begin{proof}
Let $\N$ be the subalgebra of $\A(0,\infty)$ of finite sums $\{\sum_k  a_k U^k\}$ with $a_k\in \Aa(0,\infty)$.
It is immediate to check that the map $\tilde\eta_0$
\[
\tilde\eta_0: \sum_k  a_k U^k \mapsto \sum_k \eta(a_k)(zU)^k \ ,\quad a_k\in \Aa(0,\infty)\ ,
\]
is an endomorphism of $\N$. $\tilde\eta_0$ preserves the vacuum conditional expectation $\sum_k a_k U^k \mapsto a_0$, so the vacuum state. Moreover $\tilde\eta_0(\N)$ is cyclic on the vacuum vector $\Omega$ because $\overline{\tilde\eta_0(\N)\Omega}$ contains the closure of $\eta( \Aa(0,\infty))\Omega = V\Aa(0,\infty)\Omega$, namely the Hilbert space of $\Aa$, and is $U$-invariant because $U\in \tilde\eta_0(\N)$.

Then we have a unitary $\tilde V$ determined by
\begin{equation}\label{tildeV}
\tilde VX\Omega = \tilde\eta_0(X)\Omega\ ,\quad X\in\N\ ,
\end{equation}
that implements $\tilde\eta_0$. Therefore $\tilde\eta = \Ad \tilde V$ is a normal extension of $\tilde\eta_0$ to $\Aa(0,\infty)$.
\end{proof}
\begin{proposition} 
The unitary $\tilde V$ defined by eq. \eqref{tildeV} belongs to $\E(\A)$.
\end{proposition}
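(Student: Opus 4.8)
The plan is to verify the four requirements for $\tilde V\in\E(\A)$: unitarity, $\tilde V\Omega=\Omega$, one of the equivalent conditions of Lemma \ref{defV}, and commutation with $T$; the first three are immediate and the last carries all the work. Unitarity is the preceding proposition, and putting $X=1$ in \eqref{tildeV} gives $\tilde V\Omega=\tilde\eta_0(1)\Omega=\Omega$. For the third, note that $\Ad\tilde V$ agrees with $\tilde\eta_0$ on $\N$: for $X,Y\in\N$ one has $(\tilde V X\tilde V^*)\,\tilde\eta_0(Y)\Omega=\tilde V XY\Omega=\tilde\eta_0(X)\,\tilde\eta_0(Y)\Omega$, and $\tilde\eta_0(\N)\Omega$ is dense. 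Since $\Ad\tilde V$ is normal and $\N$ is $\sigma$-weakly dense in $\A(0,\infty)$, it follows that $\tilde V\A(0,\infty)\tilde V^*\subset\A(0,\infty)$, i.e. condition $(iii)$ of Lemma \ref{defV}.

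It remains to show $\tilde V$ commutes with $T$. As an operator commuting with $T(t)$ for all $t\geq0$ commutes with all of $T$, it is enough to treat $t\geq0$, and I would obtain this from the intertwining relation $\Ad T(t)\circ\tilde\eta_0=\tilde\eta_0\circ\Ad T(t)$ on $\N$ (for $t\geq0$). First one checks $\Ad T(t)(\N)\subset\N$. Now $\Ad T(t)$ maps $\Aa(0,\infty)$ onto $\Aa(t,\infty)\subset\Aa(0,\infty)$, while $U_t:=T(t)UT(-t)$ implements the translated automorphism $\tau_t\beta\tau_{-t}=\beta_{\ell_t}$, $\ell_t=\ell(\cdot-t)$, still localized in $(0,\infty)$; hence $U_tU^*$ implements the charge-zero automorphism $\beta_{\ell_t}\beta^{-1}=\beta_{\ell_t-\ell}=\Ad W(L_t-L)$ of the factor $\Aa(0,\infty)$ — here $L_t-L\in H(0,\infty)$ because $|e^{ipt}-1|^2/|p|$ is integrable near $0$ — and $U_tU^*$, being invariant under the gauge automorphisms of $\A$ (which commute with $T$), lies in $\Aa(0,\infty)$; by factoriality $U_tU^*=c_t\,W(L_t-L)$ for a phase $c_t$, i.e. $U_t=c_t\,W(L_t-L)\,U$, and so $\Ad T(t)(\N)\subset\N$. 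Granting the intertwining relation, for $X\in\N$ and $t\geq0$ one then computes with \eqref{tildeV} and $T(t)\Omega=\Omega$ that $\tilde V T(t)X\Omega=\tilde V\big(\Ad T(t)(X)\big)\Omega=\tilde\eta_0\big(\Ad T(t)(X)\big)\Omega=\Ad T(t)\big(\tilde\eta_0(X)\big)\Omega=T(t)\tilde V X\Omega$; as $\N\Omega$ is dense, $\tilde V$ commutes with $T(t)$.

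The core step is therefore the intertwining relation on $\N$. Both sides being $*$-homomorphisms $\N\to\N$, it suffices to check it on the generators. On $\Aa(0,\infty)$ both sides equal $\eta\circ\tau_t$ there, since $V_0=\f(P)$, hence $V=\Gamma(V_0)$, commutes with $T$. On $U$, using $\tilde\eta_0(U)=zU$, $\tilde\eta_0(W(h))=W(V_0h)$, the identity $L_1=V_0L$ (equivalently $\hat\ell_1=\f\,\hat\ell$, whence $V_0(L_t-L)=L_{1,t}-L_1$), and $U_t=c_t\,W(L_t-L)\,U$, the relation collapses — the phases $c_t$ cancelling — to the Weyl-algebra identity $\tau_t(z)\,W(L_t-L)=W\big(V_0(L_t-L)\big)\,z$. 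Expanding both sides with the Weyl commutation relations, they coincide up to a scalar which, after using bilinearity, antisymmetry and translation invariance of the symplectic form $\sigma$, equals $\exp\big(\tfrac i2[\,\sigma(\tau_tL_1,L_1)-\sigma(\tau_tL,L)\,]\big)$; this scalar is $1$ because $\sigma$ is proportional to the imaginary part of the one-particle scalar product, whence $\sigma(V_0\xi,V_0\zeta)=\sigma(\xi,\zeta)$, while $L_1=V_0L$ and $V_0$ commutes with $\tau_t$.

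Putting the pieces together, $\tilde V$ is a unitary commuting with $T$, with $\tilde V\Omega=\Omega$ and $\tilde V\A(0,\infty)\tilde V^*\subset\A(0,\infty)$, hence $\tilde V\in\E(\A)$. The main obstacle is the Weyl cocycle cancellation in the core step: one must pin down $T(t)UT(-t)$ explicitly and then verify that the two-cocycle obstruction generated by the Weyl relations vanishes — and it is precisely here that the unitarity of $V_0$, its commutation with $T$, the symmetry of $\f$, and the regularity of $\f$ at $0$ (which makes $z=W(L-L_1)$ a bona fide element of $\Aa(0,\infty)$) all enter.
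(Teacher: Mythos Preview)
Your proof takes essentially the same route as the paper's: after noting that unitarity, $\tilde V\Omega=\Omega$, and $\tilde V\A(0,\infty)\tilde V^*\subset\A(0,\infty)$ come straight from the construction, both arguments reduce the commutation with $T$ to the single identity $\tilde\eta(\tau_t(U))=\tau_t(\tilde\eta(U))$ --- the paper writes this as $zu_t=\eta(u_t)\tau_t(z)$ with $u_t=W(L-L_t)$, your form is $\tau_t(z)W(L_t-L)=W(V_0(L_t-L))z$ --- and both verify it via the Weyl relations together with $V_0(L-L_t)=L_1-L_{1,t}$. The only presentational differences are that you derive $\tau_t(U)=c_tW(L_t-L)U$ from factoriality with an a priori unknown phase and then observe it cancels, whereas the paper simply invokes the cocycle formula \eqref{coc}; and you make the symplectic-form cancellation explicit where the paper just rewrites $W(L_1-L)W(L-L_t)=W(L_1-L_{1,t})W(L_{1,t}-L_t)$.

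One small caution: expressions such as $\sigma(\tau_tL_1,L_1)$ are formal, since $L$ and $L_1$ (having nonzero charge, hence nonvanishing at $+\infty$) are not elements of the one-particle space --- only the differences $L-L_1$, $L-L_t$, $L_1-L_{1,t}$ are. The paper's computation stays with these differences throughout; your final phase identity is easily rewritten in those terms, so this is a matter of phrasing rather than a gap.
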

\begin{proof}
By construction $V$ implements the endomorphism $\tilde\eta$ of $\Aa(0,\infty)$ and $\tilde V\Omega =\Omega$. We only need to show that $\tilde V$ commutes with the translation unitary group $T$ of $\A$, namely $\tilde\eta\cdot \tau_t= \tau_t\cdot \tilde\eta$, $t\geq 0$, on $\Aa(0,\infty)$ with $\tau_t\equiv \Ad T(t)$. Since $V\in\E(\Aa)$, we have $\eta\cdot \tau_t= \tau_t\cdot \eta$ on $\Aa(0,\infty)$ so it suffices to show that  $\eta \tau_t(U)= \tau_t \eta(U)$. 

We have
\begin{equation}\label{coc}
\tau_t(U) = u_t^* U
\end{equation}
where $u_t$ is a unitary $\tau$-cocycle $\Ad u_t \cdot\tau_t \cdot\b= \b\cdot\tau_t$, actually $u_t = W(L - L_t)$ where ${\ell}_t(x) \equiv \ell(x-t)$ and $L_t$ is the primitive of $\ell_t$. Therefore $\tilde\eta \tau_t(U)= \tau_t \tilde\eta(U)$ means $\eta(u_t^*)z U = \tau_t(z)u_t^* U$ and we need to show that $zu_t = \eta(u_t)\tau_t(z)$. Indeed we have
\begin{multline*}
zu_t = W(L_1 - L)W(L - L_t)   = W(L_1 - {L_1}_t)  W( {L_1}_t - L_t) \\
=   W(V_0(L - L_t) )  W({L_1}_t - L_t) =  \eta(u_t)\tau_t(z) \ ,
\end{multline*}
where ${L_1}_t$ is the primitive of ${\ell_1}_t$, so the proof is complete.
\end{proof}
\begin{corollary}
Let  $\f$ be a symmetric inner function  on $\mathbb S_{\infty}$ which is H\"older continuous at $0$ as above with $\f(0)=0$, and $N\in\mathbb N$ be an integer. There is a local, translation covariant Boundary QFT net of von Neumann algebras on $M_+$ associated with $\f$ and $N$.
\end{corollary}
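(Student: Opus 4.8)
The plan is to collect the ingredients assembled in this subsection and feed them into Proposition~\ref{AV}; no genuinely new argument is needed, the corollary being a packaging of the three propositions above. Begin on the extension side: given $N\in\NN$, put $g=\sqrt{2N}$ and choose a real test function $\ell\in\cal S(\RR)$ with ${\rm supp}\,\ell\subset(0,\infty)$ and $\frac1{2\pi}\int\ell(x)\,{\rm d}x=g$, which plainly exists. The associated localized automorphism $\b$ of $\A^{(0)}$ then has charge $g$ and Sugawara spin $\frac12 g^2=N\in\NN$, so the Buchholz--Mach--Todorov crossed product $\A$ of $\A^{(0)}$ by $\b$ is a local M\"obius covariant net \cite{BMT}, and $\b$ restricts to an automorphism of $\A^{(0)}(0,\infty)$.

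Now the deformation side. With $\f$ the given symmetric inner function on $\Si$, H\"older continuous at $0$ — i.e. $\frac{|\f(p)-1|^2}{|p|}$ locally integrable at $0$ — and normalized as above so that $\f(0)=1$, set $V_0=\f(P)$, with $P$ the positive generator of the one-particle translation group. By Corollary~\ref{fplane} we have $V_0\in\E(H(0,\infty))$, hence the second quantization $V=\Gamma(V_0)$ belongs to $\E(\A^{(0)})$ by the second-quantization characterization proved above; let $\eta=\Ad V$ be the induced endomorphism of $\A^{(0)}(0,\infty)$. The H\"older hypothesis is precisely what makes the primitive $L-L_1\in H(0,\infty)$, so Lemma~\ref{be} applies, and the two propositions following it produce a vacuum-preserving endomorphism $\tilde\eta=\Ad\tilde V$ of $\A(0,\infty)$ extending $\eta$, with $\tilde V\in\E(\A)$.

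Finally, apply Proposition~\ref{AV} to the pair $(\A,\tilde V)$: the map
\[
\O=I_1\times I_2\ \mapsto\ \A_{\tilde V}(\O)=\A(I_1)\vee\tilde V\A(I_2)\tilde V^*
\]
is a local, translation covariant Boundary QFT net of von Neumann algebras on $M_+$, and this is the net we attach to the pair $(\f,N)$. The only point requiring work — already settled in the two propositions preceding this corollary, so that here it is merely invoked — is the membership $\tilde V\in\E(\A)$: one must check both that the tentative map $\tilde\eta_0$ on finite sums $\sum_k a_k U^k$ is a genuine endomorphism whose range is cyclic on $\Omega$ (equivalently $\tilde V\A(0,\infty)\tilde V^*\subset\A(0,\infty)$, by Lemma~\ref{defV}), and that $\tilde V$ commutes with the \emph{full} translation group of $\A$, which reduces to the cocycle identity $zu_t=\eta(u_t)\tau_t(z)$. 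Once $\tilde V\in\E(\A)$ is in hand, Proposition~\ref{AV} closes the argument, and no further obstacle remains.
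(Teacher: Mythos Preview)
Your proof is correct and follows the same approach as the paper's own proof, which is simply a two-sentence invocation of the preceding propositions and the BMT locality result. You have merely unpacked in detail what the paper compresses into ``$\f$ determines an element $\tilde V\in\E(\A_N)$ as above'' and ``by the above construction''; the logical structure is identical.
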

\begin{proof}
Given $N\in\mathbb N$, the extension $\A_N$ of the $U(1)$-current net with charge $g$ such that $\frac12 g^2 = N$ is local \cite{BMT} and $\f$ determines an element $\tilde V\in\E(\A_N)$ as above. Hence we have a Boundary QFT net by the above construction.
\end{proof}
Recall for example the structure of the net $\A_N$ (cf. \cite{BMT}): $\A_1$ is associated with the level 1 $\widehat{su(2)}$-Kac-Moody algebra with central charge 1, $\A_2$ is the Bose subnet of the free complex Fermi field net, $\A_3$ appears in the $\mathbb Z_4$-parafermion current algebra analyzed by Zamolodchikov and Fateev, and in general $\A_N$ is a coset model $SO(4N)_1/SO(2N)_2$.

\subsubsection{Case of the Ising model}
One further family of local Boundary QFT nets comes by considering the Ising  model conformal net $\A_{\rm Ising}$ on $\RR$, namely the Virasoro net with central charge $c= 1/2$.

$\A_{\rm Ising}$ is the fixed point net of $\F$ under the $\mathbb Z_2$ gauge group action, where $\F$ is the twisted-local net of von Neumann algebras on $\RR$ generated by a real Fermi field.

The one-particle Hilbert space of $\F$ is $\H$ carries the irreducible unitary spin $1/2$ representation of the double cover of the M\"obius group and $\F$ acts on the Fermi Fock space over $\H$. 

With $T$ the translation unitary group on $\H$, the standard subspace $H$ of $\H$ associated with $(0,\infty)$ is the one associated with the unique irreducible, non-zero standard pair $(H,T)$.

With $P$ the generator of $T$, then every symmetric inner function $\f$ on $\Si$ gives a unitary $V_0= \f(P)$ on $\H$ mapping $H$ into itself. 

The Fermi second quantization $V$ of $V_0$ then satisfies $V\F(0,\infty)V^*\subset \F(0,\infty)$ and commutes with translations. Moreover $V$ commutes with the $\mathbb Z_2$ gauge group unitary (the Fermi second quantization of $-1$) so it restricts to a unitary $V_-$ on the $\A_{\rm Ising}$ Hilbert subspace and $V_-\A_{\rm Ising}(0,\infty)V_-^*\subset \A_{\rm Ising}(0,\infty)$ namely $V_-\in\E(\A_{\rm Ising})$. By applying our construction we conclude:
\begin{proposition} 
Given any symmetric inner function $\f$ on $\Si$, there is a local, translation covariant Boundary QFT associated with $\A_{\rm Ising}$ as above.
\end{proposition}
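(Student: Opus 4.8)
The plan is to reduce the statement to Proposition \ref{AV} by extracting from $\f$ an element of $\E(\A_{\rm Ising})$, thereby making precise the chain of assertions in the paragraph preceding the statement. Note that, unlike the Buchholz--Mach--Todorov families treated above, the construction here stays inside the free Fermi net, so neither H\"older continuity at $0$ nor any normalization of $\f$ at $p=0$ is required.

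First I would work on the one-particle Hilbert space $\H$ of $\F$, which carries the irreducible lowest-weight-$1/2$ representation of the double cover of $\Mob$; together with the positive-energy translation group $T$ and the standard subspace $H = \overline{\F(0,\infty)_{\rm sa}\Omega}$ attached to $(0,\infty)$, this is the unique (up to unitary equivalence) non-zero irreducible standard pair $(H,T)$. Given the symmetric inner function $\f$ on $\Si$, Corollary \ref{fplane} immediately gives that $V_0 \equiv \f(P)$, with $P$ the generator of $T$, is a unitary commuting with $T$ and satisfying $V_0 H \subset H$, i.e. $V_0 \in \E(H,T)$.

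Next I would pass to the Fermi second quantization $V = \Gamma(V_0)$ on the Fermi Fock space over $\H$. Writing $\Psi(h)$ for the smeared real Fermi field, one has $V\,\Psi(h)\,V^* = \Psi(V_0 h)$; since $\F(0,\infty)$ is generated by the $\Psi(h)$ with $h\in H$ and $V_0 H\subset H$, this yields $V\F(0,\infty)V^* \subset \F(0,\infty)$. Moreover $V$ commutes with $T$ and fixes the Fock vacuum by functoriality of second quantization, and functoriality also gives that $V$ commutes with the $\mathbb Z_2$ gauge unitary $\Gamma(-1)$, because $V_0$ commutes with $-1$ on $\H$.

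Finally, commuting with $\Gamma(-1)$, the unitary $V$ leaves invariant the even subspace $\H_+$ carrying the Ising vacuum sector, on which it restricts to a unitary $V_- \equiv V|_{\H_+}$ that still commutes with translations, fixes $\Omega$, and --- since $\A_{\rm Ising}(I)$ is the $\Gamma(-1)$-fixed part of $\F(I)$ compressed to $\H_+$ --- satisfies $V_-\A_{\rm Ising}(0,\infty)V_-^* \subset \A_{\rm Ising}(0,\infty)$; by Lemma \ref{defV} this is exactly the statement $V_- \in \E(\A_{\rm Ising})$. Since $\A_{\rm Ising}$ is a local M\"obius covariant net on $\RR$, Proposition \ref{AV} applied to $(\A_{\rm Ising}, V_-)$ then produces the required local, translation covariant Boundary QFT net $(\A_{\rm Ising})_{V_-}$ on $M_+$. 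The step needing the most care is the bookkeeping through $\F$ and its $\mathbb Z_2$-grading: one must check that passing to the even part of $\F$ and compressing to $\H_+$ is compatible with the twisted locality of the Fermi net and genuinely places $V_-$ in $\E(\A_{\rm Ising})$; everything else is a direct transcription of the Bose-case arguments already given for the nets $\A^{(k)}$.
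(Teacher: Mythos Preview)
Your proposal is correct and follows exactly the same route as the paper: the paper's ``proof'' is precisely the paragraph preceding the proposition, and you have faithfully unpacked each step (Corollary~\ref{fplane} on the one-particle space, Fermi second quantization, commutation with $\Gamma(-1)$, restriction to the even sector, and finally Proposition~\ref{AV}). One small notational slip: the standard subspace $H$ you want is the \emph{one-particle} standard subspace of $\H$ associated with $(0,\infty)$, not $\overline{\F(0,\infty)_{\rm sa}\Omega}$, which lives in the Fock space and gives a reducible pair; your subsequent identification of $(H,T)$ with the unique irreducible standard pair makes clear you intend the former.
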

\bigskip

\appendix
\noindent{\LARGE \bf Appendix}
\section{One-parameter semigroups of inner functions}\label{inner}
We recall and comment on basic facts about inner functions, see \cite{Rudin}. 

Consider the disk $\mathbb D \equiv \{z\in\mathbb C : |z|<1\}$ and the Hardy space $\mathbb H^{\infty}(\mathbb D)$ of bounded analytic functions on $\mathbb D$.
Every $\f\in \mathbb H^{\infty}(\mathbb D)$ has  a radial limit $f^*(e^{i\th})\equiv\lim_{r\to 1^-}\f(re^{i\th})$ almost everywhere with respect to the Lebesgue measure of $\partial\mathbb D$ and defines a function  $\f^*\in L^{\infty}(\partial\mathbb D,{\rm d}\th)$, where $\partial\mathbb D$ is the boundary of $\mathbb D$. 
As $||\f^*||_{\infty} =\sup \{\f(z): z\in\mathbb D\}$ by the maximum modulus principle, and in particular $\f^*$ determines $\f$, we may identify $\mathbb H^{\infty}(\mathbb D)$ with a Banach subspace of $L^{\infty}(\partial\mathbb D,{\rm d}\th)$. We shall then denote $\f^*$ by the same symbol $\f$ if no confusion arises.

Given a sequence of elements $a_n\in\mathbb D$ such that $\sum_{n=1}^{\infty} (1 - \lvert a_n \rvert) < \infty$ , the function
\begin{equation*} 
B(z) \equiv \prod_{n=1}^\infty  B_{a_n}(z) 
\end{equation*}
is called the \emph{Blaschke product}. Here $B_a(z)$ is the Blaschke factor $\frac{\lvert a \rvert}{a}\frac{z-a}{1-\bar{a}z}$ if $a\neq 0$ and $B_0(z)\equiv z$.
This product converges uniformly on compact subsets of the $\mathbb D$, and thus $B$ is a holomorphic function on the disk. Moreover $\left\lvert B(z) \right\rvert \leq 1$ for $z\in\mathbb D$.

An \emph{inner function} $\f$ on $\mathbb D$ is a function $\f\in \mathbb H^{\infty}(\mathbb D)$ such that 
$|\f(z)|=1$ for almost all $z\in\partial\mathbb D$.\footnote{Every function in $\mathbb H^{\infty}(\mathbb D)$ factorizes into the product of an inner function and an outer function \cite{Rudin}. We don't need this fact here.}
A Blaschke product is an inner function. Indeed, up to a phase, $B_a(z)$ is the only inner function with a simple zero in $a$ (thus the M\"obius transformation mapping $a$ to $0$) and $B(z)$ the only inner function on $\mathbb D$  that has zeros exactly at $\{ a_n \}$, with multiplicity. 

If an inner function $\f$ has no zeros on $\mathbb D$, then $\f$ is called a \emph{singular} inner function.
  
$\f$ is an inner function if and only if  
\begin{equation}\label{innereq} 
\f(z) = \alpha B(z) \exp\left( - \int_{-\pi}^\pi \frac{e^{i\theta}+z}{e^{i\theta}-z}{\rm d}\mu(e^{i\theta}) \right) , 
\end{equation}
where $\mu$ is a positive, finite, Lebesgue singular measure on $\partial\mathbb D$, $B(z)$ is a Blaschke product and $\alpha$ is a constant with $\lvert \alpha \rvert = 1$. The decomposition is unique.
Note that all the zeros of $\f$ come from the Blaschke product so $\f$ is singular if and only if $B$ is the identity.

Note that the inner functions form a (multiplicative) {\it semigroup} and the singular inner functions form a sub-semigroup. We now consider a one-parameter semigroup $\{\f_t, t\geq 0\}$ of inner functions. Namely $\f_t$ is an inner function for every $t\geq 0$, and $\f_{t+s} =\f_t \f_s$. Clearly $\f_0 = 1$. We require that the map $t\in [0,\infty)\to \f^*_t$ is weak$^*$ continuous in $L^{\infty}(\partial\mathbb D,{\rm d}\th)$. This is equivalent to the weak operator continuity (hence to the strong operator continuity) of the one-parameter unitary group $M_{\f^*_t}$ on $L^{2}(\partial\mathbb D,{\rm d}\th)$, where $M_{\f^*_t}$ is the multiplication operator by $\f^*_t$ on $L^{2}(\partial\mathbb D,{\rm d}\th)$.

\begin{proposition}\label{innchar}
Let $\f_t$ be a one-parameter semigroup of inner functions on $\mathbb D$. Then:

$a)$ $\f_t(z)\to 1$ as $t\to 0$ uniformly on compact subsets of $\mathbb D$,

$b)$ every $\f_t$ is singular,

$c)$ $\f_t(z) = e^{itf(z)}$ where $f$ is a analytic function on $\mathbb D$ with $\Im f(z) \geq 0$ such that the radial limit function of $f$ on $\partial\mathbb D$ exists almost everywhere and is real.
\end{proposition}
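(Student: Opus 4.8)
The plan is to exploit the rigidity of the canonical factorization \eqref{innereq} together with the semigroup law. I would begin with part $a)$: the functions $\f_t$ are uniformly bounded by $1$ on $\mathbb D$, so by Montel's theorem the family is normal; any subsequential limit as $t\to 0^+$ is an analytic function $\f_0$ with $|\f_0|\le 1$, and the semigroup law forces $\f_0^2=\f_0$ pointwise, hence $\f_0\equiv 0$ or $\f_0\equiv 1$. The weak$^*$-continuity hypothesis on $t\mapsto\f_t^*$ in $L^\infty(\partial\mathbb D)$ rules out the value $0$ (e.g. testing against the constant function and using that $\f_t^*$ is unimodular, or equivalently observing that strong continuity of $M_{\f_t^*}$ forces $\f_{t}^*\to 1$ in measure), so $\f_t\to 1$ uniformly on compacta; once the limit is identified, the full limit (not just subsequential) follows by normality. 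For part $b)$, I would argue that a one-parameter semigroup of inner functions cannot have a persistent zero: if $\f_{t_0}(a)=0$ for some $t_0>0$ and $a\in\mathbb D$, then $\f_t(a)=0$ for all $t\ge t_0$ by the semigroup law, but also $\f_{t_0}=\f_{t_0/2}\f_{t_0/2}$ forces $\f_{t_0/2}(a)=0$, and iterating, $\f_{t_0/2^n}(a)=0$ for all $n$; this contradicts part $a)$, which gives $\f_{t_0/2^n}(a)\to 1$. Hence each $\f_t$ has no zeros in $\mathbb D$, i.e. the Blaschke part in \eqref{innereq} is trivial and $\f_t$ is singular.

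For part $c)$, since each $\f_t$ is singular and zero-free on the simply connected domain $\mathbb D$, I can write $\f_t(z)=e^{i g_t(z)}$ for an analytic branch $g_t$ on $\mathbb D$, normalized (using $\f_t(0)\to 1$ and a continuity argument in $t$) so that $g_0\equiv 0$ and $t\mapsto g_t(z)$ is continuous for each fixed $z$. The factorization \eqref{innereq} with trivial Blaschke product gives $i g_t(z) = i\theta_t - \int_{-\pi}^{\pi}\frac{e^{i\theta}+z}{e^{i\theta}-z}\,{\rm d}\mu_t(e^{i\theta})$ with $\mu_t\ge 0$ singular and $\theta_t\in\RR$; in particular $\Im g_t(z) = \int \mathrm{Re}\big(\frac{e^{i\theta}+z}{e^{i\theta}-z}\big)\,{\rm d}\mu_t \ge 0$ since the Poisson kernel is positive. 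The semigroup law $\f_{t+s}=\f_t\f_s$ translates, via uniqueness of the factorization, into additivity $\mu_{t+s}=\mu_t+\mu_s$ (and additivity of $\theta_t$ mod $2\pi$, which combined with continuity and $\theta_0=0$ gives $\theta_t$ linear in $t$). A weak$^*$-continuous additive one-parameter family of finite positive measures is of the form $\mu_t = t\,\mu$ for a single positive singular measure $\mu$; therefore $g_t(z) = t f(z)$ where $f(z) \equiv c - \frac{1}{i}\int_{-\pi}^{\pi}\frac{e^{i\theta}+z}{e^{i\theta}-z}\,{\rm d}\mu(e^{i\theta})$ is analytic with $\Im f\ge 0$, i.e. $\f_t(z)=e^{itf(z)}$. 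The statement about the radial limits of $f$ being real a.e. follows from $|\f_t^*|=1$ a.e. on $\partial\mathbb D$: this forces $\Im(tf^*)=0$, hence $\mathrm{Im}\,f^*=0$ a.e., and the radial limit exists a.e. because $f$ is in the Nevanlinna class (being $-i\log$ of a bounded analytic function) — or more concretely because the Poisson integral of a singular measure has radial limit $0$ a.e.

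The main obstacle I anticipate is the passage from the semigroup law to the clean measure identity $\mu_t = t\mu$: one must first establish, via the \emph{uniqueness} clause of the canonical factorization \eqref{innereq}, that $\mu_t + \mu_s = \mu_{t+s}$ as measures (and not merely that the corresponding analytic functions agree, which is automatic), and then upgrade the rational-$t$ scaling $\mu_{p/q} = (p/q)\mu_1$ to all real $t\ge 0$ using the weak$^*$-continuity hypothesis. Handling the additive $\theta_t$ ambiguity mod $2\pi$ carefully — choosing continuous branches so that everything is genuinely linear rather than merely linear modulo $2\pi$ — is the other point requiring care, though it is routine once part $a)$ is in hand. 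Everything else is either a normal-families argument or a direct appeal to the structure theorem for inner functions stated in this appendix.
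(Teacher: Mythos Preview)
Your argument for $b)$ is essentially the paper's. For $a)$, however, the step ``the semigroup law forces $\f_0^2=\f_0$'' is not justified: if $\f_{t_n}\to\f_0$ then indeed $\f_{2t_n}=\f_{t_n}^2\to\f_0^2$, but nothing forces this second subsequential limit to coincide with the first --- that is precisely what you are trying to prove. Fortunately the dichotomy is unnecessary. The weak$^*$ continuity you invoke afterwards already gives $\f_t(z_0)\to 1$ for every $z_0\in\mathbb D$, by testing against the Cauchy kernel $\frac{1}{2\pi i(z-z_0)}\in L^1(\partial\mathbb D)$ (this is exactly the paper's route); then any subsequential limit $\f_0$ satisfies $\f_0(z_0)=1$ with $|\f_0|\le 1$, whence $\f_0\equiv 1$ by the maximum modulus principle, and normality upgrades this to uniform convergence on compacta. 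Your ``testing against the constant'' is the special case $z_0=0$ of this, and already suffices.

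For $c)$ you take a genuinely different and longer path. The paper simply observes that for each fixed $z\in\mathbb D$ the map $t\mapsto\f_t(z)$ is a continuous one-parameter multiplicative semigroup in $\{w:|w|\le 1\}$, hence equals $e^{itf(z)}$ for a unique $f(z)$ with $\Im f(z)\ge 0$; analyticity of $f$ follows by writing $itf(z)=\log\f_t(z)$ for small $t$, which is legitimate on compacta by part $a)$. Your route through the canonical factorization --- uniqueness giving $\mu_{t+s}=\mu_t+\mu_s$, then $\mu_t=t\mu$ via Cauchy's functional equation --- is correct and actually establishes the stronger conclusion of the subsequent Corollary~\ref{siminn}, so you are merging the two results. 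The paper's separation is cleaner precisely because the measure-additivity and the rational-to-real extension you flag as obstacles are completely sidestepped in proving the proposition as stated; they only reappear (and are dispatched in one line, since by then $\f_t=e^{itf}$ is already known) in the corollary.
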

\begin{proof}
$a)$: By the weak$^*$ continuity of $\f_t$ we have
\begin{equation}\label{int}
\oint_{\partial\mathbb D}\f_t(z) g(z){\rm d}z \longrightarrow \oint_{\partial\mathbb D}g(z){\rm d}z
\end{equation}
as $t\to 0$, for all $g\in L^1(\partial\mathbb D,{\rm d}\th)$.

Let $z_0\in\mathbb D$. Since $\f_t\in \mathbb H^{\infty}(\mathbb D)$ the value  $\f_t(z_0)$ is given by the Cauchy integral
\[
\f_t(z_0) = \frac{1}{2\pi i}\oint_{\partial\mathbb D}\frac{\f_t(z)}{z-z_0}{\rm d}z
\]
so, choosing $g(z)\equiv \frac{1}{2\pi i}\frac{1}{z-z_0}$ in \eqref{int}, we see that $\f_t(z_0)\to 1$ as $t\to 0$. As the family of analytic functions $\{\f_t : t >0\}$ is bounded, hence normal, the convergence is indeed uniform on compact subsets of $\mathbb D$.

$b)$: Fix $z_0\in\mathbb D$ and suppose $z_0$ is a zero of some $\f_t$. Let $t_0\equiv \inf\{t>0: \f_t(z_0)=0\}$. Since $\f_t(z_0)\to 1$ as $t\to 0$ we have $t_0 >0$. Write now $t= ns$ with $s\in (0,t_0)$ and $n$ an integer. Then 
$\f_t(z_0)=\f_{ns}(z_0)= \f_s(z_0)^n\neq 0$, so we conclude that $\f_t$ never vanishes in $\mathbb D$ for every $t>0$.

$c)$: For a fixed $z\in\mathbb D$, the map $t\mapsto \f_t(z)$ is a one-parameter semigroup of complex numbers with modulus less than one, therefore $\f_t(z) = e^{itf(z)}$ for a complex number $f(z)$ such that $\Im f(z)\geq 0$. 

Now, by point $a)$, on any given compact subset of $\mathbb D$, we have $|\f_t(z) - 1| <1$ for a sufficiently small $t>0$; thus $itf(z) = \log\f_t(z)$ showing that $f$ an analytic function on $\mathbb D$. This also shows that $f(z)$ has a real radial limit to almost all points of $\partial\mathbb D$.
\end{proof}
We shall say that $\f\in \mathbb H^{\infty}(\mathbb D)$ is \emph{symmetric} if $\f(z) = \bar\f(\bar z)$ for all $z\in \mathbb D$, thus iff $\f$ is real on the interval $(-1,1)$ ($\f$ is real analytic). Of course $\f$ is symmetric iff the equality $\f(z) = \bar\f(\bar z)$ holds almost everywhere on the boundary $\partial\mathbb D$. Note that a Blaschke factor $B_a$ is symmetric iff $a$ is real, thus a Blaschke product is symmetric iff the non-real zeros come in pairs, with multiplicity.

We now determine all semigroups of inner functions.
\begin{corollary}\label{siminn}
Every one-parameter semigroup of inner functions $\f_t$ on $\mathbb D$ is given by
\begin{equation}\label{innersemi}
\f_t(z) = e^{it\l}\exp\left( - t\int_{-\pi}^\pi \frac{e^{i\theta}+z}{e^{i\theta}-z}{\rm d}\mu(e^{i\theta}) \right) , 
\end{equation}
where $\mu$ is a positive, finite measure on $\partial\mathbb D$ which is singular with respect to the Lebesgue measure and $\l\in\RR$ is a constant. 

Conversely, given a finite, positive, Lebesgue singular measure $\mu$ and a constant $\l\in\RR$, formula \eqref{innersemi} defines a one-parameter semigroup of inner functions on $\mathbb D$.

All functions $\f_t$ are symmetric if and only if $\l =0$ and $\mu(e^{i\theta}) =\mu(e^{-i\theta})$.  
\end{corollary}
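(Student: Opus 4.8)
The plan is to derive Corollary~\ref{siminn} entirely from Proposition~\ref{innchar} together with the uniqueness of the canonical factorization \eqref{innereq}. For the forward direction, fix a one-parameter semigroup $\f_t$ of inner functions. By Proposition~\ref{innchar}$(b)$ each $\f_t$ is singular, so its Blaschke part in \eqref{innereq} is trivial and we may write, uniquely,
\[
\f_t(z) = \alpha_t\,\exp\!\left(-\int_{-\pi}^\pi \frac{e^{i\theta}+z}{e^{i\theta}-z}\,{\rm d}\mu_t(e^{i\theta})\right),
\]
with $|\alpha_t|=1$ and $\mu_t$ a finite, positive, Lebesgue-singular measure on $\partial\mathbb D$. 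Substituting this into $\f_{t+s}=\f_t\f_s$ and noting that the right-hand side is again in canonical form, with constant $\alpha_t\alpha_s$ and measure $\mu_t+\mu_s$ (still finite, positive and Lebesgue-singular), the uniqueness clause after \eqref{innereq} forces $\alpha_{t+s}=\alpha_t\alpha_s$ and $\mu_{t+s}=\mu_t+\mu_s$. To solve the first relation, evaluate at $z=0$, where $\frac{e^{i\theta}+z}{e^{i\theta}-z}=1$: then $\alpha_t=\f_t(0)/|\f_t(0)|$, while Proposition~\ref{innchar}$(c)$ gives $\f_t(0)=e^{itf(0)}$, so $\alpha_t=e^{it\Re f(0)}$ and we set $\l\equiv\Re f(0)\in\RR$. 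For the second, additivity yields $\mu_q=q\,\mu_1$ for every nonnegative rational $q$, and $t\mapsto\mu_t$ is increasing in the sense of positive measures since $\mu_{t+s}-\mu_t=\mu_s\geq 0$ on every Borel set; squeezing $\mu_t$ between $\mu_{q_1}=q_1\mu_1$ and $\mu_{q_2}=q_2\mu_1$ for rationals $q_1\leq t\leq q_2$ and letting $q_1\uparrow t$, $q_2\downarrow t$ gives $\mu_t=t\,\mu$ with $\mu\equiv\mu_1$. This is precisely \eqref{innersemi}.

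For the converse, start from a finite positive Lebesgue-singular measure $\mu$ and $\l\in\RR$, and let $\f_t$ be given by \eqref{innersemi}. For each $t\geq 0$ the measure $t\mu$ is again finite, positive and Lebesgue-singular and $|e^{it\l}|=1$, so \eqref{innereq} shows $\f_t$ is a (singular) inner function; the semigroup law $\f_{t+s}=\f_t\f_s$ and $\f_0=1$ are immediate from the exponential form. It remains to check weak$^*$ continuity of $t\mapsto\f_t^*$ in $L^\infty(\partial\mathbb D,{\rm d}\theta)$. Writing $I_\mu(z)\equiv\int_{-\pi}^\pi\frac{e^{i\theta}+z}{e^{i\theta}-z}\,{\rm d}\mu$, its real part is the Poisson integral of $\mu$, hence $\geq 0$, so $I_\mu$ admits nontangential boundary values a.e.\ whose real part is the symmetric derivative of $\mu$, which vanishes a.e.\ by singularity; thus the boundary value of $I_\mu$ at $e^{i\theta}$ is purely imaginary, say $i w(\theta)$ with $w$ real and finite a.e., and $\f_t^*(e^{i\theta})=e^{it(\l-w(\theta))}$ for a.e.\ $\theta$, of modulus one and continuous in $t$. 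Since $|\f_t^*|\leq 1$, dominated convergence gives $\int(\f_t^*-\f_{t_0}^*)\bar g\,{\rm d}\theta\to 0$ for every $g\in L^1$, i.e.\ weak$^*$ continuity.

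For the symmetry statement, let $\tilde\mu$ be the push-forward of $\mu$ under $e^{i\theta}\mapsto e^{-i\theta}$. Conjugating \eqref{innersemi} and changing variables gives $\bar\f_t(\bar z)=e^{-it\l}\exp(-t\,I_{\tilde\mu}(z))$, so all $\f_t$ are symmetric iff this equals $\f_t(z)$ for all $t\geq 0$ and $z\in\mathbb D$. Evaluating at $z=0$, where both Herglotz integrals equal the common total mass $\mu(\partial\mathbb D)=\tilde\mu(\partial\mathbb D)$, forces $e^{2it\l}=1$ for all $t$, hence $\l=0$; then $\exp(-t\,I_\mu(z))\equiv\exp(-t\,I_{\tilde\mu}(z))$, and taking logarithms near $z=0$ for small $t>0$ and using analyticity yields $I_\mu\equiv I_{\tilde\mu}$ on $\mathbb D$, whence $\mu=\tilde\mu$ because a finite measure is recovered as the weak$^*$ limit of its Poisson integrals. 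Conversely, $\l=0$ and $\mu=\tilde\mu$ make every $\f_t$ symmetric by reversing the computation; the condition $\mu=\tilde\mu$ is exactly $\mu(e^{i\theta})=\mu(e^{-i\theta})$.

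The only genuine work is the measure-theoretic bookkeeping: upgrading the additive law $\mu_{t+s}=\mu_t+\mu_s$ to $\mu_t=t\mu$ for all real $t$ (this is where monotonicity of positive measures is needed, the rational case alone being insufficient), together with the boundary-value facts underlying both the weak$^*$ continuity and the injectivity of the Herglotz transform. Everything else reduces directly to Proposition~\ref{innchar}.
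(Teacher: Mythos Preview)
Your proof is correct and follows the same overall architecture as the paper's: use Proposition~\ref{innchar}$(b)$ to know each $\f_t$ is singular, plug into the canonical factorization \eqref{innereq}, and exploit uniqueness.  Where you diverge is in the step $\mu_t=t\mu$: the paper simply compares the canonical form with the expression $\f_t=e^{itf}$ of Proposition~\ref{innchar}$(c)$, which (taking $-\log|\f_t(z)|=t\,\Im f(z)$ and recognizing the left side as the Poisson integral of $\mu_t$) gives linearity in $t$ in one line; you instead derive the additive law $\mu_{t+s}=\mu_t+\mu_s$ and run a monotone rational-squeeze argument.  Both are fine; the paper's route is shorter, yours avoids the slightly implicit ``comparison'' and makes the regularity of $\alpha_t$ explicit via $\alpha_t=e^{it\Re f(0)}$, which is arguably cleaner than the paper's bare ``$\alpha$ is a semigroup, so $\alpha(t)=e^{it\lambda}$''.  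Your treatments of the converse (including weak$^*$ continuity) and of the symmetry claim are more detailed than the paper's, which dismisses these as immediate.
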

\begin{proof}
Let $\f$ be a semigroup of inner functions $\f_t$ on $\mathbb D$. By point $c)$ in Prop. \ref{innchar} every $\f_t$ is singular for every $t\geq 0$. By formula \eqref{innereq} we then have
\[
\f_t(z) = \a(t)\exp\left( - \int_{-\pi}^\pi \frac{e^{i\theta}+z}{e^{i\theta}-z}{\rm d}\mu_t(e^{i\theta}) \right)\ ,
\]
where $\a(t)$ is complex numbers of modulus one and $\mu_t$ is a Lebesgue singular measure.
Clearly $\a$ is a semigroup, so $\a(t) = e^{it\l}$ for a real constant $\l$. By comparing the above expression with the formula $\f_t = e^{itf}$ given by point $c)$ in Prop. \ref{innchar}, we see that $\frac{\mu_t}{t}$ is a constant, namely $\mu_t = t\mu$ for a Lebesgue singular measure $\mu$ as desired.

The rest is immediate.
\end{proof}
Therefore if $\f$ is a symmetric inner function then:
\[
\text{$\f$ belongs to a one-parameter semigroup of symmetric inner functions} \ \Leftrightarrow \ \text{$\f$ is singular.}
\]
Set $h(z)\equiv i\frac{1+z}{1-z}$. We now use the conformal maps $h$ and $\log$ to identify $\mathbb D$ with $\mathbb S_{\infty}$ and with $\mathbb S_{\pi}$ as follows
\[
\mathbb D \overset{h}{\longrightarrow}\mathbb S_{\infty}\overset{\log}{\longrightarrow}\mathbb S_{\pi}\ .
\]
With this identification we shall carry the above notions to $\mathbb S_{\infty}$ and $\mathbb S_{\pi}$. In particular given a function $\f\in \mathbb H^{\infty}(\mathbb S_{\pi})$ (resp. $\f\in \mathbb H^{\infty}(\mathbb S_{\infty})$) we shall say that $\f$ is \emph{symmetric} iff $\f(q+i\pi) = \bar\f(q)$ (resp. $\f(-q) = \bar\f(q)$) for almost all $q\in\RR$; and $\f$ is \emph{inner} if $|\f(q)| = |\f(q+i\pi)| =1$ for almost all $q\in\RR$ (resp. $|\f(q)| =1$ for almost all $q>0$). 

Note that by eq. \eqref{innereq} every inner function $\f$ on $\Si$ can be uniquely written  as
\begin{equation}\label{siminn2}
\f(p) = B(p) \exp\left( - i\int_{-\infty}^{+\infty} \frac{1+ps}{p - s}{\rm d}m(s) \right) . 
\end{equation}
Here $m$ is a measure on $\RR\cup\{\infty\}$ singular with respect to the Lebesgue measure (the point at infinity can have positive measure). A factor in the Blasckhe product $B$ here have the form $\frac{p-\a}{p+\a}$ with $\Im \a\geq 0$ and is symmetric iff $\Re\a =0$. Clearly $\f$ is symmetric iff the Blasckhe factors corresponding to $\a$ and $-\bar\a$, with $\Re\a \neq 0$, appear in pairs (with the same multiplicity) and $m(s) = m(-s)$.
\begin{corollary}\label{onepar}
Let $\f_t$, $t\geq 0$, be a semigroup of symmetric inner functions on $\mathbb S_{\infty}$. Then 
$\f_t(z) = \exp(itf(z))$ where $f$ is holomorphic on $\mathbb S_{\infty}$ with $\Im f(z)\geq 0$. For almost all $p\in\RR$ the limit $f^*(p) = \lim_{\e\to 0^+}f(p+i\e)$ exists almost everywhere and is real with $f^*(-p) = -f^*(p)$. For $p\geq 0$ we have:
\begin{equation}\label{generator}
f^*(p) = cp + \int_0^{+\infty}\!\frac{p}{\l^2 - p^2}{\rm d}\nu(\l)
\end{equation}
where $c\geq 0$ is a constant and $(1+\l^2)^{-1}\nu(\l)$ is a finite, positive measure on $[0,+\infty)$ which is singular with respect to the Lebesgue measure.

Conversely every function $f^*$ on $[0,\infty)$ given by the right hand side of \eqref{generator} is the boundary value of a function $f$ analytic in $\mathbb S_{\infty}$ and $\f_t \equiv e^{itf}$, $t\geq 0$, is a one-parameter semigroup of symmetric inner functions on $\mathbb S_{\infty}$.
\end{corollary}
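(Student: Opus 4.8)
The plan is to transport to the strip $\mathbb S_\infty$ the description of one-parameter semigroups of inner functions on the disk given by Corollary \ref{siminn} and Proposition \ref{innchar}, by means of the conformal map $h(z)=i\frac{1+z}{1-z}$ fixed above, and then to read off the explicit form of the generator. By Proposition \ref{innchar}(c), transferred along $h$, every weak$^*$-continuous one-parameter semigroup of inner functions $\f_t$ on $\mathbb S_\infty$ has the form $\f_t=e^{itf}$ with $f$ holomorphic on $\mathbb S_\infty$, $\Im f\geq 0$, and with real boundary values $f^*(p)=\lim_{\e\to 0^+}f(p+i\e)$ for a.e.\ $p$; and symmetry of all the $\f_t$ is equivalent to $\overline{\f_t(p)}=\f_t(-p)$ on the boundary, that is to $f^*(-p)=-f^*(p)$ a.e.

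For the direct implication, observe first that each $\f_t$ is singular (Proposition \ref{innchar}(b), transferred), so in the representation \eqref{siminn2} the Blaschke factor is trivial and $\f_t(p)=\exp\big(-i\int_{\RR\cup\{\infty\}}\tfrac{1+ps}{p-s}\,{\rm d}m_t(s)\big)$ with $m_t$ finite, positive and Lebesgue-singular. Comparing with $\f_t=e^{itf}$ forces $m_t=t\,m$ for a fixed such $m$, invariant under $s\mapsto -s$ by the symmetry, whence
\[
f(p)=c\,p-\int_{\RR}\frac{1+ps}{p-s}\,{\rm d}m_0(s),\qquad c\equiv m(\{\infty\})\geq 0,\quad m_0\equiv m|_{\RR}\ ,
\]
the atom of $m$ at $\infty$ producing the linear term since $\tfrac{1+ps}{p-s}\to -p$ as $s\to\infty$. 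Now evaluate the boundary value at $p\geq 0$ by pairing each $s>0$ with $-s$ in the symmetric measure $m_0$ and isolating its possible atom at $0$: using
\[
-\frac{1+ps}{p-s}-\frac{1-ps}{p+s}=\frac{2p(1+s^2)}{s^2-p^2}\quad\text{on }(0,\infty)
\]
and the value $-1/p$ of $\tfrac{p}{\l^2-p^2}$ at $\l=0$, one gets
\[
f^*(p)=c\,p+\int_{[0,\infty)}\frac{p}{\l^2-p^2}\,{\rm d}\nu(\l)\ ,\qquad p\geq 0\ ,
\]
with $\nu(\{0\})=m_0(\{0\})$ and ${\rm d}\nu(\l)=2(1+\l^2)\,{\rm d}m_0(\l)$ on $(0,\infty)$. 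Then $(1+\l^2)^{-1}{\rm d}\nu$ is dominated by $2\,{\rm d}m_0$, hence is a finite, positive, Lebesgue-singular measure on $[0,\infty)$; $c\geq 0$; and oddness of $f^*$ is immediate from the symmetry of $m_0$.

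For the converse, given $c\geq 0$ and a finite positive Lebesgue-singular measure $(1+\l^2)^{-1}\nu$ on $[0,\infty)$, I reverse the construction: take $m_0$ to be the symmetric measure on $\RR$ with atom $\nu(\{0\})$ at $0$ and ${\rm d}m_0(\l)=\tfrac12(1+\l^2)^{-1}{\rm d}\nu(\l)$ on $(0,\infty)$, set $m=m_0+c\,\delta_\infty$, and pull $m$ back by $h^{-1}$ to a conjugation-invariant finite positive Lebesgue-singular measure $\mu$ on $\partial\mathbb D$. The converse half of Corollary \ref{siminn}, applied with $\l=0$, then yields a weak$^*$-continuous one-parameter semigroup of symmetric inner functions on $\mathbb D$, hence on $\mathbb S_\infty$; running the computation above backwards shows that its generator is the prescribed $f$. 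That $f$ is holomorphic on $\mathbb S_\infty$ with $\Im f\geq 0$ is visible from $\Im\big(-\tfrac{1+ps}{p-s}\big)=\tfrac{(1+s^2)\Im p}{|p-s|^2}\geq 0$ for $\Im p>0$, and the a.e.\ reality of $f^*$ follows since the boundary measure is singular; both are also inherited through the conformal identification with $\mathbb D$.

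The step I expect to be the main obstacle is the measure bookkeeping in the passage $\mathbb D\to\mathbb S_\infty$: tracking how the atom of $\mu$ at $e^{i\theta}=1$ (the $h$-preimage of $\infty$) becomes the linear term $c\,p$, how a possible atom at $e^{i\theta}=-1$ becomes the $\l=0$ contribution $-m_0(\{0\})/p$, and verifying that the weight $1+\l^2$ enters exactly as needed for $(1+\l^2)^{-1}\nu$ to be finite. Everything else is routine kernel arithmetic or a direct appeal to Corollary \ref{siminn} and Proposition \ref{innchar}.
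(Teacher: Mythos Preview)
Your proposal is correct and follows exactly the route the paper indicates: the paper's own proof is the single line ``This is a consequence of Cor.~\ref{siminn} and formula~\eqref{siminn2}'', and you have carried out precisely that derivation---transporting Proposition~\ref{innchar} and Corollary~\ref{siminn} along $h$, invoking singularity to kill the Blaschke part, reading off $m_t=t\,m$, and then doing the kernel arithmetic with~\eqref{siminn2} to extract the linear term from the atom at $\infty$ and symmetrize over $\pm s$. The computations (the limit $\tfrac{1+ps}{p-s}\to -p$, the pairing identity yielding $\tfrac{2p(1+s^2)}{s^2-p^2}$, and the identification of $\nu$ with $2(1+\l^2)\,dm_0$) are all correct, as is the converse.
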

\begin{proof} 
This is a consequence of Cor. \ref{siminn} and formula \eqref{siminn2}.
\end{proof}
With $\psi\in L^{\infty}(\RR,{\rm d}q)$, denote by $M_{\psi}$ 
the operator of multiplication by $\psi$ on $L^2(\RR,{\rm d}q)$. Set also $\psi_s(q) = \psi(q+ s)$.
\begin {lemma}\label{anal}
Let $\psi\in L^{\infty}(\RR,{\rm d}q)$. The operator-valued map $s\in\RR\to V(s)\equiv M_{\psi_s}\in B(L^{2}(\RR,{\rm d}q))$ extends to a bounded weakly continuous function on the strip $\overline{\mathbb S_{a}}$, $a>0$, analytic in $\mathbb S_{a}$, such that $V(s+ia) = V(s)^*$ if and only if $\psi$ is the boundary value of a function in $\mathbb H^{\infty}(\mathbb S_{a})$ such that $\psi(s+ia) =\bar\psi(a)$ for almost $s\in\mathbb R$.
\end{lemma}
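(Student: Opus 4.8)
The plan is to prove the two implications separately, the substantive one being ``$V$ extends $\Rightarrow$ $\psi$ is a boundary value'', which I reduce to the classical Poisson (harmonic measure) representation on the strip. For the easy direction, suppose $\psi$ is the boundary value on $\RR$ of a function $\Psi\in\Ha$ with $\Psi(s+ia)=\bar\psi(s)$ for almost every $s$. For $z\in\overline{\mathbb S_{a}}$ and $q\in\RR$ one has $q+z\in\overline{\mathbb S_{a}}$, so I set $V(z)\equiv M_{\Psi_z}$, the multiplication operator by $q\mapsto\Psi(q+z)$. By the maximum modulus principle $\sup_{\overline{\mathbb S_{a}}}|\Psi|=\|\Psi\|_{\infty}$, hence $\|V(z)\|\le\|\Psi\|_{\infty}$, and for real $z=s$ this is $M_{\psi_s}$ since $\Psi$ restricts to $\psi$ on $\RR$. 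Analyticity of $z\mapsto\langle V(z)f,g\rangle=\int_{\RR}\Psi(q+z)f(q)\overline{g(q)}\,{\rm d}q$ on $\mathbb S_{a}$ follows by differentiating under the integral sign (or by Morera and Fubini), and weak continuity up to the two edges follows from the uniform bound together with the standard weak-$*$ convergences $\Psi(\cdot+iy)\to\psi$ and $\Psi(\cdot+ia-iy)\to\Psi(\cdot+ia)$ in $L^{\infty}$ as $y\to0^{+}$. Finally $V(s+ia)=M_{\Psi(\cdot+s+ia)}=M_{\overline{\psi(\cdot+s)}}=M_{\psi_s}^{*}=V(s)^{*}$ by the symmetry hypothesis.

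Conversely, assume the stated extension $z\mapsto V(z)$ exists and put $C\equiv\sup_{z\in\overline{\mathbb S_{a}}}\|V(z)\|<\infty$. For $\e>0$ define the scalar function
\[
G_\e(z)\equiv\tfrac1\e\big\langle V(z)\chi_{[0,\e]},\chi_{[0,\e]}\big\rangle ,\qquad z\in\overline{\mathbb S_{a}} ,
\]
which is analytic on $\mathbb S_{a}$, continuous on $\overline{\mathbb S_{a}}$, and satisfies $|G_\e(z)|\le\|V(z)\|\le C$ uniformly in $\e$. On the lower edge $G_\e(s)=\tfrac1\e\int_0^\e\psi(q+s)\,{\rm d}q\to\psi(s)$ for a.e.\ $s$, by the Lebesgue differentiation theorem; on the upper edge, using $V(s+ia)=V(s)^{*}$, one gets $G_\e(s+ia)=\tfrac1\e\,\overline{\int_0^\e\psi(q+s)\,{\rm d}q}\to\overline{\psi(s)}$ for a.e.\ $s$; all these boundary values are dominated by $\|\psi\|_{\infty}$. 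Since $G_\e$ is bounded and continuous up to $\partial\mathbb S_{a}=\RR\cup(\RR+ia)$, it equals the Poisson integral of its boundary values,
\[
G_\e(z)=\int_{\RR}\omega_0(z,t)G_\e(t)\,{\rm d}t+\int_{\RR}\omega_a(z,t)G_\e(t+ia)\,{\rm d}t ,
\]
where $\omega_0(z,\cdot),\omega_a(z,\cdot)$ are the harmonic measures of the two edges, of total masses $1-\Im z/a$ and $\Im z/a$. Letting $\e\to0$ and using dominated convergence in each integral, $G_\e(z)$ converges pointwise on $\mathbb S_{a}$ to
\[
\Psi(z)\equiv\int_{\RR}\omega_0(z,t)\psi(t)\,{\rm d}t+\int_{\RR}\omega_a(z,t)\overline{\psi(t)}\,{\rm d}t ,
\]
so by Vitali's theorem (a uniformly bounded, pointwise convergent family of analytic functions) the convergence is locally uniform and $\Psi\in\Ha$ with $\|\Psi\|_{\infty}\le C$. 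As $\Psi$ is the Poisson integral of the boundary data $(\psi,\bar\psi)$ and the opposite-edge harmonic mass vanishes on approach to each edge, Fatou's theorem on the boundary behaviour of Poisson integrals shows that $\Psi$ has non-tangential boundary value $\psi$ a.e.\ on $\RR$ and $\bar\psi$ a.e.\ on $\RR+ia$; that is, $\psi$ is the boundary value of $\Psi\in\Ha$ and $\psi(s+ia)=\bar\psi(s)$ a.e.

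I expect the last step to be the main obstacle: promoting the scalar sections $G_\e$, which are analytic merely because $V(\cdot)$ is, to an honest boundary-value statement for $\psi$, i.e.\ controlling the boundary behaviour of the limit while keeping the two edges from interfering. This is exactly what the Poisson representation together with the vanishing of the opposite-edge harmonic mass supplies; a Montel subsequence argument would serve in place of the appeal to Vitali, and one never needs to verify that $V(z)$ is itself a multiplication operator (although this now follows a posteriori, since $V(z)$ commutes with every unitary multiplication operator by the Schwarz reflection principle on the strip).
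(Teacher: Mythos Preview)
Your proof is correct, but it follows a genuinely different route from the paper's.

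For the nontrivial implication, the paper tests $V(z)$ against an arbitrary $g\in L^{1}(\RR)$ (factored as $g=g_1\bar g_2$ with $g_1,g_2\in L^2$), obtaining a family of bounded analytic functions $V_g(z)=\langle g_1,V(z)g_2\rangle$ whose boundary value on $\RR$ is $\int\psi(q+s)g(q)\,{\rm d}q$. The maximum modulus bound $|V_g(iu)|\le C\|g\|_1$ then exhibits $g\mapsto V_g(iu)$ as a bounded linear functional on $L^1$, hence (by $L^1$--$L^\infty$ duality) as integration against an $L^\infty$ function $\psi_{iu}$; the paper then asserts, without further detail, that $(q,u)\mapsto\psi_{iu}(q)$ assembles into the desired $\Ha$ function.

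Your approach instead fixes the specific test vectors $\chi_{[0,\e]}$, so that the resulting scalar functions $G_\e$ have boundary values that converge a.e.\ to $\psi$ (resp.\ $\bar\psi$) by Lebesgue differentiation, and then invokes the Poisson representation on the strip plus Vitali's theorem to produce $\Psi$ directly as a Poisson integral. This is more constructive and arguably more complete: the paper's duality argument leaves the analyticity of $(q,u)\mapsto\psi_{iu}(q)$ and the verification of the boundary symmetry to the reader, whereas your Poisson formula delivers both at once. The trade-off is that you import a bit more classical machinery (harmonic measure on the strip, Fatou's theorem), while the paper's argument stays closer to soft functional analysis. Your closing parenthetical observation, that $V(z)$ is itself a multiplication operator, is a nice bonus not mentioned in the paper.
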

\begin{proof}
Suppose that  $s\in\RR\to V(s)\in B(L^{2}(\RR,{\rm d}q))$ extends to a bounded weakly continuos function on the strip $\overline{\mathbb S_{a}}$, analytic in $\mathbb S_{a}$, and $V(s+ia) = V(s)^*$. Then for every $g\in L^{1}(\RR,{\rm d}q)$ the map
\[
s\in\RR\to (g_1, V(s) g_2) = \int\psi(q+s)g(q){\rm d}q
\]
is the boundary value of a function $V_g$ in $\mathbb H^{\infty}(\mathbb S_a)$ such that $V_g(s+ia) = V^*_{\bar g}$. Here $g_1 , g_2$ are $L^2$-functions with $g_1\bar g_2 = g$.
For a fixed $u\in (0,a)$ the map $g\to V_g(iu)$ is a linear functional on $L^{\infty}(\RR,{\rm d}q)$ which is weak$^*$ continuos by the maximum modulus theorem. Thus $V_g(iu) = \int \psi_{iu}(q)g(q){\rm d}q$  with $\psi_{iu}$ a $L^{\infty}$-function. Setting $\psi(z)= \psi_{iu}(q)$ with $z= q+iu$ one can then show that $\psi $ is
a function in $\mathbb H^{\infty}(\mathbb S_{a})$ and $\psi(s+ia) =\bar\psi(a)$.

The converse statement is easily verified.
\end{proof}
By a \emph{scattering function} $S_2$ we shall mean a symmetric inner function on $\mathbb S_\pi$ which is continuous on $\overline{\mathbb S_\pi}$ with the additional symmetry $S_2(-\bar z)= S_2(z)$ (cf. \cite{Lechner}).

Let $\f$ be an inner function on $\mathbb S_\pi$ which is continuous on $\overline{\mathbb S_\pi}$. Viewed as a function on $\overline{\mathbb D}$, $\f$ has only two possible singularities at $1$ and $-1$; if it is further singular then by eq. \eqref{innersemi} 
\[
\f(z)=\exp\left(c_1\frac{z+1}{z-1} - c_2\frac{z-1}{1+z} \right)\ . 
\]
for some constants $c_2\geq 0$, $c_2\geq 0$ and $\f$ is a scattering function iff $c_1 = c_2$.
\begin{corollary}
Let $\f_t$ be a one-parameter semigroup of symmetric inner functions on $\mathbb S_\infty$ and let $f$ be its generator, i.e. $\f_t(z) = e^{itf(z)}$. The following are equivalent:
\begin{itemize}
\item $f$ is holomorphic in $\mathbb C$ with at most one singularity in $0$,
\item $f(z) = c_1 z - c_2 \frac1z$
for some constants $c_2\geq 0$, $c_2\geq 0$,
\item viewed as a function on $\mathbb S_\pi$, $\f_t$ is continuous up to the boundary for every $t\geq0$.
\end{itemize}
In particular $\f_t$ is a scattering function for every $t\geq 0$ iff $f(z) = c(z - \frac1z)$ with $c\geq 0$.
\end{corollary}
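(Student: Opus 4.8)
The plan is to treat $(ii)$ as the hub and to establish $(ii)\Rightarrow(i)$, $(ii)\Rightarrow(iii)$, $(iii)\Rightarrow(ii)$ and $(i)\Rightarrow(ii)$, using the normal form of one-parameter semigroups of symmetric inner functions obtained above (Proposition \ref{innchar}, Corollaries \ref{siminn} and \ref{onepar}) together with the conformal dictionary $\mathbb D\xrightarrow{h}\Si\xrightarrow{\log}\mathbb S_\pi$. First I would record that, by point $b)$ of Proposition \ref{innchar}, every $\f_t$ is a \emph{singular} inner function, so by Corollary \ref{siminn} (in the disk picture) its singular measure is $t\mu$ for one fixed positive, finite, Lebesgue-singular measure $\mu$ on $\partial\mathbb D$, and $\f_t(z)=\exp\!\bigl(-t\int_{-\pi}^{\pi}\frac{e^{i\theta}+z}{e^{i\theta}-z}\,{\rm d}\mu(e^{i\theta})\bigr)$; thus $f$ is completely determined by $\mu$.

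The two easy directions are $(ii)\Rightarrow(i)$ and $(ii)\Rightarrow(iii)$. If $f(z)=c_1z-c_2/z$ with $c_1,c_2\geq0$, then $f$ is rational and holomorphic on $\mathbb C\setminus\{0\}$ with at most a simple pole at $0$, which is $(i)$; and in the strip variable $q$, with $z=e^q$, one has $\f_t=\exp\!\bigl(it(c_1e^q-c_2e^{-q})\bigr)$, which is \emph{entire} in $q$ and hence continuous on $\overline{\mathbb S_\pi}$, which is $(iii)$. For $(iii)\Rightarrow(ii)$ I would observe that ``continuity up to the boundary of $\mathbb S_\pi$'' means, after transporting to the disk through $\log$ and $h$, continuity on $\overline{\mathbb D}\setminus\{1,-1\}$, because the two ends of the strip correspond to the boundary points $\pm1$ of $\mathbb D$. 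Since $\f_t$ is singular, the discussion preceding the corollary forces its singular measure onto $\{1,-1\}$, i.e.\ $\mu=a\,\delta_1+b\,\delta_{-1}$ with $a,b\geq0$. Substituting $\mu$ into the formula for $\f_t$ above and using $\frac{1+z}{1-z}=-iw$ and $\frac{1-z}{1+z}=i/w$ for $w=h(z)\in\Si$ gives $\f_t(w)=\exp\!\bigl(it(aw-b/w)\bigr)$, hence $f(z)=c_1z-c_2/z$ with $c_1=a$ and $c_2=b$.

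For the remaining implication $(i)\Rightarrow(ii)$ I would invoke Corollary \ref{onepar}: $f^{*}(p)=cp+\int_0^{\infty}\frac{p}{\lambda^2-p^2}\,{\rm d}\nu(\lambda)$ for $p\geq0$, with $c\geq0$ and $(1+\lambda^2)^{-1}{\rm d}\nu$ a finite positive Lebesgue-singular measure on $[0,\infty)$. If $(i)$ holds then $f$, and therefore the Cauchy transform $p\mapsto\int_0^{\infty}\frac{p}{\lambda^2-p^2}\,{\rm d}\nu(\lambda)$, extends holomorphically across every point of $(0,\infty)$; but Stieltjes inversion recovers $\nu$ on $(0,\infty)$ from the imaginary part of the boundary values of that transform, so $\nu$ carries no mass there. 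Hence $\nu=c_2\,\delta_0$ with $c_2=\nu(\{0\})\geq0$ finite, so $f^{*}(p)=cp-c_2/p$ on $(0,\infty)$, and by the identity theorem $f(z)=cz-c_2/z$ on $\Si$ and, by $(i)$, on all of $\mathbb C\setminus\{0\}$: this is $(ii)$ with $c_1=c$. Finally, for the ``in particular'' clause, in case $(ii)$ continuity up to the boundary is automatic by $(ii)\Rightarrow(iii)$, and by the discussion preceding the corollary such a symmetric inner function is a scattering function precisely when the atoms $a=c_1$ and $b=c_2$ have equal mass; through $h$ this reads $f(z)=c\bigl(z-\tfrac1z\bigr)$ with $c\geq0$.

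I expect the only genuine obstacle to be the step inside $(iii)\Rightarrow(ii)$: that boundary continuity of every $\f_t$ pins the singular measure $\mu$ onto the two corner points. This rests on the classical fact that a singular inner function extends continuously exactly to the boundary points lying outside the support of its singular measure (see \cite{Rudin}), which is precisely what the paragraph preceding the corollary already uses. Everything else is the conformal bookkeeping above together with direct appeals to Proposition \ref{innchar} and Corollaries \ref{siminn} and \ref{onepar}.
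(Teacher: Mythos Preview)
Your proof is correct and follows essentially the same route as the paper, which simply says ``Immediate by the above discussion'': the key step $(iii)\Rightarrow(ii)$ is exactly the paragraph preceding the corollary (continuity on $\overline{\mathbb S_\pi}$ forces the singular measure to sit on $\{1,-1\}$), and the remaining directions are elementary.

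One minor remark: your argument for $(i)\Rightarrow(ii)$ via Stieltjes inversion of the representation in Corollary~\ref{onepar} is valid but more work than necessary. The paper's implicit route is simply $(i)\Rightarrow(iii)$: if $f$ is holomorphic on $\mathbb C\setminus\{0\}$ then $\f_t=e^{itf}$ is holomorphic on $\mathbb C\setminus\{0\}$, and since under $z=e^q$ the point $0$ lies at infinity of the strip, $\f_t$ is entire in $q$ and in particular continuous on $\overline{\mathbb S_\pi}$. Then $(iii)\Rightarrow(ii)$ closes the loop. Either way the conclusion is the same.
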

\begin{proof}
Immediate by the above discussion.
\end{proof}
\section{M\"obius covariant nets of von Neumann algebras}
\label{CN}
The reader may find in \cite{LN} the basic properties of local, M\"obius covariant net of von Neumann algebras on $\mathbb R$. Here we recall the definition.
\smallskip

A \emph{net $\A$ of  von~Neumann algebras on $S^1$} is a map
\[
I\to\A(I)
\]
from $\I$, the set of open, non-empty, non-dense intervals of $S^1$, to the set of von Neumann algebras on a (fixed) Hilbert space $\H$ that verifies the following isotony property:
\begin{description}
\item[$\textnormal{\textsc{1. Isotony}}:$] {\it If $I_1$, $I_2$ are intervals and $I_1\subset I_2$, then
\[
\A(I_1)\subset\A(I_2)\ .
\]}
\end{description}
The net $\A$ is said to be  \emph{M\"obius covariant} if the following properties 2,3 and 4 are satisfied:
\begin{description}
\item[\textnormal{\textsc{2. M\"obius invariance}}:] {\it There is a
strongly continuous unitary representation $U$ of ${\bf G}$ on $\H$ such that
\[
U(g)\A(I)U(g)^*=\A(gI)\ , \quad g\in {\bf G},\ I\in\I.
\]}
\end{description}
Here $\bf G$ denotes the M\"obius group (isomorphic to $PSL(2,\mathbb R)$) that naturally acts on $S^1$.
\begin{description}
\item[$\textnormal{\textsc{3. Positivity of the energy}}:$]
{\it $U$ is a positive energy representation.} 
\end{description}
\begin{description}
\item[\textnormal{\textsc{4. Existence and uniqueness of the vacuum}}:]
{\it There exists a unique (up to a phase) unit $U$-invariant vector $\Omega$ 
(vacuum vector) and $\Omega$ is
cyclic for the von Neumann algebra $\vee_{I\in\I}\A(I)$}
\end{description}
The net $\A$ is said to be  \emph{local} if the following property holds:
\begin{description}
\item[$\textnormal{\textsc{5. Locality}}:$]
{\it If $I_1$ and $I_2$ are disjoint intervals, the von Neumann algebras $\A(I_1)$ and $\A(I_2)$ commute:}
\[
\A(I_1)\subset\A(I_2)'
\]
\end{description}
A \emph{local M\"obius covariant net on $\RR$} is the restriction of a local M\"obius covariant net on $S^1$ to $\RR = S^1\setminus \{-1\}$ (identification by the stereographic map).
\smallskip

We say that the \emph{split} property holds for a local M\"obius covariant net $\A$ on $S^1$ if $\A(I_1)\vee\A(I_2)$ is naturally isomorphic with $\A(I_1)\otimes\A(I_2)$ when $I_1$, $I_2$ are intervals with disjoint closures. (If $\A$ is non-local one requires that the inclusion $\A(I_1)\subset \A(I'_2)$ has an intermediate type I factor.) This very general property holds in particular if Tr$(e^{-\b L_0}) <\infty$ for all $\b>0$, where $L_0$ is the conformal Hamiltonian, see \cite{BDL}.

\bigskip

\noindent
{\bf Acknowledgements.} The first named author is grateful to K.-H.
Rehren for comments.

\end{document}